\newcommand{\Var}{\mathrm{Var}}
\newtheorem{theorem}{Theorem}
\newtheorem{definition}{Definition}
\newtheorem{proposition}{Proposition}
\newtheorem{corollary}{Corollary}
\newtheorem{claim}{Claim}
\newtheorem{lemma}{Lemma}
\newtheorem{obs}{Observation}
\definecolor{darkred}{rgb}{1, 0.1, 0.3}
\definecolor{darkgreen}{rgb}{0.15, 0.65, 0.15}
\definecolor{darkblue}{rgb}{0.1, 0.1, 1}
\newcommand{\denselist}{\itemsep 0pt\parsep=1pt\partopsep 0pt}
\newcommand{\myleaves}		{{\mathrm{leaves}}}
\newcommand{\Dcost}			{{cost}}
\newcommand{\TC}			{{totalC}}
\newcommand{\BC}			{{baseC}}
\newcommand{\RC}			{{\rho}}
\newcommand{\myrelation}	{{relation\xspace}}
\newcommand{\myorder}[3]	{{\{{#1}, {#2} | {#3}\} }}
\newcommand{\triplecost}	{{triC}}
\newcommand{\mintricost}	{{minTriC}}
\newcommand{\newcost}		{{ratio-cost}\xspace}
\newcommand{\Newcost}		{{Ratio-cost}\xspace}
\newcommand{\expG}			{{\overline{G}}}
\newcommand{\algBisection}		{{\sf BuildPerfectTree}}
\newcommand{\validBisect}		{{\sf validBipartition}}
\newcommand{\aG}		{{\widehat{G}}}
\newcommand{\aV}			{{\widehat{V}}}
\newcommand{\aE}			{{\widehat{E}}}
\newcommand{\aT}			{{\widehat{T}}}
\newcommand{\av}			{x}
\newcommand{\nhat}		{{\hat{n}}}
\newcommand{\triTone}		{{type-1}\xspace}
\newcommand{\triTwo}		{{type-2}\xspace}
\newcommand{\triThree}		{{type-3}\xspace}
\newcommand{\edgeP}		{{\mathrm{P}}}
\newcommand{\expT}		{{\overline{T}}}
\newcommand{\LCA}			{{\mathrm{LCA}}}
\title{An Improved Cost Function for Hierarchical Cluster Trees}
\author{Dingkang Wang \and Yusu Wang}
\begin{document}
	
	\maketitle
	\setcounter{page}{0}
	
	\begin{abstract}
		Hierarchical clustering has been a popular method in various data analysis applications. 
		It partitions a data set into a hierarchical collection of clusters, and can provide  a global view of (cluster) structure behind data across different granularity levels. 
		A hierarchical clustering (HC) of a data set can be naturally represented by a tree, called a \emph{HC-tree}, where leaves correspond to input data and subtrees rooted at internal nodes correspond to clusters. 
		Many hierarchical clustering algorithms used in practice are developed in a procedure manner. 
		In \cite{Dasgupta_2016}, Dasgupta proposed to study the hierarchical clustering problem from an optimization point of view, and introduced an intuitive cost function for similarity-based hierarchical clustering with nice properties as well as natural approximation algorithms.  There since has been several followup work on better approximation algorithms, hardness analysis, and general understanding of the objective functions. 
		
		We observe that while Dasgupta's cost function is effective at differentiating a good HC-tree from a bad one for a fixed graph, the value of this cost function does not reflect how well an input similarity graph is consistent to a hierarchical structure. 
		In this paper, we present a new cost function, which is developed based on Dasgupta's cost function, to address this issue. 
		The optimal tree under the new cost function remains the same as the one under Dasgupta's cost function. However, the value of our cost function is more meaningful. For example, the optimal cost of a graph $G$ equals $1$ if and only if $G$ has a ``perfect HC-structure'' in the sense that there exists a HC-tree that is consistent with all triplets relations in $G$; and the optimal cost will be larger than $1$ otherwise. 
		The new way of formulating the cost function also leads to a polynomial time algorithm to compute the optimal cluster tree when the input graph has a perfect HC-structure, or an approximation algorithm when the input graph ``almost'' has a perfect HC-structure. 
		Finally, we provide further understanding of the new cost function by studying its behavior for random graphs sampled from an edge probability matrix. 
		
	\end{abstract}
	
	\newpage

	\section{Introduction}
	
	Clustering has been one of the most important and popular data analysis methods in the modern data era, with numerous clustering algorithms proposed in the literature \cite{Literature_Aggarwal}. 
	Theoretical studies on clustering have so far been focused mostly on the \emph{flat clustering} algorithms, e.g, \cite{Literature_Arthur,Literature_Balcan,Literature_Fernandez,Literature_Ghoshdastidar,Literature_Rohe}, which aim to partition the input data set into a set of $k$ (often pre-specified) number of groups, called clusters.  
	However, there are many scenarios where it is more desirable to perform \emph{hierarchical clustering}, which recursively partitions data into a hierarchical collection of clusters. 
	A hierarchical clustering (HC) of a data set can be naturally represented by a tree, called a \emph{HC-tree}, where leafs correspond to input data and subtrees rooted at internal nodes correspond to clusters. 
	Hierarchical clustering can provide a more thorough view of the cluster structure behind input data across all levels of granularity simultaneously, 
	and is sometimes better at revealing the complex structure behind modern data. 
	It has been broadly used in data mining, machine learning and bioinformatic applications; e.g, the studies of phylogenetics. 
	
	Most hierarchical clustering algorithms used in practice are developed in a \emph{procedure manner}: For example, the family of \emph{agglomerative methods} build a HC-tree bottom-up by starting with all data points in individual clusters, and then repeatedly merging them to form bigger clusters at coarser levels. Prominent merging strategies include single-linkage, average-linkage and complete-linkage heuristics. 
	The family of \emph{divisive methods} instead partition the data in a top-down manner, starting with a single cluster, and then recursively dividing it into smaller clusters using strategies based on spectral cut, $k$-means, $k$-center and so on. 
	Many of these algorithms work well in different practical situations, for example, average linkage algorithm is known as Unweighted Pair Group Method with Arithmetic Mean(UPGMA) algorithm \cite{Sneath_1973Book} commonly used in evolutionary biology for phylogenetic inference. However, it is in general not clear what the output HC-tree aims to optimize, and what one should expect to obtain. 
	This lack of optimization understanding of the HC-tree also makes it hard to decide which hierarchical clustering algorithm one should use given a specific type of input data. 
	
	The optimization formulation of the HC-tree was recently tackled by Dasgupta in \cite{Dasgupta_2016}. Specifically, given a similarity graph $G$ (which is a weighted graph with edge weight corresponding to the similarity between nodes), he proposed an intuitive cost function for any HC-tree, and defined an optimal HC-tree for $G$ to be one that minimizes this cost. 
	Dasgupta showed that the optimal tree under this objective function has many nice properties and is indeed desirable. Furthermore, while it is NP-hard to find the optimal tree, he showed that a simple heuristic using an $\alpha_n$-approximation of the sparsest graph cut will lead to an algorithm computing a HC-tree whose cost is an $O(\alpha_n \cdot \log n)$-approximation of the optimal cost. Given that the best approximation factor $\alpha_n$ for the sparsest cut is $O(\sqrt{\log n})$ \cite{Arora_2009Book}, this gives an $O(\log^{3/2} n)$-approximation algorithm for the optimal HC-tree as defined by Dasgupta's cost function. 
	The approximation factor has since been improved in several subsequent work \cite{Charikar_SODA2017,Cohen_SODA2018,Roy_NIPS2016}, and it has been shown independently in \cite{Charikar_SODA2017,Cohen_SODA2018} that one can obtain an $O(\sqrt{\log n})$-approximation.

	\paragraph{Our work.}
	For a fixed graph $G$, the value of Dasgupta's cost function can be used to differentiate ``better'' HC-trees (with smaller cost) from ``worse'' ones (with larger cost), and the HC-tree with the smallest cost is optimal. However, we observe that this cost function, in its current form, does not indicate whether an input graph has a strong hierarchical structure or not, or whether one graph has ``more'' hierarchical structure than another graph. 
	For example, consider a star graph $G_1$ and a path graph $G_2$, both with $n$ nodes and $n-1$ edges, and unit edge weights. 
	It turns out that the cost of the optimal HC-tree for the star $G_1$  is $\Theta(n^2)$, while that for path graph $G_2$ is $\Theta(n\log n)$. 
	However, the star, with all nodes connected to a center node, is intuitively more cluster-like than the path with $n-1$ sequential edges. (In fact, we will show later that a star, or the so-called linked star where there are two stars with their center vertices linked by an edge, see Figure \ref{fig:graph_compare} (a) in Appendix, both have what we call \emph{perfect HC-structure}.)
	Furthermore, consider a dense unit-weight graph with $\Theta(n^2)$ edges, one can show that the optimal cost is always $\Theta(n^3)$, whether the graph exhibit any cluster structure at all. 
	Hence in general, it is not meaningful to compare the optimal HC-tree costs \emph{across} different graphs. 
	
	We propose a modification of Dasgupta's cost function to address this issue and study its properties and algorithms. 
	In particular, by reformulating Dasgupta's cost function, we observe that for a fixed graph, there exists a \emph{base-cost} which reflects the minimum cost one can hope to achieve. Based on this observation, we develop a new cost $\rho_G(T)$ to evaluate  how well a HC-tree $T$ represents an input graph $G$. 
	An optimal HC-tree for $G$ is the one minimizing $\rho_G(T)$.  
	The new cost function has several interesting properties: 
	\begin{itemize}\denselist
		\item[(i)] For any graph $G$, a tree $T$ minimizes $\rho_G(T)$ for a fixed graph $G$ if and only if it minimizes Dasgupta's cost function; thus  the optimal tree under our cost function remains the same as the optimal tree under Dasgupta's cost function. Furthermore, hardness results and the existing approximation algorithm developed in \cite{Charikar_SODA2017} still apply to our cost function. 
		\item[(ii)] For any positively weighted graph $G$ with $n$ vertices,
		the optimal cost $\rho_G^* := min_{T} \rho_G(T)$ is bounded with $\rho_G^* \in [1, n-2]$ (while the optimal cost under Dasgupta's cost function could be made arbitrarily large). 
		The optimal cost 
		$\rho_G^*$ intuitively indicates how much HC-structure the graph $G$ has. 
		In particular, $\rho_G^*= 1$ \emph{if and only if} there exists a HC-tree that is consistent with all triplets relations in $G$ (see Section \ref{sec:newcostfunction} for more precise definitions), in which case, we say that this graph has a \emph{perfect hierarchical-clustering (HC) structure}. 
		\item[(iii)] The new formulation enables us to develop an $O(n^4 \log n)$-time algorithm to test whether an input graph $G$ has a perfect HC-structure (i.e, $\rho_G^*=1$) or not, as well as computing an optimal tree if $\rho_G^*=1$. 
		If an input graph $G$ is what we call the $\delta$-perturbation of a graph $G^*$ with a perfect HC-structure, 
		then in $O(n^3)$ time we can compute a HC-tree $T$ whose cost is a ($\delta^2+1$)-approximation of the optimal one.

	\item[(iv)] Finally, in Section \ref{sec:random_graphs}, we study the behavior of our cost function for a random graph $G$ generated from an edge probability matrix $\edgeP$. 
	Under mild conditions on $\edgeP$, we show that the optimal cost $\rho_G^*$ concentrates on a certain value, which interestingly, is \emph{different} from the optimal $\rho^*$ cost for the expectation-graph (i.e, the graph whose edge weights equal to entries in $\edgeP$). Furthermore, for random graphs sampled from probability matrices, the optimal cost $\rho_G^*$ will decrease if we strengthen in-cluster connections. For instance, the optimal cost of a Erd\H{o}s-R\'enyi random graph with connection probability $p$ is $\Theta(\frac{1}{p})$. In other words, the optimal cost reflects how much HC-structure a random graph has.
	\end{itemize}
	
	In general, we believe that the new formulation and results from our investigation help to reveal insights on hierarchical structure behind graph data.
	We remark that \cite{Cohen_SODA2018} proposed a concept of \emph{ground-truth input (graph)}, which, informally, is a graph consistent with an ultrametric under some monotone transformation of weights. 
	Our concept of graphs with perfect HC-structure is more general than their ground-truth graph (see Theorem \ref{thm:ground_truth_is_HC_perfect}), and for example are much more meaningful for unweighted graphs (Proposition \ref{prop:groundtruth-unweighted}).

	\paragraph{More on related work.} 
	The present study is inspired by the work of Dasgupta \cite{Dasgupta_2016}, as well as the subsequent work in \cite{Charikar_SODA2017,Cohen_SODA2018,Moseley_NIPS2017,Roy_NIPS2016}, especially  \cite{Cohen_SODA2018}. 
	As mentioned earlier, after \cite{Dasgupta_2016}, there have been several independent follow-up work to improve the approximation factor to $O(\log n)$ \cite{Roy_NIPS2016} and then to $O(\sqrt{\log n})$ via more refined analysis of the sparse-cut based algorithm of Dasgupta \cite{Charikar_SODA2017,Cohen_SODA2018}, or via SDP relaxation \cite{Charikar_SODA2017}. 
	It was also shown in \cite{Charikar_SODA2017,Roy_NIPS2016} that it is hard to approximate the optimal cost (under Dasgupta's cost function) within any constant factor, assuming the Small Set Expansion (SSE) hypothesis originally introduced in \cite{Raghavendra_2010}. 
	A dual version of Dasgupta's cost function was considered in \cite{Moseley_NIPS2017}; and an analog of Dasgupta's cost function for dissimilarity graphs (i.e, graph where edge weight represents dissimilarity) was studied in \cite{Cohen_SODA2018}. 
	In both cases,  the formulation leads to a maximization problem, and thus exhibits rather different flavor from an approximation perspective: Indeed, simple constant-factor approximation algorithms are proposed in both cases. 
	
	We remark that the investigation in \cite{Cohen_SODA2018} in fact targets a broader family of cost functions than Dasgupta's cost function (and thus ours as well),

	\section{An Improved Cost Function for HC-trees and Properties}
	\label{sec:newcostfunction}
	
	In this section, we first describe the cost function proposed by Dasgupta \cite{Dasgupta_2016} in Section \ref{subsec:Dasgupta}. We introduce our cost function in Section \ref{subsec:newcost} and present several properties of it in \ref{subsec:properties}. 
	
	\subsection{Problem setup and Dasgupta's cost function}
	\label{subsec:Dasgupta}
	
	Our input is a set of $n$ data points $V = \{v_1, \ldots, v_n\}$ as well as their pairwise similarity, represented as a $n\times n$ weight matrix $W$ with $w_{ij} = W[i][j]$ representing the similarity between points $v_i$ and $v_j$. We assume that $W$ is symmetric and each entry is non-negative. 
	Alternatively, we can assume that the input is a weighted (undirected) graph $G = (V, E, w)$, with the weight for an edge $(v_i, v_j) \in E$ being $w_{ij}$.  
	The two views are equivalent: if the input graph $G$ is not a complete graph, then in the weight matrix view, we simply set $w_{ij} = 0$ if $(v_i, v_j) \notin E$. 
	We use the two views interchangeably in this paper. Finally, we use ``unweighted graph'' $G=(V,E)$ to refer to a graph where each edge $e\in E$ has unit-weight $1$. 
	
	Given a set of data points $V = \{v_1, \ldots, v_n\}$, 
	a \emph{hierarchical clustering tree (HC-tree)} is a rooted tree $T = (V_T, E_T)$ whose leaf set equals $V$. We also say that $T$ is a HC-tree \emph{spanning (its leaf set) $V$}. 
	Given any tree node $u \in V_T$, $T[u]$ represents the subtree rooted at $u$, and $\myleaves(T[u])$ denotes the set of leaves contained in the subtree $T[u]$. 
	Given any two points $v_i, v_j \in V$, 
	$\LCA_T(i,j)$ denotes the lowest common ancestor of leafs $v_i$ and $v_j$ in $T$; the subscript $T$ is often omitted when the choice of $T$ is clear. 
    To simplify presentation, we sometimes use indices of vertices to refer to vertices, e.g, $(i, j)\in E$ means $(v_i, v_j)\in E$. 
	The following cost function to evaluate a HC-tree $T$ w.r.t. a similarity graph $G$ was introduced in \cite{Dasgupta_2016}:
	\[
	\Dcost_G(T) = \sum_{(i, j) \in E} w_{ij} | \myleaves(T[\LCA(i,j)])|. 
	\]
	
	An optimal HC-tree $T^*$ is defined as one that minimizing $\Dcost_G(T)$. 
	Intuitively, to minimize the cost, pairs of nodes with high similarity should be merged (into a single cluster) earlier. 
	It was shown in \cite{Dasgupta_2016} that the optimal tree under this cost function has several nice properties, e.g, behaving as expected for graphs such as disconnected graphs, cliques, and planted partitions. 
	In particular, if $G$ is an unweighted clique, then all trees have the same cost, and thus are optimal -- this is intuitive as no preference should be given to any specific tree shape in this case.

	\subsection{The new cost function}
	\label{subsec:newcost}
	
	\begin{wrapfigure}{r}{0.45\textwidth}
		\centering 
		\includegraphics[width=0.45\textwidth]{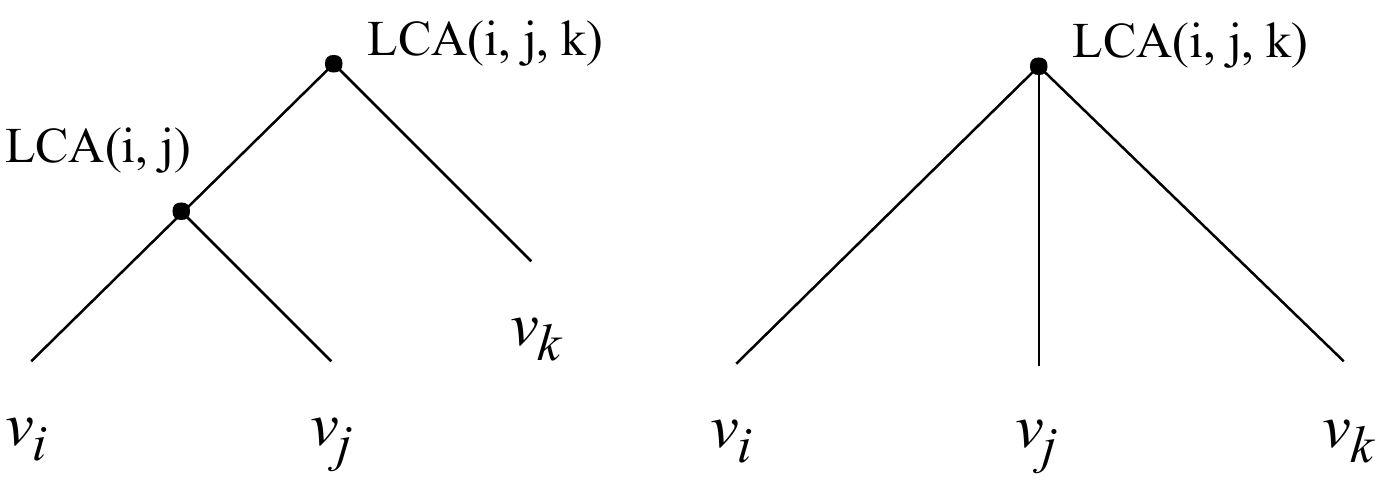}
		\caption{ $\{i, j | k\}$ (left) and $\{i | j | k\}$ (right)}
		\label{fig:merge_order}
	\end{wrapfigure}
	To introduce our new cost function, it is more convenient to take the matrix view where the weight $w_{ij}$ is defined for all pairs of nodes $v_i$ and $v_j$ (as mentioned earlier, if the input is a weighted graph $G=(V,E,w)$, then we set $w_{ij} = 0$ for $(v_i, v_j)\notin E$). 
	First, a triplet $\{i, j, k\}$ means three distinct indices $i\neq j \neq k\in [1,n]$. We say that \emph{\myrelation{} $\{i,j | k\}$ holds in $T$}, if the lowest common ancestor $\LCA(i,j)$ of $v_i$ and $v_j$ is a proper descendant of $\LCA(i, j, k)$. 
	Intuitively, subtrees containing leaves $v_i$ and $v_j$ will merge first, before they merge with a subtree containing $v_k$. 
	We say that \emph{\myrelation{}$\{i | j | k\}$ holds in $T$}, if they are merged at the same time; that is, $\LCA(i,j) = \LCA(j,k) = \LCA(i,j,k)$. See Figure \ref{fig:merge_order} for an illustration.

	\begin{definition}\label{def:total-cost}
		Given any triplet $\{i,j,k\}$ of $[1,n]$, the \emph{cost of this triplet (induced by $T$)} is $$\triplecost_{T, G}(i,j,k) = \begin{cases} w_{ik} + w_{jk} & ~~~~\mbox{if relation}~\myorder{i}{j}{k}~\mbox{holds} \\
		w_{ij}+w_{jk} & ~~~~\mbox{if relation}~\myorder{i}{k}{j}~\mbox{holds}  \\
		w_{ij}+w_{ik} & ~~~~\mbox{if relation}~\myorder{j}{k}{i}~\mbox{holds}  \\
		w_{ij}+w_{jk}+w_{ik} & ~~~~\mbox{if relation}~\{i | j | k\}~\mbox{holds}
		\end{cases}$$ 
		We omit $G$ from the above notation when its choice is clear. The \emph{total-cost of tree $T$ w.r.t. $G$} is
		$$\TC_G(T) = \sum_{i\neq j \neq k \in [1,n]} \triplecost_T(i,j,k).$$
	\end{definition}
	
	The rather simple proof of the following claim is in Appendix \ref{appendix:claim:totalCrelation}. 
	\begin{claim}\label{claim:totalCrelation}
		$\TC_G(T) = \sum_{(i, j) \in E} w_{ij} (| \myleaves(T[\LCA(i,j)])| - 2) = \Dcost_G(T) - 2\sum_{(i,j)\in E} w_{ij}. $
	\end{claim}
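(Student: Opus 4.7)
The second equality is immediate from the definition of $\Dcost_G(T) = \sum_{(i,j)\in E} w_{ij}\,|\myleaves(T[\LCA(i,j)])|$, since subtracting $2\sum_{(i,j)\in E} w_{ij}$ just distributes the $-2$ into the parenthesized factor. So the substance lies entirely in the first equality, and my plan is to establish it by exchanging the order of summation defining $\TC_G(T)$.

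First I would recast Definition \ref{def:total-cost} into a uniform form: for any triplet $\{i,j,k\}$ of distinct indices,
$$\triplecost_T(i,j,k) \;=\; \sum_{\{a,b\}\subseteq\{i,j,k\},\;\LCA(a,b)=\LCA(i,j,k)} w_{ab}.$$
Verifying this is a small case check against the four branches. For instance, if $\myorder{i}{j}{k}$ holds, then $\LCA(i,j)$ is a proper descendant of $\LCA(i,j,k)$ while $\LCA(i,k)=\LCA(j,k)=\LCA(i,j,k)$, so only the pairs $\{i,k\}$ and $\{j,k\}$ survive and the right-hand side reduces to $w_{ik}+w_{jk}$. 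The two symmetric relations are analogous, and for $\{i|j|k\}$ all three pairwise LCAs coincide with $\LCA(i,j,k)$, yielding $w_{ij}+w_{ik}+w_{jk}$.

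Next I would swap the order of summation over unordered triplets: each weight $w_{ab}$ is counted once for every $k \in [1,n]\setminus\{a,b\}$ with $\LCA(a,b,k)=\LCA(a,b)$. The key LCA observation is that $\LCA(a,b,k)=\LCA(a,b)$ if and only if $k$ is a leaf of the subtree $T[\LCA(a,b)]$, since otherwise the three-way LCA is a strict ancestor of $\LCA(a,b)$. Hence the number of valid $k$ equals $|\myleaves(T[\LCA(a,b)])| - 2$, giving
$$\TC_G(T) \;=\; \sum_{\{a,b\}:\,w_{ab}>0} w_{ab}\,\bigl(|\myleaves(T[\LCA(a,b)])| - 2\bigr),$$
and restricting the sum to $(a,b)\in E$ (since $w_{ab}=0$ off the edge set) yields the first equality.

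I do not anticipate a genuine obstacle: the only mildly delicate point is the four-way case analysis that unifies $\triplecost_T$ into a single sum over pairs, and this is essentially bookkeeping around the definition of the triplet relations; the rest is a single application of Fubini together with the elementary LCA identity above.
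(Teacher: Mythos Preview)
Your proposal is correct and follows essentially the same approach as the paper: both arguments fix a pair $\{a,b\}$ and count how many third vertices $k$ cause $w_{ab}$ to be included in the triplet cost, arriving at $|\myleaves(T[\LCA(a,b)])|-2$. Your version is slightly more explicit in first packaging $\triplecost_T$ as a sum over pairs with $\LCA(a,b)=\LCA(i,j,k)$ and then invoking the LCA identity, whereas the paper phrases the same count directly in terms of which of the four relations holds; the substance is identical.
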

	
	The total-cost of any HC-tree $T$ differs from Dasgupta's cost $\Dcost_G(T)$ by a quantity depending only on the input graph $G$. Hence for a fixed graph $G$, it maintains the relative order of the costs of any two trees, implying that the optimal tree under $\TC_G$ or $\Dcost_G$ remains the same. 
	
	While the difference from Dasgupta's cost function seems to be minor, it is easy to see from this formulation that for a fixed graph $G$, there is a least-possible cost that any HC-tree will incur, which we call the base-cost. It is important to note that the following base-cost depends \emph{only} on the input graph $G$. 
	\begin{definition}
		Given a $n$-node graph $G$ associated with similarity matrix $W$, for any distinct triplet $\{i,j,k \} \subset [1,n]$, define its \emph{min-triplet cost} to be
		$$\mintricost_G(i,j,k) = \min \{ w_{ij} + w_{ik}, w_{ij} + w_{jk}, w_{ik} + w_{jk} \}. $$
		The \emph{base-cost of similarity graph $G$} is 
		$$\BC(G) = \sum_{i\neq j\neq k \in [1,n]} \mintricost_G(i,j,k).
		$$
	\end{definition}
	
	To differentiate from Dasgupta's cost function, we call our new cost function the \newcost{}.
	\begin{definition}[\Newcost{} function]
		Given a similarity graph $G$ and a HC-tree $T$, the \emph{\newcost{} of $T$ w.r.t. $G$} is defined as
		$$\RC_G(T) = \frac{\TC_G(T)}{\BC(G)}.$$
		The \emph{optimal tree} for $G$ is a tree $T^*$ such that $\RC_G(T^*) = \min_{T} \RC_G(T)$; and its \newcost{} $\RC_G(T^*)$ is called the \emph{optimal \newcost{} $\RC_G^* = \RC_G(T^*)$}. 
	\end{definition}
	
	\begin{obs}\label{obs:lowerbnd}
		(i) For any HC-tree $T$, $\TC_G(T) \ge \BC(G)$, implying that $\rho_G(T) \ge 1$.  \\
		(ii) A tree optimizing $\RC_G$ also optimizes $\TC_G$ (and thus $\Dcost_G$), as $\BC(G)$ is a constant for a fixed graph. \\
		(iii) There is always an optimal tree for $\RC_G$ that is binary. 
	\end{obs}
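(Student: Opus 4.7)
The plan is to dispatch the three parts of the observation in sequence; none requires any machinery beyond Definition~\ref{def:total-cost} and Claim~\ref{claim:totalCrelation}.

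For part (i) the natural approach is a pointwise comparison of $\triplecost_T(i,j,k)$ with $\mintricost_G(i,j,k)$ for every triplet. I would run through the four cases of Definition~\ref{def:total-cost}: in the three asymmetric cases, $\triplecost_T(i,j,k)$ equals one of the three two-term sums $w_{ij}+w_{ik}$, $w_{ij}+w_{jk}$, or $w_{ik}+w_{jk}$ that appear in the definition of $\mintricost_G(i,j,k)$, and so is trivially at least their minimum. In the balanced case $\{i\,|\,j\,|\,k\}$, the triplet cost is the full sum $w_{ij}+w_{ik}+w_{jk}$, which dominates any two-term sum since weights are non-negative, hence dominates their minimum as well. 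Summing the pointwise inequality over all ordered triplets yields $\TC_G(T)\ge \BC(G)$, and dividing by $\BC(G)$ (assumed positive so the ratio is well-defined) gives $\RC_G(T)\ge 1$.

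Part (ii) I would read off directly from the factorization $\RC_G(T)=\TC_G(T)/\BC(G)$: because $\BC(G)$ depends only on $G$, the $\arg\min$ over HC-trees of $\RC_G$ coincides with that of $\TC_G$. Composing with Claim~\ref{claim:totalCrelation}, which shows $\TC_G$ and $\Dcost_G$ differ by the tree-independent additive constant $2\sum_{(i,j)\in E}w_{ij}$, the same argmin also minimizes $\Dcost_G$.

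For part (iii) I would invoke the standard binarization step. Starting from any optimal tree, if an internal node $u$ has $k\ge 3$ children $c_1,\ldots,c_k$, replace the ``star'' merge at $u$ by an arbitrary binary tree whose leaves are identified with the roots of the original subtrees $T[c_1],\ldots,T[c_k]$. For any pair $(v_i,v_j)$ previously having $\LCA=u$, the new LCA sits weakly below $u$, so $|\myleaves(T[\LCA(i,j)])|$ weakly decreases; pairs whose $\LCA$ was strictly above or strictly below $u$ are untouched. Hence $\Dcost_G$, and by part (ii) also $\RC_G$, does not increase. Iterating over all high-degree internal nodes produces a binary tree that is still optimal. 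I do not anticipate any real obstacle; the only care-points are implicitly requiring $\BC(G)>0$ in part (i), invoking non-negativity of weights in the balanced case, and noting in part (iii) that the binarization preserves the leaf set automatically since it alters only the topology above the children $c_p$.
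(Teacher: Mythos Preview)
Your proposal is correct and matches the paper's approach. The paper itself is extremely terse here: it simply states that (i) and (ii) follow directly from the definitions together with Claim~\ref{claim:totalCrelation}, and that (iii) holds because it holds for Dasgupta's cost $\Dcost_G$ (citing \cite{Dasgupta_2016}, Section~2.3). Your write-up just spells out those same arguments---the pointwise triplet comparison for (i), the constant-denominator observation for (ii), and the standard binarization for (iii)---so there is no substantive difference.
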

	Observation (i) and (ii) above follow directly the definitions of these costs and Claim \ref{claim:totalCrelation}. 
	(iii) holds as it holds for Dasgupta's cost function $\Dcost_G$ (\cite{Dasgupta_2016}, Section 2.3). Hence in the remainder of the paper, we will consider \emph{only} binary trees when talking about optimal trees. 
	
	Note that it is possible that $\BC(G) = 0$ for an non-empty graph $G$. We follow the convention that $\frac{0}{0}= 1$ while $\frac{x}{0} = +\infty$ for any positive number $x>0$. 
	We show in Appendix \ref{app:tc_bc_0} that in this case, there must exist a tree $T$ such that $\TC_G(T) =0$. Thus $\RC_G^* = 1$ for this case. 
	
	\paragraph{Intuition behind the costs.}
	Consider any  triplet $\{i, j, k\}$, and assume w.l.o.g that $w_{ij}$ is the largest among the three pairwise similarities. 
	If there exists a ``perfect'' HC-tree $T$, then it should first merge $v_i$ and $v_j$ as they are most similar, before merging them with $v_k$. That is, the \myrelation{} for this triplet in the HC-tree should be $\myorder{i}{j}{k}$; and we say that this \myrelation{} $\myorder{i}{j}{k}$ (and the tree $T$) is \emph{consistent} with (similarties of) this triplet. 
	The cost of the triplet $\triplecost_T(i,j,k)$ is designed to reward this ``perfect'' \myrelation: $\triplecost_T(i,j,k)$ is minimized, in which case $\triplecost_T(i,j,k) = \mintricost(i,j,k)$, only when $\myorder{i}{j}{k}$ holds in $T$. 
	In other words, $\mintricost(i,j,k)$ is the smallest cost possible for this triplet, and a HC-tree $T$ can achieve this cost only when the \myrelation{} of this triplet in $T$ is consistent with their similarities. 
	
	If there is a HC-tree $T$ such that for all triplets, their \myrelation{}s in $T$ are consistent with their similarities, then $\TC_G(T) = \BC(G)$, implying that $T$ must be optimal as $\rho_G^* = 1$. 
	Similarly, if $\rho_G^* = 1$ for a graph $G$, then the optimal tree $T^*$ has to be consistent with all triplets from $V$. Intuitively, this graph $G$ has a perfect HC-structure, in the sense that there is a HC-tree such that the desired merging order for all triplets are preserved in this tree. 
	\begin{definition}[Graph with perfect HC-structure]\label{def:perfectHC}
		A similarity graph $G$ has \emph{perfect HC-structure} if $\rho_G^* = 1$. Equivalently, there exists a HC-tree $T$ such that $\TC_G(T) = \BC(G)$. 
	\end{definition}
	
	\paragraph{Examples.}
	The base-cost is independent of tree $T$ and can be computed easily for a graph. If we find a tree $T$ with $\TC_G(T) = \BC(G)$, then $T$ must be optimal and $G$ has perfect HC-structure. 
	
	With this in mind, it is now easy to see that for a clique $G$ (with unit weight), for any HC-tree $T$ and any triplet $\{i,j,k\}$, $\triplecost_T(i,j,k) = 2 = \mintricost(i,j,k)$. 
	Hence $\TC_G(T) = \BC(G) = 2\cdot \binom{n}{3}$ for any HC-tree $T$, and thus the clique has perfect HC-structure. 
	A complete graph whose edge weights equal to entries of the edge probabilities of a planted partition also has a perfect HC-structure. 
	It is also easy to check that the $n$-node star graph $G_1$ (or two linked stars) has  perfect HC-structure, while for the $n$-node path $G_2$, $\RC^*_{G_2} = \Theta(\log n)$. Intuitively, a path does not process much hierarchical structure. See Appendix \ref{appendix:examples:costs} for details. 
	We also refer the readers to see more results and discussions on Erd\"{o}s R\'{e}nyi random graph and planted bipartiton (also called planted bisection) random graphs in Section \ref{sec:random_graphs}. In particular, as we describe in the discussion after Corollary \ref{coro:2}, the \newcost{} function exhibits an interesting, yet natural, behavior as the in-cluster and between-cluster probabilities for a planted bisection model change. 
	
	\vspace*{0.08in}\noindent{\emph{\underline{Remark:}}}
	Note that in general, the value of Dasgupta's cost $\Dcost_G(T)$ is affected by the (edge) density of a graph. One may think that we could normalize $\Dcost_G(T)$ by the number of edges in the graph (or by total edge weights). However, the examples of the star and path show that this strategy itself is not sufficient to reveal the cluster structure behind an input graph, as those two graphs have equal number of edges. 
	
	\subsection{Some properties}
	\label{subsec:properties}
	
	For an unweighted graph, optimal cost under Dasgupta's cost function is bounded by $O(n^3)$. But this cost can be made arbitrarily large for a weighted graph. 
	For our \newcost{} function, it turns out that the optimal cost is always bounded for both unweighted and weighted graphs. Proof of the following result can be found in Appendix \ref{appendix:thm:rhoranges}. 
	For an unweighted graph, we can in fact obtain an upper bound that is asymptotically the same as the one in (ii) below using a much simpler argument (than the one in our current proof). 
	However, the stated upper bound ($\frac{n^2-2n}{2m-n}$) is tight in the sense that it equals `1' for a clique, matching the fact that $\rho_G^* = 1$ for a clique.  
	\begin{theorem} \label{thm:rhoranges}
		(i) Given a similarity graph $G= (V, E, w)$ with $w$ being symmetric, having non-negative entries, we have that 
		$\rho_G^* \in [1, n-2]$ where $n = |V| \ge 3$. \\
		(ii) For a connected unweighted graph $G = (V, E)$ with $n = |V|$ and $m = |E|$, we have that $\rho_G^* \in [1, \frac{n^2-2n}{2m-n}]$.
	\end{theorem}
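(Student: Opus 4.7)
The plan is to prove the lower and upper bounds separately in each part. The lower bound $\rho_G^* \geq 1$ in both (i) and (ii) is immediate from Observation~\ref{obs:lowerbnd}(i), which gives $\TC_G(T) \geq \BC(G)$ for every HC-tree $T$; so all remaining work goes into the two upper bounds.

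For part (i), the plan is to exhibit a binary HC-tree $T$ with $\TC_G(T) \leq (n-2)\,\BC(G)$. My starting point is the identity, valid for any binary $T$,
$$\TC_G(T) \;=\; (n-2)\sum_{e \in E} w_e \;-\; \sum_{\{i,j,k\}} \beta_T(i,j,k),$$
where $\beta_T(i,j,k)$ denotes the weight of the pair in $\{v_i,v_j,v_k\}$ that merges first under $T$, together with the parallel identity $\BC(G) = (n-2)\sum_e w_e - \sum_{\{i,j,k\}} \alpha(i,j,k)$ with $\alpha(i,j,k) = \max(w_{ij},w_{ik},w_{jk})$. Subtracting gives $\TC_G(T) - \BC(G) = \sum_{\{i,j,k\}}(\alpha-\beta_T) \geq 0$, so the goal reduces to constructing $T$ for which the triplet-wise excess $\sum(\alpha-\beta_T)$ is at most $(n-3)\,\BC(G)$. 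I would take $T$ to be built by a single-linkage/Kruskal-style greedy process --- sort edges by decreasing weight and iteratively merge the clusters containing their endpoints --- and then deploy a charging argument in which each triplet with $\beta_T < \alpha$ is paid for by base-cost contributions from triplets sharing one of its edges.

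For part (ii), the sharper unweighted bound $\tfrac{n^2-2n}{2m-n}$, the plan combines two ingredients. First, the universal tree bound $\TC_G(T) \leq (n-2)m$, which follows from $|\myleaves(T[\LCA(i,j)])| - 2 \leq n-2$. Second, a lower bound on $\BC(G)$ from the degree sequence: in the unweighted setting, $\mintricost(i,j,k) = \max(0, t-1)$ where $t \in \{0,1,2,3\}$ is the number of $G$-edges inside $\{v_i,v_j,v_k\}$, and summing over triplets gives $\BC(G) = T_2 + 2T_3 = \sum_v \binom{d_v}{2} - T_3$, with $T_2$, $T_3$ the counts of triplets with exactly $2$ edges and with $3$ edges respectively. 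Cauchy--Schwarz applied to $\sum_v d_v = 2m$ then yields $\sum_v \binom{d_v}{2} \geq m(2m-n)/n$, which delivers $\rho_G^* \leq \tfrac{n(n-2)}{2m-n}$ in the regime where $T_3$ is small. In the triangle-heavy regime (needed to recover $\rho_G^* = 1$ at the clique), I would switch to the alternative tree bound $\TC_G(T) \leq \TC_{K_n}(T) = \tfrac{n(n-1)(n-2)}{3}$, valid because $G \subseteq K_n$ on the same leaf set and $\TC_{K_n}(T)$ is known to be independent of $T$, and pair it with the observation that triangle-dense regions force $\BC(G)$ to be large enough to absorb this cost.

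The main obstacle is expected to be reconciling these two regimes in part (ii): the degree-based lower bound on $\BC(G)$ is eroded by $T_3$, while the trivial tree bound $(n-2)m$ is loose for dense graphs, yet the stated bound $\tfrac{n^2-2n}{2m-n}$ equals exactly $1$ at $G=K_n$ and grows like $n$ for sparse connected graphs. A clean unified argument --- rather than a case split between triangle-dense and triangle-sparse regimes --- is what makes the proof substantively more delicate than the asymptotically-equivalent simpler argument alluded to in the remark preceding the theorem statement.
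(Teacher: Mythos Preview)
Your part (i) reformulation via $\TC_G(T) = (n-2)\sum_e w_e - \sum_{\{i,j,k\}}\beta_T(i,j,k)$ and $\BC(G) = (n-2)\sum_e w_e - \sum_{\{i,j,k\}}\alpha(i,j,k)$ is correct and illuminating, but you stop short of an actual proof: ``deploy a charging argument'' is not one, and it is not obvious that single-linkage delivers the needed inequality $\sum(\alpha-\beta_T)\le (n-3)\BC(G)$. The paper takes a different route, namely induction on $n$: pick the globally heaviest edge $(v_s,v_{s+1})$, merge those two leaves first, attach an optimal tree for $G[V\setminus\{v_s,v_{s+1}\}]$ on the other side of the root, and invoke the induction hypothesis on the $(n-2)$-vertex subgraph. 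The crucial observation is that since $w_{s,s+1}$ is maximal, every triplet $\{i,s,s+1\}$ has $\mintricost(i,s,s+1)=w_{is}+w_{i,s+1}$, which is exactly what this tree charges to such triplets; the inductive bookkeeping then goes through with a couple of lines. Whether your greedy-plus-charging idea can be completed is unclear without the details.

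For part (ii) you have the right ingredients but are missing the unifying trick, and you admit as much in your final paragraph. The paper avoids any triangle-sparse/triangle-dense case split by obtaining matching factors of $\tfrac{2}{3}$ on both sides. On the base-cost side, instead of writing $\BC(G)=\sum_v\binom{d_v}{2}-T_3$ (which forces you to control $T_3$), one uses
\[
\BC(G)=\#\text{wedges}+2\,\#\text{triangles}\;\ge\;\tfrac{2}{3}\bigl(\#\text{wedges}+3\,\#\text{triangles}\bigr)=\tfrac{2}{3}\sum_v\binom{d_v}{2}\;\ge\;\tfrac{2m(2m-n)}{3n},
\]
which holds unconditionally. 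On the total-cost side, instead of the trivial bound $(n-2)m$ or the clique bound $\TC_{K_n}(T)$, the paper averages $\TC_G(T_\sigma)$ over all $n!$ leaf-labelings of a fixed tree shape; since each ordered leaf-pair sees an edge with probability $\tfrac{2m}{n(n-1)}$, this average equals $\tfrac{2m}{n(n-1)}\cdot\TC_{K_n}(T)=\tfrac{2m(n-2)}{3}$, and the optimum is at most the average. The two factors of $\tfrac{2}{3}$ cancel, yielding exactly $\tfrac{n^2-2n}{2m-n}$ with no case analysis. Your proposed split between regimes may be salvageable, but as written it is a plan rather than a proof, and the averaging argument is both simpler and what actually closes the gap.
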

	We now show that the bound in (i) above is asymptotically tight. To prove that, we will use a certain family of edge expanders. 
	\begin{definition}	
		Given $\alpha > 0$ and integer $k$, a graph $G(V, E)$ is an
		\emph{$(\alpha, k)$-edge expander} if for every $S \subseteq V$ with $|S| \leq k$, the set of crossing edges from $S$ to $V \backslash S$,
		denoted as $E(S, \bar{S})$, has cardinality at least $\alpha|S|$.
	\end{definition}
	A (undirected) graph is \emph{$d$-regular} if all nodes have degree $d$. 
	\begin{theorem} \cite{Arora_2009Book} \label{thm:existence_edge_expanders}
		For any natural number $d \geq 4$, and sufficiently large $n$, there exist $d$-regular graphs
		on $n$ vertices which are also $(\frac{d}{10}, \frac{n}{2})$-edge expanders.
	\end{theorem}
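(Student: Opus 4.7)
The plan is to prove this via the probabilistic method, which is the standard route for establishing the existence of expanders with specified parameters. I would sample a random $d$-regular graph on $n$ labeled vertices from a suitable model and show that with positive probability it satisfies the $(\frac{d}{10}, \frac{n}{2})$-expansion condition; existence then follows immediately.

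First, I would fix a convenient distribution on $d$-regular graphs. The most tractable for edge-counting arguments is the \emph{configuration model}: attach $d$ half-edges to each vertex and pair the resulting $dn$ half-edges uniformly at random, then condition on the pairing yielding a simple graph (which happens with probability bounded away from $0$ for fixed $d$). An equivalent alternative for even $n$ is to take the union of $d$ independent uniformly random perfect matchings. In either model, for a fixed $S\subseteq V$ with $|S|=k \le n/2$, the expected number of edges from $S$ to $\bar S$ is approximately $\frac{dk(n-k)}{n-1}\ge dk/2$, which is already a factor of $5$ above the threshold $dk/10$ we are required to beat.

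Next, for a fixed $S$ of size $k$, I would upper-bound the probability that fewer than $dk/10$ edges leave $S$. In the configuration model this reduces to counting pairings in which at most $dk/10$ of the $dk$ half-edges rooted in $S$ get matched outside $S$; applying the standard double-factorial estimates (or equivalently a Chernoff-type tail bound for the hypergeometric-like crossing count) yields a bound of the form $\bigl(\frac{ck}{n}\bigr)^{\alpha dk}$ for some explicit constant $\alpha>0$ determined by the gap between the mean fraction $\tfrac12$ and the threshold $\tfrac{1}{10}$. I would then apply the union bound over all $\binom{n}{k} \le \bigl(\frac{en}{k}\bigr)^{k}$ subsets of size $k$ and finally sum over $k=1,\dots,n/2$, yielding an expression of the form
\[
\sum_{k=1}^{n/2} \left(\tfrac{en}{k}\right)^{k}\!\left(\tfrac{ck}{n}\right)^{\alpha d k}
= \sum_{k=1}^{n/2} \left(\tfrac{n}{k}\right)^{k(1-\alpha d)} \cdot (\text{constants})^{k}.
\]
For $d\ge 4$ and the constant $\alpha$ afforded by the large gap between $\tfrac12$ and $\tfrac{1}{10}$, the exponent $1-\alpha d$ is strictly negative, so each term decays geometrically in $k$ and the total failure probability is strictly less than $1$ for all sufficiently large $n$. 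Hence some realization is a $(\frac{d}{10},\frac{n}{2})$-edge expander.

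The main obstacle will be pinning down the constants: one must verify that the Chernoff-style deviation exponent genuinely dominates the entropy term $\log\binom{n}{k}$ uniformly over $k\le n/2$, in particular in the delicate small-$k$ regime where $\binom{n}{k}$ is comparatively large, and one must also argue that conditioning on simplicity does not blow up the failure probability by more than a constant factor. This is where the choice of model (configuration versus union-of-matchings) matters and where a sharp concentration inequality, rather than a crude Markov-type bound, is essential. Since the theorem is quoted directly from \cite{Arora_2009Book}, I would ultimately defer the detailed constant-chasing to that reference, but the probabilistic-method outline above is the route I would follow to reconstruct the proof.
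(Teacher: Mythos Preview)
The paper does not prove this theorem at all: it is stated with a direct citation to \cite{Arora_2009Book} and invoked as a black box in the proof of Theorem~\ref{thm:tightbnds}. Your probabilistic-method outline is the standard argument for such existence results and is almost certainly the route taken in the cited reference, so there is nothing to compare against in the paper itself; your proposal is appropriate and you correctly anticipated that the detailed constant-chasing would be deferred to the source.
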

	
	\begin{theorem}\label{thm:tightbnds}
		A $d$-regular $(\frac{d}{10}, \frac{n}{2})$-edge expander graph $G$ satisfies that $\rho_G^* = \Theta(n)$. 
	\end{theorem}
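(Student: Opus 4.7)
Proof proposal. The upper bound $\rho_G^* = O(n)$ is already furnished by Theorem \ref{thm:rhoranges}(i) (or by (ii) since $m = nd/2$), so the task reduces to establishing the matching lower bound $\rho_G^* = \Omega(n)$, treating $d$ as a fixed constant. I would bound the numerator $\TC_G(T)$ from below and the denominator $\BC(G)$ from above independently, for an arbitrary binary HC-tree $T$ (binary suffices by Observation \ref{obs:lowerbnd}(iii)).

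For the base cost, since $G$ is unweighted and $d$-regular, $\mintricost_G(i,j,k) \in \{0,1,2\}$ depending only on whether the triple $\{i,j,k\}$ spans at most one, exactly two, or three edges. Using the identity $3T + P_2 = \sum_v \binom{d_v}{2} = n\binom{d}{2}$, where $T$ counts triangles and $P_2$ counts triples with exactly two edges, the unordered base cost is $2T + P_2 \le n\binom{d}{2}$, which gives $\BC(G) = O(n d^2)$ after the factor of $6$ from the ordered sum in Definition \ref{def:total-cost}.

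For the lower bound on $\TC_G(T)$, I rewrite Dasgupta's cost as $\Dcost_G(T) = \sum_{v} |\myleaves(T[v])| \cdot |E(A_v, B_v)|$, where $v$ ranges over internal nodes of $T$ and $A_v, B_v$ are the leaf sets of its two children. I trace a root-to-leaf \emph{heavy path} in $T$: at each internal node, descend into the child with the larger leaf set. Let the leaf counts along this path be $n = n_0 \ge n_1 \ge \cdots \ge n_L = 1$. By the heavy-child choice, the light side at step $i$ has size $n_i - n_{i+1} \le n_i/2 \le n/2$, so the $(d/10, n/2)$-expander property applies and guarantees at least $(d/10)(n_i - n_{i+1})$ crossing edges at this level, each contributing $n_i$ to $\Dcost_G(T)$. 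Using $n_i \ge (n_i + n_{i+1})/2$, the sum telescopes:
\[
\Dcost_G(T) \ge \frac{d}{10}\sum_{i=0}^{L-1} n_i (n_i - n_{i+1}) \ge \frac{d}{20}\sum_{i=0}^{L-1} (n_i^2 - n_{i+1}^2) = \frac{d}{20}(n^2 - 1).
\]
Since $2\sum_{(i,j)\in E} w_{ij} = nd$, Claim \ref{claim:totalCrelation} then yields $\TC_G(T) \ge \frac{d}{20}(n^2 - 1) - nd = \Omega(dn^2)$.

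Combining the two estimates, $\rho_G(T) = \TC_G(T)/\BC(G) \ge \Omega(dn^2)/O(nd^2) = \Omega(n/d) = \Omega(n)$, since $d$ is a fixed constant. Taking the minimum over $T$ gives $\rho_G^* = \Omega(n)$, completing the proof together with the upper bound. The main obstacle I anticipate is setting up the heavy-path argument so that the expander inequality is legitimately applicable at every level of the recursion; this is precisely why descending into the heavier child is crucial, as it forces the light side at every step to have size at most $n/2$, matching the range in which the $(d/10, n/2)$-expander bound is guaranteed.
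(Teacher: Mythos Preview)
Your heavy-path argument is a cleaner, more self-contained route than the paper's. The paper instead invokes an external approximation theorem (from \cite{Cohen_SODA2018}) stating that recursive $\beta$-approximate sparsest cuts yield an $O(\beta)$-approximation to Dasgupta's optimum; it then observes that on such an expander any bisection is a $20$-approximate sparsest cut, builds one specific tree, lower-bounds its cost by the root split, and transfers this to $T^*$ via the approximation factor. Your argument avoids that dependency by lower-bounding $\Dcost_G(T)$ directly for every binary $T$.

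There is, however, a genuine slip in the heavy-path step. At level $i$ you apply the expander bound to the light side $B_i$ (of size $n_i-n_{i+1}\le n/2$), which yields $|E(B_i,\,V\setminus B_i)|\ge (d/10)|B_i|$. But the edges that are cut \emph{at level $i$} and contribute $n_i$ are only those in $E(B_i, A_{v_i})$, where $A_{v_i}=\myleaves(T[v_{i+1}])$ is a proper subset of $V\setminus B_i$ once $i\ge 1$; the remaining edges from $B_i$ go to earlier light pieces $B_j$ with $j<i$ and were already accounted for at those levels. So the inequality $|E(A_{v_i},B_i)|\ge (d/10)(n_i-n_{i+1})$ is not what the expander property gives you.

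The repair is short and costs only a factor of $2$. Write $x_i=|E(B_i,A_{v_i})|$ and $y_i=|E(B_i,\bigcup_{j<i}B_j)|$, so $x_i+y_i\ge (d/10)|B_i|$. The heavy-path contribution is $C=\sum_i n_i x_i$. Since an edge between $B_j$ and $B_i$ with $j<i$ appears in $y_i$ but is counted in $C$ with weight $n_j\ge n_i$, one gets $\sum_i n_i y_i\le \sum_i n_i x_i=C$. Hence $2C\ge \sum_i n_i(x_i+y_i)\ge (d/10)\sum_i n_i(n_i-n_{i+1})$, and your telescoping finishes with $\Dcost_G(T)\ge (d/40)(n^2-1)$, still $\Omega(dn^2)$.

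Two minor notes: the paper's $\BC(G)$ is a sum over \emph{unordered} triples (as the clique example and the paper's own bound $\BC(G)=\#\text{wedges}+2\#\text{triangles}$ confirm), so your ``factor of $6$'' is unnecessary, though harmless for the $O(nd^2)$ conclusion. And both your argument and the paper's yield only $\rho_G^*=\Omega(n/d)$; the $\Theta(n)$ statement tacitly treats $d$ as a constant, which you noted.
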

	\begin{proof}
		The existence of a $d$-regular $(\frac{d}{10}, \frac{n}{2})$ edge expander is guaranteed by Theorem \ref{thm:existence_edge_expanders}.  
		Given a cut $(S, V\setminus S)$, the sparsity of it is $\phi(S) = \frac{|E(S, V\setminus S)|}{|S|\cdot |V\setminus S|}$. 
		Consider the sparsest-cut $(S^*, V\setminus S^*)$ for $G$ with minimum sparsity. Assume w.l.o.g that $|S^*| \le n/2$. Its sparsity satisfies: 
		\begin{align}
		\phi^* &= \frac{|E(S^*, V\setminus S^*)|}{|S^*| \cdot |V\setminus S^*|} \ge \frac{\frac{d}{10}\cdot |S^*|}{|S^*| \cdot |V\setminus S^*|} = \frac{d}{10 |V\setminus S^*|} \ge \frac{d}{10n}. \label{eqn:sparsecut}
		\end{align}
		The first inequality above holds as $G$ is a $(\frac{d}{10}, \frac{n}{2})$-edge expander. 
		
		On the other hand, consider an arbitrary bisection cut $(\widehat{S}, V\setminus \widehat{S})$ with $|\widehat{S}| = n/2$. Its sparsity satisfies: 
		\begin{align}
		\phi(\widehat{S}) &= \frac{|E(\widehat{S}, V\setminus \widehat{S})|}{|\widehat{S}| \cdot |V\setminus \widehat{S}|} = \frac{E(\widehat{S}, V\setminus \widehat{S})}{\frac{n}{2}\cdot \frac{n}{2}} \le \frac{d\cdot \frac{n}{2}}{n^2/4} = \frac{2d}{n}. \label{eqn:bisection}
		\end{align}
		Combining (Eqn. \ref{eqn:sparsecut}) and (Eqn. \ref{eqn:bisection}), we have that the bisection cut $(\widehat{S}, V\setminus \widehat{S})$ is a $20$-approximation for the sparest cut. 
		On the other hand, 
		it is shown in \cite{Cohen_SODA2018} that a recursive $\beta$-approximation sparest-cut algorithm will lead to a HC-tree $T$ whose cost $\Dcost_G(T)$ is a $c\cdot \beta$-approximation for $\Dcost_G(T^*)$ where $T^*$ is the optimal tree. 
		Now consider the HC-tree $T$ resulted from first bisecting the input graph via $(\widehat{S}, V\setminus \widehat{S})$, then identifying sparest-cut for each subtree recursively. 
		By \cite{Cohen_SODA2018}, $\Dcost_G(T) \le c\cdot 20 \cdot \Dcost_G(T^*)$ for some constant $c$. 
		Furthermore, $\Dcost_G(T)$ is at least the cost induced by those edges split at the top level; that is, 
		\begin{align}
		\Dcost_G(T) &= \sum_{(i,j)\in E} |\myleaves(T[\LCA(i, j)])| \ge \sum_{(i,j) \in E(\widehat{S}, V\setminus \widehat{S})} |\myleaves(T[\LCA(i, j)])| \nonumber \\
		&= \sum_{(i,j) \in E(\widehat{S}, V\setminus \widehat{S})} n = n \cdot |E(\widehat{S}, V\setminus \widehat{S})| \ge n \cdot \frac{nd}{20} = \frac{dn^2}{20}, 
		\end{align}
		where the last inequality holds as $G$ is a $(\frac{d}{10}, \frac{n}{2})$-edge expander. Hence 
		$$
		\Dcost_G(T^*) \ge \frac{1}{20c} \Dcost_G(T) \ge \frac{dn^2}{400c}, 
		$$
		with $c$ being a constant. 
		By Claim \ref{claim:totalCrelation}, we have that 
		\begin{align}
		\TC_G(T^*) = \Dcost_G(T^*) - 2|E| \ge \frac{dn^2}{400c} - dn = \Omega(dn^2). \label{eqn:totalexpander}
		\end{align}
		
		Now call a triplet \emph{a wedge} (resp. {\it a triangle} if there are exactly two (resp. exactly three) edges among the three nodes. Only wedges and triangles contribute to $\BC(G)$. Thus: 
		\begin{align*}
		\BC(G) &= \# wedges + 2 \# triangles \le \sum_{i\in [1,n]} \binom{d}{2} = O(d^2 n)  \\
		\Rightarrow~~~~~ \rho_G^* &= \frac{\TC_G(T^*)}{\BC(G)} = \Omega(\frac{dn^2}{d^2 n}) = \Omega(\frac{n}{d}). 
		\end{align*}
		Combining the above with Theorem \ref{thm:rhoranges}, the claim then follows. 
	\end{proof}

	\paragraph{Relation to the ``ground-truth input'' graphs of \cite{Cohen_SODA2018}.}
	Cohen-Addad et al. introduced what they call the ``ground-truth input'' graphs to describe inputs that admit a ``natural'' ground-truth cluster tree \cite{Cohen_SODA2018}. 
	A brief review of this concept is given in Appendix \ref{appendix:groundtruthinput}. 
	Interestingly, we show that a ground-truth input graph always has a perfect HC-structure. However, the converse is not necessarily true and the family of graphs with perfect HC-structure is broader while still being natural. 
	Proof of the following theorem can be found in Appendix \ref{appendix:groundtruthinput}. 
	
	\begin{theorem} \label{thm:ground_truth_is_HC_perfect}
		Given a similarity graph $G = (V, E, w)$, if $G$ is a ground-truth input graph of \cite{Cohen_SODA2018}, then $G$ must have perfect HC-structure. However, the converse is not true. 
	\end{theorem}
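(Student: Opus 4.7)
The plan has two parts, corresponding to the two directions of the claim.

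For the forward direction, I would first recall that a ground-truth input in the sense of \cite{Cohen_SODA2018} comes with an underlying rooted tree $T$ (not necessarily binary) on the leaf set $V$ whose pairwise ``heights'' $h(i,j)=\mathrm{height}(\LCA_T(i,j))$ form an ultrametric, and the edge weight $w_{ij}$ is obtained from $h(i,j)$ via some monotone decreasing transformation. Pick an arbitrary binary refinement $T'$ of $T$ (replace each multi-way internal node by any binary expansion into new internal nodes placed just below the original). The key fact to exploit is the ultrametric property: for any triplet $\{i,j,k\}$, two of $h(i,j), h(i,k), h(j,k)$ are equal and the third is at most their common value, and hence two of $w_{ij}, w_{ik}, w_{jk}$ are equal while the third is at least their common value. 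I would then split into cases according to how $T'$ places the triplet:
\begin{itemize}\denselist
\item If two leaves, say $i,j$, already had a strictly lower LCA in $T$ than $k$, then $h(i,k)=h(j,k)>h(i,j)$, hence $w_{ik}=w_{jk}\le w_{ij}$, and the relation $\{i,j\mid k\}$ in $T'$ yields $\triplecost_{T'}(i,j,k)=w_{ik}+w_{jk}=\mintricost(i,j,k)$.
\item If $i,j,k$ all had the same LCA in $T$, then all three weights are equal, the binary refinement imposes some order, say $\{i,j\mid k\}$, and the triplet cost is $w_{ik}+w_{jk}$, which equals $\mintricost(i,j,k)=2w_{ij}$ since all three weights agree.
\end{itemize}
In every case, $\triplecost_{T'}(i,j,k)=\mintricost_G(i,j,k)$, so by Definitions of $\TC_G$ and $\BC(G)$ we obtain $\TC_G(T')=\BC(G)$, i.e. $\rho_G^*=1$, and $G$ has perfect HC-structure by Definition~\ref{def:perfectHC}.

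For the converse, my plan is to exhibit a single natural counterexample. I would take the $n$-node unweighted star $G_1$ with center $c$ and leaves $\ell_1,\dots,\ell_{n-1}$ (weights $w_{c\ell_i}=1$ and $w_{\ell_i\ell_j}=0$). The paper has already indicated that $G_1$ has perfect HC-structure; for self-containedness I would verify it briefly by exhibiting the caterpillar tree in which $\ell_1,\ell_2,\dots,\ell_{n-1}$ split off from the root one by one and $c$ sits at the bottom, and checking that every triplet $\{c,\ell_i,\ell_j\}$ realises its min-triplet cost $1$ while every triplet $\{\ell_i,\ell_j,\ell_k\}$ has cost $0$ regardless. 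Then I would argue that $G_1$ is not a ground-truth input: if it were, an ultrametric $d$ on $V$ together with a monotone decreasing $f$ with $w_{ij}=f(d(i,j))$ would satisfy $d(c,\ell_i)=d(c,\ell_j)$ (both weights equal $1$) and $d(\ell_i,\ell_j)>d(c,\ell_i)$ (weight $0<1$ under monotone decreasing $f$); but then among $d(c,\ell_i),d(c,\ell_j),d(\ell_i,\ell_j)$ the two largest are not equal, contradicting the ultrametric property.

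The main conceptual obstacle is the first direction's case where three leaves share an LCA in $T$: one has to notice that an arbitrary binary refinement still achieves the min-triplet cost precisely because ultrametricity forces all three relevant weights to coincide there. The remaining work is routine bookkeeping across the summation defining $\TC_G$ and $\BC(G)$.
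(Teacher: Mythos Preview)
Your proposal is correct and follows essentially the same strategy as the paper. Two small remarks:

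For the forward direction, the paper works directly with the \emph{generating tree} of \cite{Cohen_SODA2018}, which is binary by definition, so no refinement step is needed; your binary-refinement case analysis is a harmless extra but unnecessary once you invoke that formulation.

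For the converse, the paper uses the $4$-vertex unweighted graph of Figure~\ref{fig:linkage_not_working} rather than the star, and dispatches ``not a ground-truth input'' by appealing to Proposition~\ref{prop:groundtruth-unweighted} (a connected unweighted ground-truth graph must be a clique). Your star example works equally well by the same proposition. One nit: in \cite{Cohen_SODA2018} the map $f$ is only \emph{non-increasing}, so your claim $d(c,\ell_i)=d(c,\ell_j)$ from $f(d(c,\ell_i))=f(d(c,\ell_j))=1$ is not justified; however you do not need it, since $f(d(\ell_i,\ell_j))=0<1=f(d(c,\ell_i))$ already forces $d(\ell_i,\ell_j)>d(c,\ell_i)$ and symmetrically $d(\ell_i,\ell_j)>d(c,\ell_j)$, which by itself violates the ultrametric inequality.
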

	
	\begin{wrapfigure}{r}{0.45\textwidth}
		\vspace*{-0.1in}
		\centering 
		\includegraphics[width=0.45\textwidth]{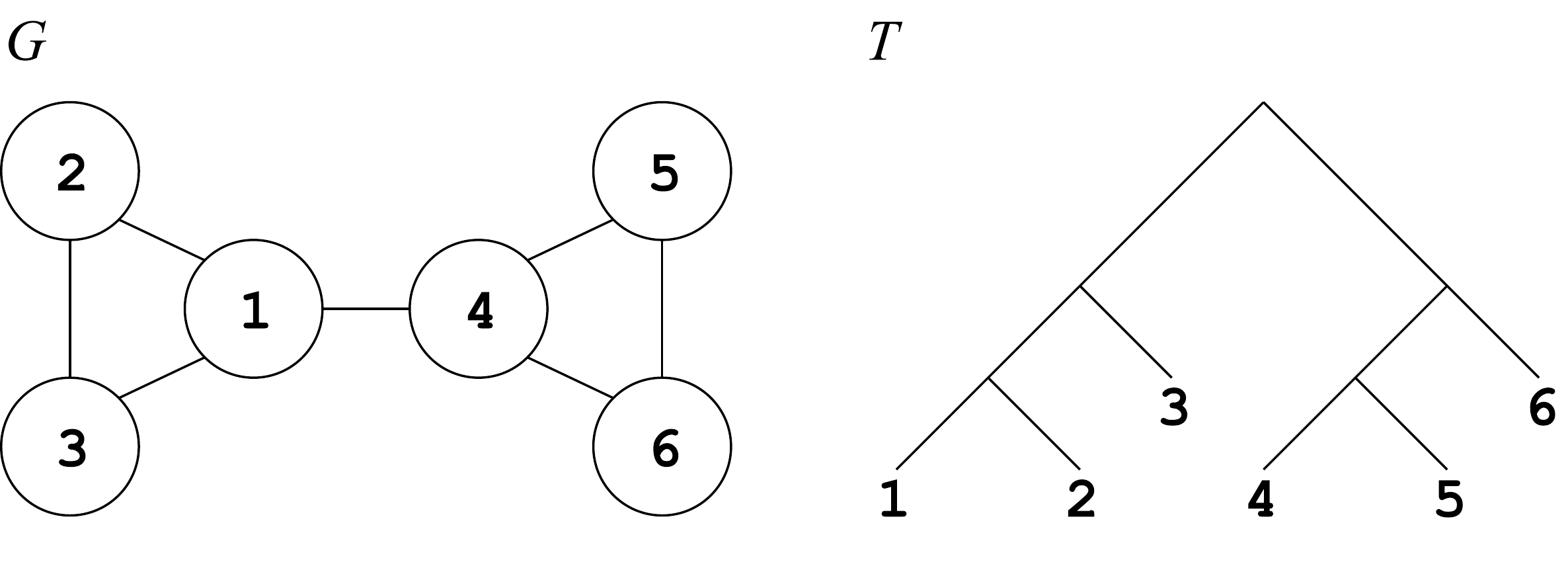}
		\vspace*{-0.2in}
		\caption{A unweighted graph $G$ with a perfect HC-structure; an optimal HC-tree $T$ is shown on the Right.
			This graph is not a ground-truth input of \cite{Cohen_SODA2018}.} 
		\label{fig:linkage_not_working}
	\end{wrapfigure}
	\noindent Intuitively, a graph $G$ has a perfect HC-structure if there exists a tree $T$ such that for all triplets, the most similar pair (with the largest similarity) will be merged first in $T$. 
	Such a triplet-order constraint is much weaker than the requirement of the ground-truth graph of \cite{Cohen_SODA2018} (which intuitively is generated by an ultrametric). An example is given in Figure \ref{fig:linkage_not_working}. 
	In fact, the following proposition 
	shows that the concept of ground-truth graph is rather stringent for graphs with unit edge weights (i.e, unweighted graphs). In particular, a connected unweighted graph is a ground-truth graph of \cite{Cohen_SODA2018} if and only if it is the complete graph. 
	In contrast, unweighted graphs with perfect HC-structure represent a  much broader family of graphs. 
	The proof of this proposition is in Appendix \ref{appendix:groundtruthinput}. 
	
	\begin{proposition}\label{prop:groundtruth-unweighted}
		Let $G = (V, E)$ be an unweighted graph (i.e, $w(u,v) = 1$ if $(u,v)\in E$ and $0$ otherwise).
		$G$ is a ground-truth graph \emph{if and only if} each connected component of $G$ is a clique. 
	\end{proposition}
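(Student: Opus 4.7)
The plan is to prove both implications through the characterization of ``ground-truth input'' recalled in Appendix \ref{appendix:groundtruthinput}: $G$ is ground-truth iff there exists a HC-tree $T$ together with a weight function $\lambda$ on internal nodes, monotone non-decreasing from the root toward the leaves, such that $w_{ij}=\lambda(\LCA_T(i,j))$ for every pair $i,j$. My first step is to extract the standard ``ultrametric'' consequence: for every triple of leaves $\{i,j,k\}$, the two smallest of $w_{ij},w_{ik},w_{jk}$ must be equal. This is because among the three pairwise LCAs in a rooted tree, exactly two coincide with $\LCA(i,j,k)$ while the third is a (weakly) deeper descendant; monotonicity of $\lambda$ then forces the two similarities at the shallower LCAs to be equal and to be no larger than the third. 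Specialized to weights in $\{0,1\}$, this rules out exactly the triple pattern $(1,1,0)$: no three vertices may induce a path of length two in a ground-truth unweighted graph.

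For the ``if'' direction I would construct the witness tree explicitly. Let $C_1,\dots,C_c$ be the cliques forming the components of $G$. Take a root $r$ with children $u_1,\dots,u_c$ and hang an arbitrary binary tree with leaf set $C_k$ under $u_k$. Set $\lambda(r)=0$ and $\lambda(x)=1$ for every internal node strictly below the root. Then $\lambda(\LCA(i,j))=1=w_{ij}$ whenever $i,j$ lie in the same clique, and $\lambda(\LCA(i,j))=\lambda(r)=0=w_{ij}$ otherwise, and the monotonicity of $\lambda$ is immediate. If the definition in \cite{Cohen_SODA2018} insists on a strictly binary tree, I would simply replace the star under $r$ by any binary arrangement of the $u_k$'s, assigning the new internal nodes weight $0$.

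For the ``only if'' direction, suppose $G$ is ground-truth but some connected component $C$ is not a clique. Pick $p,q\in C$ with $(p,q)\notin E$ and take a shortest $p$-$q$ path $p=v_0,v_1,\ldots,v_m=q$ inside $C$; minimality forces $m\ge 2$ and in particular $(v_0,v_2)\notin E$. Then the triple $\{v_0,v_1,v_2\}$ realizes the forbidden pattern $(1,1,0)$, contradicting the triple-pattern restriction derived in the first step. Hence every component is a clique.

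The main obstacle is really just pinning down the exact form of the definition used in \cite{Cohen_SODA2018} (whether the tree is required to be binary, and whether $\lambda$ is strictly increasing or merely non-decreasing toward the leaves), so that the ultrametric consequence above is stated correctly. Once the ``no $(1,1,0)$ triple'' restriction is in place, both directions collapse to the short arguments above; in particular the construction in the easy direction is essentially forced, since any HC-tree that separates the components at the root and refines each component arbitrarily will work.
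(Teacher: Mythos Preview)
Your proposal is correct and follows essentially the same route as the paper. Both arguments hinge on the observation that in a generating tree, among any triple $\{i,j,k\}$ the two pairwise LCAs at the top coincide, so the two smallest of $w_{ij},w_{ik},w_{jk}$ must agree; in the $\{0,1\}$ setting this forbids exactly the $(1,1,0)$ pattern, and then one propagates along a path inside a component to force it to be a clique. The paper derives the ``no $(1,1,0)$'' fact by an explicit three-case picture of how $i,j,k$ can sit in $T$ (its Figure~\ref{fig:generating_tree_ijk}), whereas you state it once as the standard ultrametric consequence; these are the same argument in different dress. One minor point: the paper's appendix only spells out the ``only if'' direction and treats the converse as evident, while you give the explicit generating-tree construction for a disjoint union of cliques (and correctly note that the definition in \cite{Cohen_SODA2018} asks for a binary tree, which your parenthetical fix handles). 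So your write-up is, if anything, slightly more complete than the paper's.
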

	
	\section{Algorithms}
	\label{sec:algorithms}
	
	\subsection{Hardness and approximation algorithms}
	By Claim \ref{claim:totalCrelation}, $\TC_G(T)$ equals $\Dcost_G(T)$ minus a constant (depending only on $G$). 
	Thus the hardness results for optimizing $\Dcost_G(T)$ also holds for optimizing $\rho_G(T) = \frac{\TC_G(T)}{\BC(G)}$. 
	Hence the following theorem follows from results of \cite{Charikar_SODA2017} and \cite{Dasgupta_2016}. The simple proof is in Appendix \ref{appendix:thm:hardness}. 
	\begin{theorem}\label{thm:hardness}
		(1) It is NP-hard to compute $\rho_G^*$, even when $G$ is an unweighted graph (i.e, edges have unit weights). 
		(2) Furthermore, under Small Set Expansion hypothesis, it is NP-hard to approximate $\rho_G^*$ within any constant factor. 
	\end{theorem}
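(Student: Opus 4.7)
The plan is to observe that Claim \ref{claim:totalCrelation} gives us an additive/multiplicative dictionary between $\rho_G(T)$ and Dasgupta's cost $\Dcost_G(T)$ for a \emph{fixed} graph $G$: writing $W := \sum_{(i,j)\in E} w_{ij}$, we have
\[
\rho_G(T) = \frac{\Dcost_G(T) - 2W}{\BC(G)},
\]
where $W$ and $\BC(G)$ depend only on $G$. This means that on any fixed input, optimizing $\rho_G(\cdot)$ and optimizing $\Dcost_G(\cdot)$ are the same problem, so any reduction that proves hardness for the latter automatically proves hardness for the former. Both parts of the theorem will be obtained by pulling back the known hardness results for $\Dcost_G^*$ through this identity.

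For part (1), I would simply invoke Dasgupta's NP-hardness result for computing the optimal Dasgupta cost on unweighted graphs. Since on a fixed graph $G$ the trees minimizing $\rho_G(\cdot)$ are exactly the trees minimizing $\Dcost_G(\cdot)$, and since $\BC(G)$ and $W$ are computable in polynomial time from $G$, any polynomial-time algorithm computing $\rho_G^*$ would yield a polynomial-time algorithm computing $\Dcost_G^*$, contradicting NP-hardness.

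For part (2), the key step is to verify that a constant-factor approximation for $\rho_G^*$ transfers to a constant-factor approximation for $\Dcost_G^*$, which combined with the SSE-based inapproximability in \cite{Charikar_SODA2017} yields the claim. Concretely, suppose $T$ satisfies $\rho_G(T)\le c\cdot \rho_G^*$ for some constant $c\ge 1$; multiplying through by $\BC(G)$ gives $\TC_G(T)\le c\cdot\TC_G(T^*)$, and then adding $2W$ to both sides yields
\[
\Dcost_G(T) \;=\; \TC_G(T)+2W \;\le\; c\,\TC_G(T^*)+2W \;=\; c\,\Dcost_G(T^*) - 2(c-1)W \;\le\; c\,\Dcost_G(T^*).
\]
So a $c$-approximation for $\rho_G^*$ is also a $c$-approximation for $\Dcost_G^*$; since the latter is SSE-hard to achieve for any constant $c$, so is the former.

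There is no real obstacle here: the only point worth checking carefully is the direction of the inequality in the last display (the $-2(c-1)W$ term has the favorable sign because $W\ge 0$ and $c\ge 1$), which is exactly why the additive shift by $-2W$ inside the numerator of $\rho_G$ does not destroy approximation-preserving reductions from $\Dcost_G$. Given this, both parts follow immediately from the cited results, and the full proof will be a few lines long, as indicated by the remark that it is deferred to the appendix.
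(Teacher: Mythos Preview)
Your argument for part~(2) is exactly the paper's: a $c$-approximation for $\rho_G^*$ yields $\TC_G(T)\le c\,\TC_G(T^*)$, and adding $2W$ on both sides gives $\Dcost_G(T)\le c\,\Dcost_G(T^*)$, so the SSE-hardness from \cite{Charikar_SODA2017} transfers. Nothing to add there.

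For part~(1) there is a gap. You write that you will ``simply invoke Dasgupta's NP-hardness result for computing the optimal Dasgupta cost on unweighted graphs,'' but Dasgupta's reduction in \cite{Dasgupta_2016} is for \emph{weighted} graphs (with varying edge weights). The paper does not get the unweighted case for free: it supplies a modification of Dasgupta's NAESAT$^*$ reduction in which each literal is replaced by $r+1$ copies (with $r=2m$) connected via a complete bipartite gadget, so that all edges have unit weight while the counting argument controlling the top split still goes through. Only after establishing NP-hardness of $\Dcost_G^*$ on unit-weight graphs does the affine relation $\rho_G^*=(\Dcost_G^*-2W)/\BC(G)$ finish the job, exactly as you outline. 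So your reduction template is right, but the result you are invoking is not in the literature as stated; the missing ingredient is precisely the unweighted-graph version of Dasgupta's hardness, which the paper proves separately.
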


	\noindent We remark that while the hardness results for optimizing $\Dcost_G(T)$ translate into hardness results for $\rho_G^*$, it is not immediately evident that an approximation algorithm translates too, as $\TC_G(T)$ differs from $\Dcost_G(T)$ by a positive quantity. 
	Nevertheless, it turns out that the $O(\sqrt{\log n})$-approximation algorithm of \cite{Charikar_SODA2017} for $\Dcost_G(T^*)$ also approximates $\rho_G^*$ within the same asymptotic approximation factor. See appendix \ref{app:sdp_app} for details.

	\subsection{Algorithms for graphs with perfect or near-perfect HC-structure} \label{sec:perfect_near_perfect}
	
	While in general, it remains open how to approximate $\rho_G^*$ (as well as $\Dcost_G(T^*)$) to a factor better than $\sqrt{\log n}$, we now show that we can check whether a graph has perfect HC-structure or not, and compute an optimal tree if it has, in polynomial time. We also provide a polynomial-time approximation algorithm for graphs with near-perfect HC-structures (to be defined later). 
	
		\noindent{\emph{\underline{Remark:}}} 
		One may wonder whether a simple agglomerative (bottom-up) hierarchical clustering algorithm, such as the single-linkage, average-linkage, or complete-linkage clustering algorithm, could have recovered the perfect HC-structure. For the ground-truth input graphs introduced in \cite{Cohen_SODA2018}, it is known that it can be recovered via average-linkage clustering algorithms. 
	However, as the example in Figure \ref{fig:linkage_not_working} shows, this in general is not true for recovering graphs with perfect HC-structures, as depending on the strategy, a bottom-up approach may very well merge nodes $1$ and $4$ first, leading to a non-optimal HC-tree. 
	
	Intuitively, it is necessary to have a \emph{top-down} approach to recover the perfect HC-structure. 
	The high level framework of our recursive algorithm \algBisection($\aG$) is given below and output a HC-tree $\aT$ spans (i.e, with its leaf set being) a subset of vertices from $\aV = V(\aG)$. 
	$\aG_A$ (resp. $\aG_B$) in the algorithm denotes the subgraph of $\aG$ spanned by vertices in $A \subseteq \aV$ (resp. in $B \subseteq \aV$). 
	We will prove that  the output tree $\aT$ spans \emph{all} vertices $V(\aG)$, if and only if $\aG$ has a perfect HC-structure (in which case $\aT$ will also be an optimal tree). 
	
	\begin{description}\denselist
		\item[\algBisection($\aG$)] $/*$ \underline{Input}: graph $\aG = (\aV, \aE)$. \underline{Output}: a binary HC-tree $\aT$ $*/$
		\begin{itemize}\denselist
			\item[] Set $(A,B)$= \validBisect($\aG$); ~~{\sf If}($A = \emptyset$ or $B = \emptyset$) {\sf Return}($\emptyset$) 
			\item[] Set $T_A$ = \algBisection($\aG_A$); ~~ $T_B$=\algBisection($\aG_B$)
			\item[]	Build tree $\aT$ with $T_A$ and $T_B$ being the two subtrees of its root. {\sf Return}($\aT$)
		\end{itemize}
	\end{description}

	\noindent We say that $(A,B)$ is a \emph{partial bi-partition of $\aV$} if $A \cap B = \emptyset$ and $A\cup B \subseteq \aV$; and $(A, B)$ is a \emph{bi-partition of $\aV$ (or $\aG$)} if  $A\cap B = \emptyset$, $A, B \neq \emptyset$, and $A\cup B = \aV$. 
	Let $\aV = \{\av_1, \ldots, \av_\nhat\}$. 
	\begin{definition}[Triplet types]
		A triplet $\{\av_i, \av_j, \av_k\}$ with edge weights $w_{ij}, w_{ik}$ and $w_{jk}$ is \\
		{\sf \triTone{}:} if the largest weight, say $w_{ij}$, is strictly larger than the other two; i.e, $w_{ij} > w_{ik}, w_{jk}$; \\
		{\sf \triTwo{}:} if exact two weights, say $w_{ij}$ and $w_{ik}$, are the largest; i.e, $w_{ij}=w_{ik} > w_{jk}$; \\
		{\sf \triThree{}:} otherwise, which means all three weights are equal; i.e,  $w_{ij} = w_{ik}= w_{jk}$. 
	\end{definition}
	\begin{definition}[Valid partition]\label{def:validpartition}
		A partition $\mathcal{S} = (S_1, \ldots, S_m)$, $m>1$, of $\aV$ (i.e, $\cup S_i = \aV$, $S_i \neq \emptyset$, and $S_i \cap S_j = \emptyset$) is \emph{valid w.r.t. $\aG$} if 
		(i) for any \triTone{} triplet $\{\av_i, \av_j, \av_k\}$ with $w_{ij} > \max\{w_{ik}, w_{jk}\}$, either all three vertices belong to the same set from $\mathcal{S}$; or $\av_i$ and $\av_j$ are in one set from $\mathcal{S}$, while $\av_k$ is in another set; and (ii) for any \triTwo{} triplet $\{\av_i, \av_j, \av_k\}$ with $\av_{ij} = \av_{ik} > \av_{jk}$, it cannot be that $\av_j$ and $\av_k$ are from the same set of $\mathcal{S}$, while $\av_i$ is in another one. 
		
		If this partition is a bi-partition, then it is also called a \emph{valid bi-partition}. 
		
		In what follows, we also refer to each set $S_i$ in the partition $\mathcal{S}$ as a \emph{cluster}. 
	\end{definition}
	
	The goal of procedure \validBisect($\aG$) is to compute a valid bi-partition if it exists. Otherwise, it returns ``{\sf fail}" (more precisely, it returns ($A = \emptyset, B=\emptyset$) ). 
	On the high level, it has two steps. It turns out (Step-1) follows from existing literature on the so-called \emph{rooted triplet consistency} problem. The main technical challenge is to develop (Step-2). 
	\begin{description}\denselist
		\item[Procedure \validBisect($\aG$)] ~
		\item[] {\sf {(Step-1)}:} Compute a certain valid partition $\mathcal{C} = \{C_1, \ldots, C_m \}$ of $\aV$, if possible. 
		\item[] {\sf {(Step-2)}:} Compute a valid bi-partition from this valid partition $\mathcal{C}$. 
	\end{description}
	
	\paragraph{(Step 1): compute a valid partition.}
	It turns out that the partition procedure within the algorithm {\sf BUILD} of \cite{Aho_1981} will compute a specific partition $\mathcal{C} = \{C_1, \ldots, C_m \}$ with nice properties. The following result can be derived from Lemma 1, Theorem 2, and proof of Theorem 4 of \cite{Aho_1981}.
	\begin{proposition}[\cite{Aho_1981}]\label{prop:goodcluster}
		(1) We can check whether there is a valid partition for $\aG$ in $O(\nhat^3 \log \nhat)$ time.
		Furthermore, if one exists, then within the same time bound we can compute a valid partition $\mathcal{C} = \{C_1, \ldots, C_m \}$ which is \emph{minimal} in the following sense: 
		Given any other valid partition $\mathcal{C}' = \{C'_1, \ldots, C'_t \}$, $\mathcal{C}$ is a refinement of $\mathcal{C'}$ (i.e, all points from the same cluster $C_i \in \mathcal{C}$ are contained within some $C'_j \in \mathcal{C}$). 
		
		(2) This implies that, let $(A, B)$ be any valid bi-partition for $\aG$, and $\mathcal{C}$ the partition as computed above. Then for any $x\in C_i$, if $x \in A$ (resp. $x\in B$), then $C_i \subseteq A$ (resp. $C_i \subseteq B$). 
	\end{proposition}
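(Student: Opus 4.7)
The plan is to reduce the valid-partition problem to a rooted-triplet constraint satisfaction problem and invoke the partitioning subroutine of the BUILD algorithm from \cite{Aho_1981}. First, I encode Definition~\ref{def:validpartition} as a set of constraints on triplets: each type-1 triplet $\{\av_i,\av_j,\av_k\}$ with $w_{ij}>\max\{w_{ik},w_{jk}\}$ forces $\av_i$ and $\av_j$ to lie in a common cluster (this is the conjunctive rooted-triplet constraint $\av_i\av_j|\av_k$), while each type-2 triplet with $w_{ij}=w_{ik}>w_{jk}$ only forbids the single configuration in which $\av_j,\av_k$ share a cluster that excludes $\av_i$ (a disjunctive ``fan'' constraint). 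Type-3 triplets impose no constraints.

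Next I mimic BUILD's partition step. Build an auxiliary graph $H$ on $\aV$ whose initial edges are the pairs $(\av_i,\av_j)$ coming from type-1 triplets. Then iterate: while some type-2 triplet $\{\av_i,\av_j,\av_k\}$ has $\av_j$ and $\av_k$ in the same connected component of $H$ but $\av_i$ in a different one, add the edges $(\av_i,\av_j)$ and $(\av_i,\av_k)$. The process terminates after at most $\nhat - 1$ component merges. Let $\mathcal{C}$ be the connected components of the final $H$; if $|\mathcal{C}|=1$, conclude that no valid partition exists. Correctness of $\mathcal{C}$ as a valid partition follows by construction --- type-1 constraints are satisfied from the start, and the iteration drives every type-2 violation to zero. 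Minimality follows by induction on the order in which edges of $H$ are added: each edge is \emph{forced} either directly by a type-1 constraint, or indirectly by a type-2 constraint together with previously forced edges. Consequently, in any other valid partition $\mathcal{C}'$, both endpoints of every edge of $H$ must lie in the same cluster, which means every $C_i \in \mathcal{C}$ is contained in some $C'_j \in \mathcal{C}'$. Part (2) is an immediate consequence: a valid bi-partition $(A,B)$ is itself a valid $2$-partition, so every $C_i$ lies entirely inside $A$ or entirely inside $B$.

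The main obstacle is the disjunctive type-2 constraint, which prevents a single-pass solution: merges triggered by a type-1 constraint may activate type-2 constraints that in turn trigger further merges, and one must argue both that the iteration reaches a fixed point and that it does not over-merge. The non-over-merging argument is the heart of the proof and is precisely the invariant established in the proof of Theorem~4 of \cite{Aho_1981}, which I would import once the encoding above is verified. For the running time, there are $O(\nhat^3)$ triplets and at most $O(\nhat^2)$ edge-additions to $H$; using a union-find structure to test component membership and triggering re-checks only at the triplets incident to a newly merged component yields the stated $O(\nhat^3 \log \nhat)$ bound.
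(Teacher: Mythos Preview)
Your proposal is correct and takes essentially the same route as the paper: the paper does not give an independent proof of this proposition but simply states that it ``can be derived from Lemma~1, Theorem~2, and proof of Theorem~4 of \cite{Aho_1981}'', i.e., it invokes the partition step of the BUILD algorithm exactly as you do. Your sketch is in fact more explicit than the paper's --- you spell out the encoding of type-1 triplets as conjunctive rooted-triplet constraints and type-2 triplets as fan constraints, the iterative propagation on the auxiliary graph $H$, and the forced-edge induction for minimality --- whereas the paper leaves all of this implicit in the citation. Your time-complexity sketch is also sound once one observes that each unordered pair $(\av_j,\av_k)$ becomes ``newly merged'' at most once across all component merges, so the total number of type-2 re-checks is $O(\nhat^3)$.
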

	
	\paragraph{(Step 2): compute a valid bi-partition from $\mathcal{C}$ if possible.}
	Suppose (Step 1) succeeds and let $\mathcal{C} = \{C_1, \ldots, C_m\}$, $m>1$, be the minimum valid partition computed. 
	It is not clear whether a \emph{valid bi-partition} exists (and how to find it) even though a \emph{valid partition} exists. 
	Our algorithm computes a valid bi-partition depending on whether a \emph{claw-configuration} exists or not.

	\begin{definition}[A claw]
		Four points $\{\av_i \mid \av_j, \av_k, \av_\ell \}$ form \emph{a claw w.r.t. $\mathcal{C}$} if  (i) each point is from a different cluster in $\mathcal{C}$; and (ii) 
		$w_{ij} = w_{ik} = w_{i\ell} > \max \{ w_{jk}, w_{j\ell}, w_{k\ell}\}$. See Figure \ref{fig:claw} (a). 
	\end{definition}
	
	\begin{wrapfigure}{r}{0.45\textwidth}
		\vspace*{-0.1in}
		\centering
		\begin{tabular}{cc}
			\includegraphics[height=2.5cm]{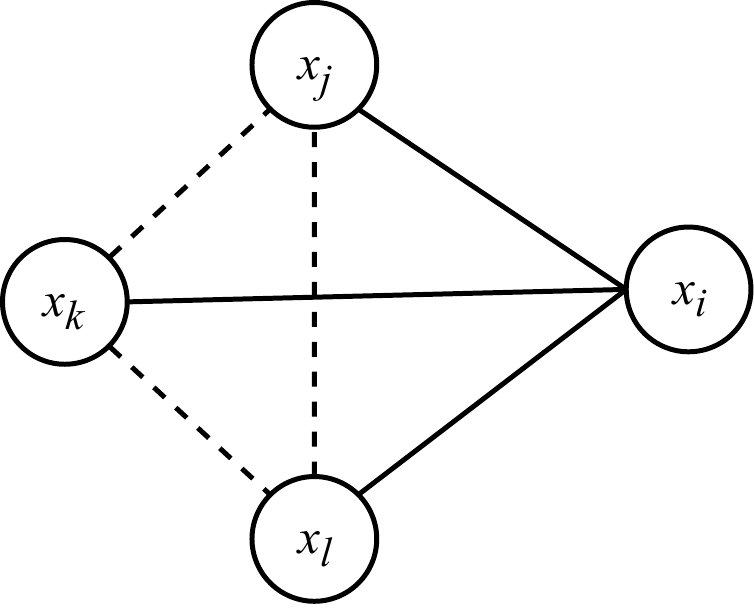}&  \includegraphics[height=3cm]{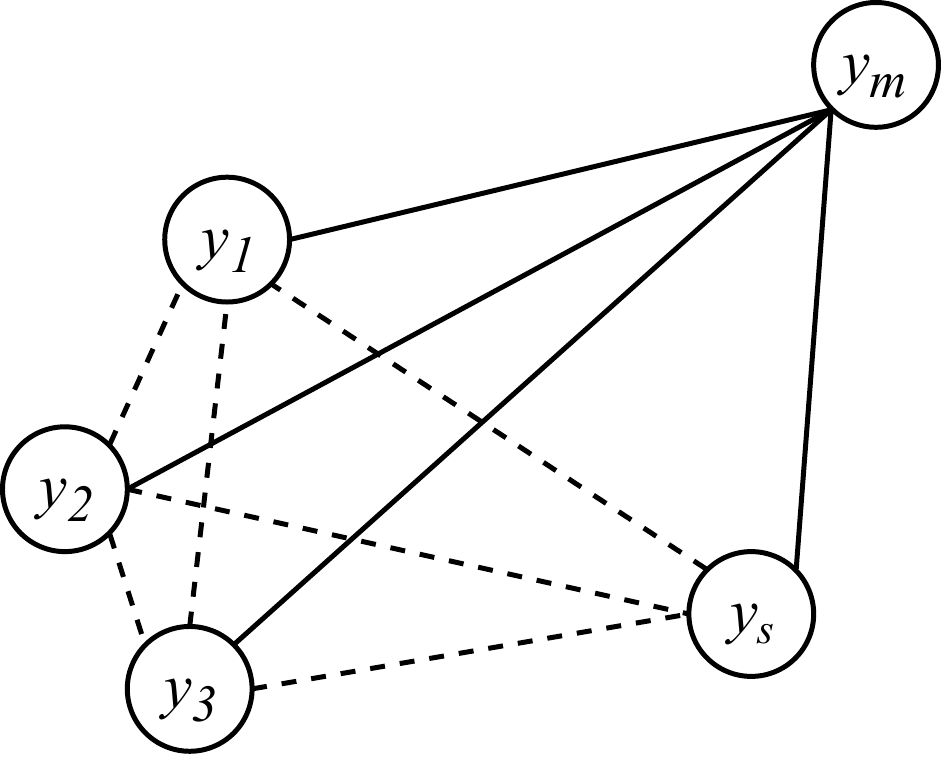}\\
			(a) & (b)
		\end{tabular}
		\vspace*{-0.15in}
		\caption{(a) A claw $\{\av_i | \av_j, \av_k, \av_\ell\}$. (b) Clique $\mathrm{C}$ formed by light (dashed) edges.
			\label{fig:claw}}
	\end{wrapfigure}
	
	\vspace*{0.05in}\noindent \emph{\underline{(Case-1)}: Suppose there is a claw w.r.t. $\mathcal{C}$.} 
	Fix any claw, and assume w.l.o.g that it is $\{y_m \mid y_1, y_2, y_3 \}$ such that $y_i \in C_i$ for $i = 1, 2, 3, $ or $m$. 
	We choose an arbitrary but fixed representative vertex $y_i \in C_i$ for each cluster $C_i$ with $i \in [4, m-1]$. 
	Compute the subgraph $G' = (V', E')$ of $\aG$ spanned by vertices $V' = \{y_1, \ldots, y_m \}$. 
	(Recall that we can view $G'$ as a complete graph where $(y_i, y_j)$ has weight $0$ if it is not an edge in $E'$.) 
	Set $\mathsf{w} = w(y_1, y_m) = w(y_2, y_m) = w(y_3, y_m)$. We say that an edge $e\in E'$ is \emph{light} if its weight is strictly less than $\mathsf{w}$; otherwise, it is \emph{heavy}. 
	Easy to see that by definition of claw, edges $y_1y_2, y_1y_3,$ and $y_2y_3$ are all light. 
	Now, consider the subgraph $G''$ of $G'$ spanned by only light-edges, and w.l.o.g. let $\mathsf{C} = \{y_1, \ldots, y_s \}$ be the maximum clique $\mathsf{C}$ in $G''$ contains $y_1, y_2$ and $y_3$. See Figure \ref{fig:claw} (b) for this clique, where solid edges are heavy, while dashed ones are light. (It turns out that this maximum clique can be computed efficiently, as we show in Appendix \ref{appendix:thm:timecomplexity}.)

	We then set up $s$ potential bi-partitions $\Pi_i = (C_i, ~\cup_{\ell \in [1,m], \ell \neq i} C_\ell)$, for each $i\in [1, s]$. 
	We check the validity of each such $\Pi_i$. 
	If none of them is valid, then procedure \validBisect($\aG$) returns `{\sf fail}'. 
	Otherwise, it turns the valid one.  
	The correctness of this step is guaranteed by the Lemma \ref{lem:withclaw}, whose proof can be found in Appendix \ref{appendix:lem:withclaw}. 
	
	\begin{lemma}\label{lem:withclaw}
		Suppose there is a claw w.r.t. $\mathcal{C}$. Let $\Pi_i$'s, $i\in [1,s]$ be described above. There exists a valid bi-partition for $\aG$ if and only if one of the $\Pi_i$, $i \in [1, s]$, is valid. 
	\end{lemma}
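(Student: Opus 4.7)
The plan is to prove both directions of the biconditional. The $(\Leftarrow)$ direction is immediate: each $\Pi_i = (C_i, \bigcup_{\ell \neq i} C_\ell)$ is by construction a bi-partition of $\aV$. All content lies in the $(\Rightarrow)$ direction, where I will in fact establish the sharper claim that every valid bi-partition $(A, B)$ coincides with some $\Pi_i$ for $i \in [1, s]$.

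By Proposition~\ref{prop:goodcluster}(2), each $C_\ell \in \mathcal{C}$ lies entirely on one side of $(A, B)$; WLOG $y_m \in A$. Each of the three sub-triples $\{y_i, y_j, y_m\} \subset \{y_1, y_2, y_3, y_m\}$ is \triTwo{} with center $y_m$, so by Def~\ref{def:validpartition}(ii) no two of $\{y_1, y_2, y_3\}$ can both sit in $B$ while $y_m \in A$. This yields sub-case~(a), in which exactly one of $y_1, y_2, y_3$, WLOG $y_3$, lies in $B$; or sub-case~(b), in which $y_1, y_2, y_3 \in A$ and some $C_j \subseteq B$ with $j \in [4, m-1]$.

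The crux is to show $B$ is a single cluster $C_i$ with $y_i \in \mathsf{C}$. Suppose toward contradiction that $B$ contains two distinct clusters $C_j, C_{j'}$ (in sub-case~(a), set $j' = 3$). I will chain several representative triples, exploiting two layers of validity: (1) $\mathcal{C}$'s validity forbids \triTone{} configurations on any triple whose vertices lie in three distinct clusters, pinning certain weights to ties; and (2) $(A, B)$'s validity forbids the ``$\{v, w\}$ together with center separate'' configuration on \triTwo{} triples. In sub-case~(a), applying these to $\{y_j, y_3, y_m\}, \{y_j, y_3, y_1\}, \{y_j, y_3, y_2\}$ successively forces $w_{j3} = w_{j1} = w_{j2} = \mathsf{w}$, after which $\{y_j, y_1, y_2\}$ becomes \triTwo{} with center $y_j$ and realizes exactly the forbidden configuration under $(A, B)$. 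In sub-case~(b), the analogous chain via triples $\{y_j, y_i, y_{i'}\}$ and $\{y_j, y_i, y_m\}$ (for $i, i' \in \{1, 2, 3\}$) first establishes that every $(y_j, y_i)$ is light and $w_{jm} = \mathsf{w}$; repeating this for $y_{j'}$ and then analyzing $\{y_j, y_{j'}, y_m\}$ forces $w_{j j'} = \mathsf{w}$; finally $\{y_j, y_{j'}, y_1\}$ is \triTone{} with $w_{j j'}$ as strict maximum across three distinct clusters, contradicting $\mathcal{C}$'s validity.

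Thus $B$ is a single cluster $C_i$ and $(A, B) = \Pi_i$. In sub-case~(a), $i \in \{1, 2, 3\} \subseteq [1, s]$ since $y_1, y_2, y_3 \in \mathsf{C}$. In sub-case~(b), suppose $y_i \notin \mathsf{C}$: by the maximality of $\mathsf{C}$ in $G''$ there exists some $y_k \in \mathsf{C}$ with $(y_i, y_k)$ heavy in $G'$, and since $B = C_i$ we must have $y_k \in A$; but then $\{y_i, y_k, y_1\}$ is \triTone{} with $w_{ik} \geq \mathsf{w}$ as strict maximum (both $w_{i1}$ and $w_{k1}$ being light) across three distinct clusters, again contradicting $\mathcal{C}$. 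Hence $y_i \in \mathsf{C}$, giving $(A, B) = \Pi_i$ with $i \in [1, s]$ as required. The main obstacle is the triple-chasing in the preceding paragraph: the terminating contradictions only emerge once enough edge weights have been pinned to $\mathsf{w}$ by alternating applications of the two layers of validity, and one must carefully bookkeep which \triTwo{} centers are forbidden by $(A, B)$ versus by $\mathcal{C}$ at each stage.
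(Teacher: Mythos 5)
Your proposal is correct, and it proves the same strengthened statement the paper does (every valid bi-partition must coincide with some $\Pi_i$, $i\in[1,s]$), using the same two engines: the fact that no crossing triplet can be \triTone{} (since $\mathcal{C}$ is a valid partition), and condition (ii) of Definition \ref{def:validpartition} applied to \triTwo{} triplets. The organization differs, though. The paper's route is to classify representatives by their adjacency to the light-clique $\mathsf{C}$: it first shows every $y_r\notin\mathsf{C}$ (including $y_m$) has \emph{all-heavy} edges to $\mathsf{C}$, then invokes a single reusable step (``if two clique members lie in $A$, any vertex with all-heavy edges to $\mathsf{C}$ must also lie in $A$'') to conclude that $B$ meets $V'$ in at most one clique member; nonemptiness of $B$ then finishes it. You instead case-split on $|B\cap\{y_1,y_2,y_3\}|$ and run weight-pinning chains to contradict the existence of a second cluster in $B$; your sub-case (b) chain effectively re-derives, in contrapositive form, the paper's observation that a representative with all-light edges to $\{y_1,y_2,y_3\}$ must belong to $\mathsf{C}$ (Claim \ref{claim:cliquecomponent}). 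Both work; the paper's invariant is a bit more economical and is also what powers the later running-time analysis. One small imprecision: in sub-case (a), the triplet $\{y_j,y_3,y_m\}$ does not force $w_{j3}=\mathsf{w}$ --- it could be \triTwo{} with center $y_j$ and $w_{j3}=w_{jm}>\mathsf{w}$, which is not excluded by either validity condition since $y_3$ and $y_m$ lie on opposite sides. What is actually forced is $w_{j3}\ge\mathsf{w}$ together with $w_{j1}=w_{j2}=w_{j3}$, which still makes $\{y_j,y_1,y_2\}$ a \triTwo{} triplet centered at $y_j$ with $y_1,y_2\in A$ and $y_j\in B$, so your contradiction survives; just state the bound as an inequality.
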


	\noindent \emph{\underline{(Case-2)}: Suppose there is no claw w.r.t. $\mathcal{C}$.} 
	The case when there is no claw is slightly more complicated. We present the lemma below with a constructive proof in Appendix \ref{appendix:lem:noclaw}. 
	
	\begin{lemma}\label{lem:noclaw}
		If there is no claw w.r.t. $\mathcal{C}$, then we can check whether there is a valid bi-partition (and compute one if it exists) in $O(\hat{n}^3)$ time, where $\hat{n}=|\aV|$. 
	\end{lemma}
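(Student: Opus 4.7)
The plan is to reduce the problem to a constraint-satisfaction task on ``super-clusters'' of $\mathcal{C} = \{C_1, \ldots, C_m\}$ and then exploit the no-claw hypothesis to solve that task in $O(\hat{n}^3)$ time. Since Proposition \ref{prop:goodcluster}(2) forces any valid bi-partition to respect $\mathcal{C}$, I compute two kinds of forced merges: (F1) for each type-1 triplet spanning three distinct clusters of $\mathcal{C}$, the two clusters holding the strictly-largest-weight edge must go to the same side; (F2) for each type-2 triplet $\{x_i, x_j, x_k\}$ with $w_{ij} = w_{ik} > w_{jk}$ whose base clusters $C(x_j), C(x_k)$ have already been forced together, $C(x_i)$ must join them. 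These can be closed in $O(\hat{n}^3)$ total time via a union-find combined with an inverted index mapping each unordered cluster pair to the type-2 triplets using it as a base; a worklist algorithm re-examines only the triplets affected by each merge event. Let $\mathcal{D} = \{D_1, \ldots, D_k\}$ be the resulting super-clusters; if $k = 1$, return \textsf{fail}.

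Post Step 1, the only live constraints are residual type-2 triplets whose three vertices span three pairwise distinct super-clusters $D_s, D_t, D_u$ (with center $x_i \in D_s$), each demanding the bi-partition either cut $\{D_t, D_u\}$ or place $D_s, D_t, D_u$ all on the same side. I form the base-pair graph $H$ on $\mathcal{D}$ with an edge $\{D_t, D_u\}$ for each such triplet, tagged with its center $D_s$. Following the forced-merge closure philosophy, I mark an edge as \emph{must-cut} whenever its ``uncut'' option cascades through the residual constraints to the trivial bi-partition (all super-clusters on a single side); iterating this rule yields a must-cut subgraph $H^\star \subseteq H$. I then seek a proper 2-coloring of $H^\star$ with both color classes non-empty and output the induced bi-partition of $\hat{V}$.

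The key structural claim, and the main obstacle, is that in the no-claw regime a valid bi-partition exists if and only if $H^\star$ admits such a non-trivial proper 2-coloring. The ``if'' direction is verification: a proper 2-coloring of $H^\star$ cuts every must-cut edge, and every non-must-cut residual constraint is by construction compatible with its uncut option. For the ``only if'' direction, I would argue by contradiction: if a valid bi-partition exists yet $H^\star$ is non-bipartite or admits only trivial colorings, I trace around the odd cycle (or the cascading forcing chain that collapses one side) and use the tagged-center structure to exhibit four vertices lying in four pairwise distinct clusters of $\mathcal{C}$ whose pairwise weights match the claw pattern, contradicting the hypothesis. Isolating exactly those four vertices from the cycle or cascade is the delicate step and will require case analysis on the centers labeling the cycle edges. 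Granted the claim, the total runtime is bounded by the $O(\hat{n}^3)$ super-cluster computation plus $O(\hat{n}^3)$ for identifying $H^\star$ (bounded iterations of $O(\hat{n}^2)$ scans) plus $O(\hat{n}^2)$ for bipartiteness and coloring, totaling $O(\hat{n}^3)$ as required.
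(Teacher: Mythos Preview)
Your proposal overlooks the structural fact that does all the work in the paper's argument, and as a result both your correctness claim and your runtime bound have gaps.

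First, step (F1) is vacuous: since $\mathcal{C}$ is itself a \emph{valid} partition (that is what Step~1 produces via Proposition~\ref{prop:goodcluster}), no \triTone{} triplet can have its three vertices in three distinct clusters of $\mathcal{C}$ (this is exactly Claim~\ref{claim:typeone} in the paper). So there are no (F1) merges, no (F2) merges ever fire, and your super-clusters satisfy $\mathcal{D}=\mathcal{C}$. All of that machinery can be deleted.

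Second, and more importantly, you are missing the structural lemma that the no-claw hypothesis buys you directly. Fix any crossing \triTwo{} triplet $\{x_i,x_j,x_k\}$ with $w_{ij}=w_{ik}>w_{jk}$ and the three points in distinct clusters. For \emph{every} representative $x_r$ of any other cluster $C_r$, the four points $x_i,x_j,x_k,x_r$ must (by Claim~\ref{claim:typeone} applied to each sub-triplet) fall into one of three patterns; two of them are claws, so under the no-claw hypothesis the only possibility is that $\{x_r,x_j,x_k\}$ is again \triTwo{} with base $\{x_j,x_k\}$. Consequently, if $x_j$ and $x_k$ land on the same side of a bi-partition, \emph{every} cluster is forced to that side. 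In your language, every edge of $H$ is already must-cut, so $H^\star=H$, and the whole problem is a single 2-coloring of $H$ (a set of $z_j\neq z_k$ constraints) solvable by BFS.

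Without this lemma your ``if'' direction does not go through: you assert that ``every non-must-cut residual constraint is by construction compatible with its uncut option,'' but compatibility with \emph{some} bi-partition does not mean compatibility with the \emph{particular} 2-coloring of $H^\star$ you output; if a non-must-cut base $\{j,k\}$ happened to receive the same color while its center $i$ received the other, the bi-partition would be invalid. Your ``only if'' plan (extract a claw from an odd cycle of $H^\star$) is also only a sketch, and your $O(\hat n^3)$ bound for the cascading must-cut computation is not justified. The paper sidesteps all of this: once you know every crossing \triTwo{} base yields a hard inequality constraint, you enumerate the $O(\hat n^3)$ triplets once to collect at most $O(m^2)$ constraints and 2-color in linear time.
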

	
	This finishes the description of procedure \validBisect($\aG$). 
	Putting everything together, we conclude with the following theorem, with proof in Appendix \ref{appendix:thm:perfectHC}. 
	We note that it is easy to obtain a time complexity of $O(n^5)$. However, we show in Appendix \ref{appendix:thm:perfectHC} how to modify our algorithm as well as to provide a much more refined analysis to improve the time to $O(n^4\log n)$. 
	
	\begin{theorem}\label{thm:perfectHC}
		Given a similarity graph $G=(V,E)$ with $n$ vertices, algorithm \algBisection($G$) can be implemented to run in $O(n^4 \log n)$ time. It returns a tree spanning \emph{all vertices in $V$} if and only if $G$ has a perfect HC-structure, in which case this spanning tree is an optimal HC-tree. 
	
		Hence we can check whether it has a perfect HC-structure, as well as compute an optimal HC-tree if it has, in $O(n^4 \log n)$ time. 
		
	\end{theorem}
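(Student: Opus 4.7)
The proof proceeds in two stages: first establishing correctness of \algBisection($G$), then refining the implementation and analysis to attain the $O(n^4\log n)$ bound.

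For correctness, the plan is first to confirm that \validBisect($\aG$) returns a valid bi-partition of $\aV$ whenever one exists, and ($A=\emptyset,B=\emptyset$) otherwise. This is almost immediate by combining Proposition \ref{prop:goodcluster} for Step 1 with Lemma \ref{lem:withclaw} for Case-1 of Step 2 and Lemma \ref{lem:noclaw} for Case-2: the minimum valid partition $\mathcal{C}$ must refine any valid bi-partition, so a valid bi-partition (when it exists) either separates a single cluster $C_i$ from the rest, or must arise in the claw situation and be one of the candidate $\Pi_i$'s. Granted this, the overall correctness of \algBisection follows by induction on the leaf count of the returned tree.

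For the direction that a spanning output tree implies perfect HC-structure and optimality, observe that at every recursive call the returned bi-partition $(A,B)$ satisfies Definition \ref{def:validpartition}. This means for every triplet $\{i,j,k\}$ split as two on one side and one on the other, the \myrelation{} forced by the split coincides with the one consistent with edge weights, so the triplet's cost equals $\mintricost(i,j,k)$. Combined with an inductive hypothesis that $T_A$ and $T_B$ are consistent with all triplets inside $A$ and inside $B$ respectively, this yields $\triplecost_T(i,j,k)=\mintricost(i,j,k)$ for every triplet in $V$, hence $\TC_G(T)=\BC(G)$ and $\rho_G(T)=1$; optimality and perfect HC-structure then follow from Observation \ref{obs:lowerbnd} and Definition \ref{def:perfectHC}.

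For the converse, the key lemma is that perfect HC-structure is preserved by restriction to any valid bi-partition. If $G$ has perfect HC-structure, there is an HC-tree $T^*$ consistent with all triplets, whose top-level split is itself a valid bi-partition of $G$; hence \validBisect succeeds and returns some valid bi-partition $(A,B)$. Pruning $T^*$ to the leaf set $A$ (and contracting resulting degree-two internal nodes) gives a tree on $A$ still consistent with every triplet in $G_A$, since pruning preserves LCA relations among remaining leaves; similarly for $B$. Hence $G_A$ and $G_B$ have perfect HC-structure, and by induction \algBisection recovers spanning trees on each side, combining to a spanning tree for $G$.

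For the complexity, each recursive call on a subgraph of $\hat{n}$ vertices performs Step 1 in $O(\hat{n}^3\log\hat{n})$ by Proposition \ref{prop:goodcluster}, and Step 2 in $O(\hat{n}^3)$ in Case-2 by Lemma \ref{lem:noclaw}. Case-1 is the expensive case: naive validity-checking of up to $s$ candidate bi-partitions $\Pi_i$ at $O(\hat{n}^3)$ apiece gives $O(\hat{n}^4)$ per call, for a naive total of $O(n^5)$. The main obstacle, therefore, is refining Case-1: I expect one should avoid re-running validity from scratch on each $\Pi_i$ by exploiting the shared light-edge clique $\mathsf{C}$ (and the special role its vertices play), and amortize the work across the recursion tree using the fact that the subproblems partition $V$. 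Once per-call work is reduced to $O(\hat{n}^3\log\hat{n})$ and the recurrence $T(n)=T(|A|)+T(|B|)+O(n^3\log n)$ is established, the worst-case (unbalanced) solution gives the stated $O(n^4\log n)$ bound. The full refined implementation and analysis is the most delicate part of the proof and is where most of the technical effort in Appendix \ref{appendix:thm:perfectHC} will go.
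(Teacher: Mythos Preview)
Your correctness argument matches the paper's closely: the forward direction (spanning output $\Rightarrow$ perfect HC-structure) via triplet-by-triplet induction, and the converse via the observation that pruning an optimal tree $T^*$ to any leaf subset preserves consistency with all surviving triplets, so every induced subgraph inherits perfect HC-structure. One small wrinkle: your sentence ``a valid bi-partition \ldots either separates a single cluster $C_i$ from the rest, or must arise in the claw situation and be one of the candidate $\Pi_i$'s'' is garbled, since the $\Pi_i$'s \emph{are} exactly the single-cluster separations; the no-claw case (Lemma~\ref{lem:noclaw}) is handled by a different mechanism (a 2-coloring of clusters subject to $z_j\neq z_k$ constraints), not by single-cluster cuts. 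But you cite the right lemmas, so this is cosmetic.

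On the time bound, your diagnosis is only half complete. You correctly flag that checking each candidate $\Pi_i$ from scratch is too expensive, and the paper indeed fixes this in the way you anticipate: a single $O(\hat n^3)$ pass over all crossing \triTwo{} triplets, where each such triplet $\{x_i,x_j,x_k\}$ with $w_{ij}=w_{ik}>w_{jk}$ invalidates exactly the one partition $\Pi_i$, recorded in an array $L[1..s]$. However, you miss a second bottleneck: \emph{detecting} whether a claw exists is naively $O(\hat n^4)$ as well (four vertices from four distinct clusters). The paper spends real effort here, fixing a crossing edge $(x_j,x_k)$, labeling each other cluster according to the type of the triplet it forms with $x_j,x_k$, and showing (Lemma~\ref{lem:clawlabels}) that a claw with base-edge $(x_j,x_k)$ exists iff a certain pattern of labels appears; this yields $O(\hat n^3\log\hat n)$ claw detection. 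Both refinements are needed to get \validBisect{} down to $O(\hat n^3\log\hat n)$ per call.

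Finally, the phrase ``amortize the work across the recursion tree'' is a misdirection: the paper does no cross-level amortization. It simply bounds each call by $O(\hat n^3\log\hat n)$, observes that calls at a fixed depth have disjoint vertex sets summing to $n$ (so each level costs $O(n^3\log n)$), and multiplies by the $O(n)$ depth bound.
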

	
	\paragraph{Graphs with almost perfect HC-structure.} 
	In practice, a graph with perfect HC-structure could be corrupted with noise. 
	We introduce a concept of graphs with an almost-perfect HC-structure, and present a polynomial time algorithm to approximate the optimal cost.  
	\begin{definition}[$\delta$-perfect HC-structure]
		A graph $G = (V, E, w)$ has \emph{$\delta$-perfect HC-structure}, $\delta \ge 1$, if there exists weights $w^*: E \to \mathbb{R}$ such that (i) the graph $G^* = (V, E, w^*)$ has perfect HC-structure; and (ii) for any $e = (u,v) \in E$, we have $\frac{1}{\delta} w(e) \leq w^*(e) \leq \delta \cdot w(e)$. 
		In this case, we also say that $G=(V,E,w)$ is a \emph{$\delta$-perturbation of graph $G^* =(V, E, w^*)$}. 
		
		Note that a graph with $1$-perfect HC-structure is simply a graph with perfect HC-structure. 
	\end{definition}
	 
	The proof of the following theorem is in Appendix \ref{appendix:thm:deltaperfect}. 
	Note that when $\delta =1$ (i.e, the graph $G$ has a perfect HC-structure), this theorem also gives rise to a $2$-approximation algorithm for the optimal \newcost{} in $O(n^3)$. In contrast, an exact algorithm to compute the optimal tree in this case takes $O(n^4\log n)$ time as shown in Theorem \ref{thm:perfectHC}. 
	\begin{theorem}\label{thm:deltaperfect}
		Suppose $G=(V,E,w)$ is a $\delta$-perturbation of a graph $G^* = (V, E, w^*)$ with perfect HC-structure. 
		Then we have (i) $\RC^*_{G} \leq \delta^2$; 
		and (ii) we can compute a HC-tree $T$ s.t. $\RC_{G}^{}(T) \leq (1+\delta^2) \cdot \RC_{G}^*$ (i.e, we can ($1+\delta^2$)-approximate $\RC_G^*$) in $O(n^3)$
		time. 
		
	\end{theorem}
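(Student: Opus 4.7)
For part (i), I would use the optimal HC-tree $T^*$ of $G^*$ as a candidate tree for $G$. Since $G^*$ has perfect HC-structure, $\TC_{G^*}(T^*) = \BC(G^*)$ by Definition~\ref{def:perfectHC}. The elementwise bounds $\tfrac{1}{\delta}w^*(e) \le w(e) \le \delta\,w^*(e)$ propagate through the relevant definitions: because $\triplecost_{T,G}(i,j,k)$ is a sum of at most three edge weights of $G$ (Definition~\ref{def:total-cost}), we get $\triplecost_{T^*,G}(i,j,k) \le \delta\cdot\triplecost_{T^*,G^*}(i,j,k)$; and because $\mintricost_G(i,j,k)$ is a min over three two-weight sums, each of which is at least $\tfrac{1}{\delta}$ times the corresponding $G^*$ sum, we get $\mintricost_G(i,j,k) \ge \tfrac{1}{\delta}\mintricost_{G^*}(i,j,k)$. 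Summing over all triplets and dividing,
\[
\rho_G^* \;\le\; \rho_G(T^*) \;=\; \frac{\TC_G(T^*)}{\BC(G)} \;\le\; \frac{\delta\,\BC(G^*)}{\tfrac{1}{\delta}\BC(G^*)} \;=\; \delta^2.
\]

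For part (ii), I would aim to design an $O(n^3)$-time algorithm producing a tree $T$ with $\TC_G(T) \le \BC(G) + \TC_G(T^*)$, where $T^*$ is any optimum of $G^*$. Combined with the intermediate bounds from part~(i), namely $\TC_G(T^*) \le \delta\,\BC(G^*) \le \delta^2\,\BC(G)$, this would yield $\TC_G(T) \le (1+\delta^2)\BC(G)$, i.e., $\rho_G(T) \le 1+\delta^2$. The universal lower bound $\rho_G^* \ge 1$ from Observation~\ref{obs:lowerbnd}(i) then delivers the required $(1+\delta^2)\rho_G^*$ approximation.

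The algorithm I would build is a streamlined top-down variant of \algBisection: at each recursion level, in place of the full \validBisect{} subroutine (which is the bottleneck giving the $O(n^4 \log n)$ bound in Theorem~\ref{thm:perfectHC}), I would use a cheaper $O(\hat n^2)$-time heuristic partition driven by local aggregate weight statistics of $\aG$, so that the total work across the recursion telescopes to $O(n^3)$. The analysis would charge the per-triplet excess $\triplecost_{T,G}(i,j,k) - \mintricost_G(i,j,k)$ to $\triplecost_{T^*,G}(i,j,k)$ on a triplet-by-triplet basis, so that the total excess over all triplets is at most $\TC_G(T^*)$.

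The main obstacle is this last charging argument: the partition choices must be made from $G$ alone, whereas the bound we seek is phrased via the (unknown) $T^*$. The observation I would exploit is that the only triplets on which our heuristic can disagree with the split induced by $T^*$ are those whose preferred pairwise order in $G$ differs from that in $G^*$; by the $\delta$-perturbation bound, such a triplet must have at least two weights in $G^*$ within a factor of $\delta^2$ of each other, and hence contributes substantially to $\triplecost_{T^*,G^*}(i,j,k)$ (and thus to $\TC_G(T^*)$). This provides the per-triplet budget against which the heuristic's mistakes can be charged, yielding the desired $\TC_G(T) - \BC(G) \le \TC_G(T^*)$ bound.
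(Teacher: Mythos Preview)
Your argument for part~(i) is correct and essentially identical to the paper's.

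For part~(ii), however, there is a genuine gap. You never specify the ``cheaper $O(\hat n^2)$ heuristic partition driven by local aggregate weight statistics''; without a concrete algorithm, the per-triplet charging inequality $\triplecost_{T,G}(i,j,k)-\mintricost_G(i,j,k)\le\triplecost_{T^*,G}(i,j,k)$ cannot be established. In fact, that inequality \emph{fails} on any triplet where one weight dominates and your heuristic happens to misorder it: take $w_{ij}=100$, $w_{ik}=w_{jk}=1$; if $T$ chooses $\{i,k\mid j\}$ while $T^*$ (correctly) chooses $\{i,j\mid k\}$, the excess is $99$ but $\triplecost_{T^*,G}=2$. So the charging can only work if the algorithm is \emph{guaranteed} never to misorder triplets with a strongly dominant edge, and your proposal contains no mechanism enforcing this.

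The paper's proof supplies exactly that mechanism, and it is not a top-down recursive bisection at all. It uses $\delta$ explicitly in the algorithm: one collects the constraint set $\mathcal{R}=\{\,\{i,j\mid k\}: w_{ij}>\delta^2\max(w_{ik},w_{jk})\,\}$, observes that the $\delta$-perturbation bound forces each such triplet to be \triTone{} in $G^*$ (so the optimal tree $T^*$ of $G^*$ is consistent with $\mathcal{R}$, guaranteeing feasibility), and then invokes the rooted-triplet-consistency algorithm of Aho et~al.\ / Jansson et~al.\ to find in $O(n^3)$ time \emph{any} binary tree $T$ consistent with $\mathcal{R}$. The approximation analysis then compares $T$ directly to the optimal tree $\widehat T^*$ of $G$ (not of $G^*$): for each triplet, either $T$ agrees with $\widehat T^*$, or the triplet has no constraint in $\mathcal{R}$, in which case all three weights are within a factor $\delta^2$ and one gets $\triplecost_{T,G}\le(1+\delta^2)\,\triplecost_{\widehat T^*,G}$ by elementary case analysis. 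Summing yields $\rho_G(T)\le(1+\delta^2)\rho_G^*$ directly, without passing through your intermediate bound $\TC_G(T)\le\BC(G)+\TC_G(T^*)$.
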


	\section{\Newcost{} function for Random Graphs}
	\label{sec:random_graphs}

	\begin{definition}
		Given a $n \times n$ symmetric matrix $\edgeP$ with each entry $\edgeP_{ij} = \edgeP[i][j] \in [0, 1]$, $G = (V=\{v_1,\ldots,v_n\}, E)$ is a \emph{random graph generated from $\edgeP$} if there is an edge $(v_i, v_j)$ with probability $\edgeP_{ij}$. Each edge in $G$ has unit weight.
		
		The expectation-graph $\expG = (V, \overline{E}, w)$ refers to the weighted graph where the edge $(v_i, v_j)$ has weight $w_{ij}=\edgeP_{ij}$. 
	\end{definition}
	
	In all statements, ``with high probability (w.h.p)'' means ``with probability larger than $1-n^{-\varepsilon}$ for some constant $\varepsilon > 0$". 
	The main result is as follows. The proof is in Appendix \ref{appendix:thm:randomgraph}. 
	\begin{theorem}\label{thm:randomgraph}
		Given an $n\times n$ edge probability matrix $\edgeP$, assume each entry $\edgeP_{ij} = \omega(\sqrt{\frac{\log n}{n}})$, for any $i, j\in [1, n]$. 
		Given a random graph $G=(V,E)$ sampled from $\edgeP$, let $T^*$ denote the optimal HC-tree for $G$ (w.r.t. \newcost{}), and $\expT^*$ an optimal HC-tree for the expectation-graph $\expG$. Then we have that w.h.p, 
		$$\RC^*_G = \RC_G(T^*) = (1 + o(1)) \frac{\TC_{\expG}(\expT^*)}{\mathbb{E}[\BC(G)]}, $$
		where $\mathbb{E}[\BC(G)]$ is the expected base-cost for the random graph $G$.
	\end{theorem}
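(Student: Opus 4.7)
My plan is to prove the theorem by combining two concentration estimates with a sandwich argument that exploits optimality. Throughout, let $p_{\min} := \min_{i\ne j} \edgeP_{ij} = \omega(\sqrt{\log n / n})$. By Claim \ref{claim:totalCrelation}, $\TC_G(T) = \sum_{i<j}\mathbf{1}[(i,j)\in E]\cdot c^T_{ij}$, where $c^T_{ij} := |\myleaves(T[\LCA_T(i,j)])|-2 \in [0,n-2]$ is a tree-dependent but $G$-independent coefficient.

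First, I would show that $\BC(G)$ concentrates around $\mathbb{E}[\BC(G)]$. Since $\BC(G) = \sum_{i<j<k}\mintricost_G(i,j,k)$ is a function of the $\binom{n}{2}$ independent Bernoulli edge indicators, and each edge participates in only $n-2$ triplets with per-triplet min-cost in $[0,2]$, flipping a single edge changes $\BC(G)$ by at most $2(n-2)$. McDiarmid's inequality then gives $\Pr[|\BC(G)-\mathbb{E}[\BC(G)]|>t]\le 2\exp(-\Omega(t^2/n^4))$. A direct calculation yields
\[
\mathbb{E}[\mintricost_G(i,j,k)] = \edgeP_{ij}\edgeP_{ik} + \edgeP_{ij}\edgeP_{jk} + \edgeP_{ik}\edgeP_{jk} - \edgeP_{ij}\edgeP_{ik}\edgeP_{jk} \;\ge\; 2 p_{\min}^2,
\]
so $\mathbb{E}[\BC(G)] = \Omega(n^3 p_{\min}^2) = \omega(n^2 \log n)$. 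Choosing $t = \Theta(n^2\sqrt{\log n})$ gives $t = o(\mathbb{E}[\BC(G)])$ with failure probability $n^{-\Omega(1)}$, so $\BC(G) = (1+o(1))\mathbb{E}[\BC(G)]$ w.h.p.

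Next, I would establish \emph{uniform} concentration of $\TC_G(T)$ around $\TC_{\expG}(T)$ over all binary HC-trees. For any fixed $T$, the sum $\sum_{i<j}c^T_{ij} = \Dcost_{K_n}(T) - 2\binom{n}{2} = \Theta(n^3)$ is the same for every binary tree on the clique, so $\sum_{i<j}(c^T_{ij})^2 \le (n-2)\sum_{i<j}c^T_{ij} = O(n^4)$, and Hoeffding's inequality yields $\Pr[|\TC_G(T)-\TC_{\expG}(T)|>t]\le 2\exp(-\Omega(t^2/n^4))$. The number of binary HC-trees on $n$ leaves is $(2n-3)!! = 2^{\Theta(n\log n)}$; taking $t = C\,n^{5/2}\sqrt{\log n}$ for a sufficiently large constant $C$ keeps the union bound over all such trees at $n^{-\Omega(1)}$. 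On the other hand $\TC_{\expG}(T) \ge p_{\min}\cdot\Theta(n^3) = \omega(n^{5/2}\sqrt{\log n})$, so $t/\TC_{\expG}(T) = o(1)$ uniformly in $T$. Hence w.h.p., $\TC_G(T) = (1+o(1))\TC_{\expG}(T)$ for every binary HC-tree $T$ simultaneously.

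Applying this uniform bound to both $T^*$ (optimal for $G$) and $\expT^*$ (optimal for $\expG$), optimality gives the sandwich
\[
\TC_G(T^*) \;\le\; \TC_G(\expT^*) \;=\; (1+o(1))\,\TC_{\expG}(\expT^*) \;\le\; (1+o(1))\,\TC_{\expG}(T^*) \;=\; (1+o(1))\,\TC_G(T^*),
\]
forcing $\TC_G(T^*) = (1+o(1))\,\TC_{\expG}(\expT^*)$. Dividing by $\BC(G) = (1+o(1))\mathbb{E}[\BC(G)]$ from the first step then proves the claim. The main obstacle is the uniform concentration step: the Hoeffding bound for a single tree does not immediately apply to $T^*$, which depends on $G$, so we must survive a union bound over the $2^{\Theta(n\log n)}$ binary HC-trees. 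This is precisely what the hypothesis $\edgeP_{ij} = \omega(\sqrt{\log n/n})$ buys us---it pushes the expected cost $\TC_{\expG}(T)$ just above the deviation threshold that the union bound forces us to accept.
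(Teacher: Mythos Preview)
Your proof is correct and largely mirrors the paper's argument. The total-cost half is essentially identical: the paper also applies Hoeffding's inequality to a fixed tree (using $\sum_{i<j}(c^T_{ij})^2 \le \binom{n}{2}n^2$, matching your $O(n^4)$ bound), takes a union bound over the $2^{\Theta(n\log n)}$ binary HC-trees, and finishes with exactly the same optimality sandwich $\TC_G(T^*) \le \TC_G(\expT^*) = (1+o(1))\TC_{\expG}(\expT^*) \le (1+o(1))\TC_{\expG}(T^*)$.

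The one genuine methodological difference is in the base-cost step. The paper writes $\BC(G)=\sum_{i<j<k} Y_{ijk}$ with $Y_{ijk}=\mintricost_G(i,j,k)$ and, because the $Y_{ijk}$'s are dependent (two triplets sharing an edge are correlated), invokes Janson's large-deviation inequality for sums over a dependency graph of maximum degree $3n-9$. You instead view $\BC(G)$ as a function of the independent edge indicators and apply McDiarmid with Lipschitz constant $2(n-2)$. Your route is more elementary and perfectly adequate for the theorem as stated. What the paper's Janson argument buys is a slightly sharper standalone result: their base-cost concentration (their Proposition on $\BC(G)$) goes through under the weaker assumption $\edgeP_{ij}=\omega(\log n/n)$, whereas your McDiarmid bound needs $\mathbb{E}[\BC(G)]=\omega(n^2\sqrt{\log n})$, which uses the full $\omega(\sqrt{\log n/n})$ hypothesis. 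Since the total-cost union bound already forces $\omega(\sqrt{\log n/n})$, this extra slack is not needed for the theorem itself, so your simpler argument loses nothing here.
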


	\noindent Hence the otpimal cost of a randomly generated graph concentrates around a quantity. 
	However, 
	this quantity $\frac{\TC_{\expG}(\expT^*)}{\mathbb{E}[\BC(G)]}$ is \emph{different} from the optimal \newcost{} for the expectation graph $\expG$, which would be $\RC^*_{\expG} =  \frac{\TC_{\expG}(\expT^*)}{\BC(\expG)}$. 
	Rather, $\mathbb{E}[\BC(G)]$ is \emph{sensitive} to the specific random graph $G$ sampled. 
	We give two examples below. 
	A Erd\H{o}s-R\'enyi random graph $G = (n, p)$ is generated by the edge probability matrix $\edgeP$ with $\edgeP_{ij} = p$ for all $i, j\in [1,n]$. 
	The probability matrix $\edgeP$ for the planted bisection model $G=(n, p, q)$, $p > q$, is  
	such that (1) $\edgeP_{i,j} = p$ for $i, j\in [1, n/2]$ or $i, j \in (n/2, n]$; (2) $\edgeP_{i,j} = q$ otherwise. 
	See Appendix \ref{app:erdos_app} and \ref{app:bisection_app} for the proofs of these results.
	\begin{corollary} \label{coro:1}
		For a Erd\H{o}s-R\'enyi random graph $G = (n, p)$, where $p = \omega\left(\sqrt{\frac{\log n}{n}}\right)$, w.h.p.,  the optimal HC-tree has ratio-cost $\RC^*_G = (1 + o(1)) \frac{2}{3p - p^2}$.
	\end{corollary}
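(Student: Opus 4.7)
}
The strategy is to apply Theorem~\ref{thm:randomgraph} and then evaluate the two quantities on its right-hand side in closed form for the Erd\H{o}s--R\'enyi case. Since $\edgeP_{ij}=p$ for every pair and $p = \omega(\sqrt{\log n/n})$ satisfies the hypothesis of Theorem~\ref{thm:randomgraph}, we obtain, w.h.p.,
$$\RC^*_G \;=\; (1+o(1))\,\frac{\TC_{\expG}(\expT^*)}{\mathbb{E}[\BC(G)]},$$
so it only remains to compute $\TC_{\expG}(\expT^*)$ and $\mathbb{E}[\BC(G)]$ explicitly.

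First I would compute $\TC_{\expG}(\expT^*)$. The expectation-graph $\expG$ is the complete graph on $n$ vertices with all edge weights equal to $p$; this is the weighted clique case discussed right after Definition~\ref{def:perfectHC}. For any binary HC-tree $T$ and any triplet $\{i,j,k\}$, exactly one of the three non-degenerate relations holds, and the corresponding $\triplecost_{T,\expG}(i,j,k)$ is the sum of two edge weights, which equals $2p$. Therefore every binary HC-tree achieves the same total cost
$$\TC_{\expG}(\expT^*) \;=\; 2p\binom{n}{3}.$$

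Next I would compute $\mathbb{E}[\BC(G)]$ by linearity of expectation, evaluating $\mathbb{E}[\mintricost_G(i,j,k)]$ on a single triplet. Since $G$ is unweighted, the three candidate sums $w_{ij}+w_{ik}$, $w_{ij}+w_{jk}$, $w_{ik}+w_{jk}$ each lie in $\{0,1,2\}$, and their minimum is determined by how many of the three edges are present: if all three are present (probability $p^{3}$), the minimum equals $2$; if exactly two are present (probability $3p^{2}(1-p)$), the minimum equals $1$; otherwise it equals $0$. Hence
$$\mathbb{E}[\mintricost_G(i,j,k)] \;=\; 2p^{3} + 3p^{2}(1-p) \;=\; 3p^{2} - p^{3},$$
and $\mathbb{E}[\BC(G)] = (3p^{2}-p^{3})\binom{n}{3}$.

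Finally, substituting the two closed forms into the ratio from Theorem~\ref{thm:randomgraph} and cancelling $\binom{n}{3}$ and one factor of $p$ gives
$$\RC^*_G \;=\; (1+o(1))\,\frac{2p\binom{n}{3}}{(3p^{2}-p^{3})\binom{n}{3}} \;=\; (1+o(1))\,\frac{2}{3p-p^{2}},$$
which is exactly the claimed bound. There is no real obstacle here beyond verifying the triplet-type probabilities in the computation of $\mathbb{E}[\BC(G)]$; all the heavy lifting (the concentration of $\TC_G(T^*)$ and $\BC(G)$ around their expectations) has already been absorbed into Theorem~\ref{thm:randomgraph}.
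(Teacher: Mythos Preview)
Your proposal is correct and follows essentially the same route as the paper's own proof: apply Theorem~\ref{thm:randomgraph}, compute $\TC_{\expG}(\expT^*)=2p\binom{n}{3}$ using that $\expG$ is a uniformly weighted clique, compute $\mathbb{E}[\BC(G)]=(2p^{3}+3p^{2}(1-p))\binom{n}{3}$ by linearity of expectation over triplets, and simplify the ratio. If anything, your write-up is slightly more explicit in checking the hypothesis of Theorem~\ref{thm:randomgraph} and in justifying why every binary tree is optimal for $\expG$.
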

	
	\begin{corollary} \label{coro:2}
		For a planted bisection model $G = (n, p, q)$, where $p > q = \omega\left(\sqrt{\frac{\log n}{n}}\right)$, w.h.p., the optimal HC-tree has ratio-cost $\RC^*_G = (1 + o(1))\frac{2p + 6q}{3(p+q)^2 - p^3 - 3pq^2}$. 
		
		Note that the $\RC^*_G$ decreases as $p$ increases. When $q < \frac{1}{3} p$, $\RC^*_G$ increases as $q$ increases, otherwise, it decreases as $q$ increases.
	\end{corollary}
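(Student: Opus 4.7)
The plan is to invoke Theorem \ref{thm:randomgraph}, which gives $\RC_G^* = (1+o(1))\,\TC_{\expG}(\expT^*)/\mathbb{E}[\BC(G)]$ w.h.p.\ (its hypothesis is met since $\edgeP_{ij}\in\{p,q\}$ and $p > q = \omega(\sqrt{\log n/n})$), and then to compute both quantities in closed form. I first need an optimal tree $\expT^*$ for the expectation graph $\expG$, which is the complete weighted graph on $n$ vertices with intra-block weight $p$ and inter-block weight $q$. Writing $w_{ij} = q + (p-q)\,\mathbb{1}[v_i,v_j\text{ in the same block}]$, I decompose $\Dcost_{\expG}(T) = q\cdot\Dcost_{K_n}(T) + (p-q)\cdot\Dcost_{K_{n/2}\sqcup K_{n/2}}(T)$; the first summand is independent of $T$ (Dasgupta's cost on a unit-weight clique is tree-invariant, as noted in Section \ref{subsec:Dasgupta}) and the second is minimized by any tree that separates the two planted halves at its root (by the disconnected-components behavior from \cite{Dasgupta_2016}). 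Hence $\expT^*$ may be taken to be any tree that bisects along the planted partition at the root and uses arbitrary subtree structure inside each block.

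For the numerator, I use Claim \ref{claim:totalCrelation}: the $(n/2)^2$ cross-block pairs each contribute $q(n-2)$ since their LCA in $\expT^*$ is the root, and each block-internal subtree is a uniformly-weighted clique contributing $2p\binom{n/2}{3}$ (by the clique-invariance of $\TC$ noted after Claim \ref{claim:totalCrelation}); summing gives $\TC_{\expG}(\expT^*) = 4p\binom{n/2}{3} + \tfrac{n^2(n-2)q}{4} = \tfrac{n^3}{24}(2p+6q)(1+o(1))$. For the denominator I partition the $\binom{n}{3}$ triplets by how they meet the blocks into \emph{type A} (all three in one block, count $2\binom{n/2}{3}\sim n^3/24$) and \emph{type B} (a 2-1 split, count $2\binom{n/2}{2}(n/2)\sim n^3/8$). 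Type-A triplets have three i.i.d.\ Bernoulli$(p)$ edge indicators, and a short case analysis on the number of present edges yields $\mathbb{E}[\mintricost]=3p^2-p^3$; type-B triplets have one Bernoulli$(p)$ intra-block edge and two Bernoulli$(q)$ cross edges, and an analogous eight-case enumeration yields $\mathbb{E}[\mintricost]=2pq+q^2-pq^2$. Combining with the multiplicities gives $\mathbb{E}[\BC(G)] = \tfrac{n^3}{24}\bigl[3(p+q)^2 - p^3 - 3pq^2\bigr](1+o(1))$, and taking the ratio yields the stated formula.

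For the monotonicity remarks I differentiate $f(p,q) = \tfrac{2p+6q}{3(p+q)^2 - p^3 - 3pq^2}$ directly. Writing $D$ for the denominator and $N$ for the numerator of $f$, a short calculation shows $D\cdot\partial_q N - N\cdot\partial_q D = 6(p-3q)(p+q)(1-p)$, so on $\{0<q<p<1\}$ the sign of $\partial_q f$ equals the sign of $p-3q$, giving $\partial_q f > 0$ for $q < p/3$ and $\partial_q f < 0$ for $q > p/3$ as claimed. For $\partial_p f$, an analogous (less tidy) computation shows that $2D - N\,\partial_p D$ is strictly negative throughout the same region, so $\RC_G^*$ strictly decreases in $p$. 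The main obstacle I anticipate is the bookkeeping of asymptotic errors: both $\TC_{\expG}(\expT^*)$ and $\mathbb{E}[\BC(G)]$ carry lower-order corrections from the $\binom{n/2}{3}$- and $\binom{n/2}{2}$-expansions, and to absorb them into the single $(1+o(1))$ factor one must check they remain negligible after multiplying by the smallest relevant powers of $p$ and $q$, which is precisely where the growth hypothesis $q=\omega(\sqrt{\log n/n})$ is invoked.
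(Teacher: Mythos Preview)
Your proposal is correct and follows essentially the same route as the paper: invoke Theorem~\ref{thm:randomgraph}, compute $\TC_{\expG}(\expT^*)$ and $\mathbb{E}[\BC(G)]$ by splitting triplets according to how they meet the two blocks, and then differentiate the resulting closed form. Two places where you actually do a bit \emph{more} than the paper are worth noting. First, your decomposition $\Dcost_{\expG}(T)=q\,\Dcost_{K_n}(T)+(p-q)\,\Dcost_{K_{n/2}\sqcup K_{n/2}}(T)$ is a clean, fully rigorous justification that the planted bisection is optimal for $\expG$; the paper simply asserts this. Second, your factorization $D\,\partial_q N-N\,\partial_q D=6(p-3q)(p+q)(1-p)$ is tidier than the paper's approach, which expands the same numerator and then locates the roots $-p$ and $p/3$ of the resulting quadratic in $q$; your version reads off the sign change at $q=p/3$ directly. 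Your final caveat is slightly misplaced: the $O(1/n)$ relative errors from expanding $\binom{n/2}{3}$ and $\binom{n/2}{2}$ are $o(1)$ unconditionally (the $p,q$ factors cancel in the ratio of leading to subleading terms), so the growth hypothesis $q=\omega(\sqrt{\log n/n})$ is needed only to invoke Theorem~\ref{thm:randomgraph}, not for the binomial bookkeeping.
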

	
	Note that while the cost of an optimal tree w.r.t. Dasgupta's cost (and also our total-cost) always concentrates around the cost for the expectation graph, as 
	we mentioned above, the value of the \newcost{} depends on the sampled random graph itself.
	For Erd\H{o}s-R\'enyi random graphs, larger $p$ value indicates tighter connection among nodes, making it more clique-like, andthus  $\rho_G^*$ decreases till $\rho_G^*=1$ when $p=1$. (In contrast, note that the expectation-graph $\expG_p$ is a complete graph where all edges have weight $p$; thus it always has $\rho_{\expG_p}^* =1$ no matter what $p$ value it is.) 
	
	For the planted bisection model, increasing $p$ value also strengthens in-cluster connections and thus $\rho_G^*$ decreases. Interestingly, increasing $q$ value when it is small hurts the clustering structure, because it adds more cross-cluster connections thereby making the two clusters formed by vertices with indices from $[1, \frac{n}{2}]$ and from $[\frac{n}{2}+1, n]$ respectively, less evident -- indeed, $\rho_G^*$ increases when $q$ increases for small $q$. 
	However, when $q$ is already relatively large (close to $p$), increasing it more makes the entire graph closer to a clique, and $\rho_G^*$ starts to decreases when $q$ increases for large $q$. Note that such a refined behavior (w.r.t probability $q$) does not hold for the original cost function by Dasgupta.

	\paragraph{Acknowledgements.}
	We would like to thank Sanjoy Dasgupta for very insightful discussions at the early stage of this work, which lead to the observation of the base-cost for a graph. The work was initiated during the semester long program of \emph{``Foundations of Machine Learning"} at Simons Insitute for the Theory of Computer in spring 2017. This work is partially supported by National Science Foundation under grants CCF-1740761, DMS-1547357, and RI-1815697. 
	
	\bibliographystyle{abbrv}
	\bibliography{references}

	\appendix
	
	\section{Missing details from Section \ref{sec:newcostfunction}}
	
	\subsection{Proof for Claim \ref{claim:totalCrelation}}
	\label{appendix:claim:totalCrelation}
	
	We prove the first equality. In particular, fix any pair of leaves $v_i$, $v_j$, and count how many times $w_{ij}$ is added for both sides. On the left-hand side, each time there is a node $k$ such that relation $\myorder{i}{k}{j}$, $\myorder{j}{k}{i}$ or $\{i | j | k\}$ holds, $w_{ij}$ will be added once. There are exactly $|\myleaves(T[\LCA(i, j)])| - 2$ number of such $k$'s, which is the same as how many time $w_{ij}$ will be added to the right-hand side. Hence the first two terms are equal.

	The second equality is straightforward after we plug in the definition of $\Dcost_G(T)$.
	
	\subsection{Proving that \texorpdfstring{$\BC(G) = 0 \Rightarrow \TC_G(T^*) = 0$}{base cost = 0 implies optimal total cost = 0}} \label{app:tc_bc_0}

	Let $G$ denote an input graph with $\BC(G) = 0$. This means that each triplet $\{i, j, k\}$, there can be at most one edge among them, as otherwise, $\BC(G)$ will not be zero. Hence the degree of each node in $G$ will be at most 1. 
	In other words, the collection of edges in $G$ form a matching of nodes in it. 
	Let $a_1, a_2, \cdots, a_n$, and $b_1, b_2, \cdots, b_n$ denote these matched nodes, where $(a_i, b_i) \in E$ and thus these two nodes are paired together. Let $c_1, c_2, \cdots, c_m$ denote the remaining isolated nodes.
	It is easy to verify that the tree $T$ shown in Figure \ref{fig:appendix_tc_bc_0} has $\TC_G(T) = 0$.
	\begin{figure}[ht]
		\centering 
		\includegraphics[width=6cm]{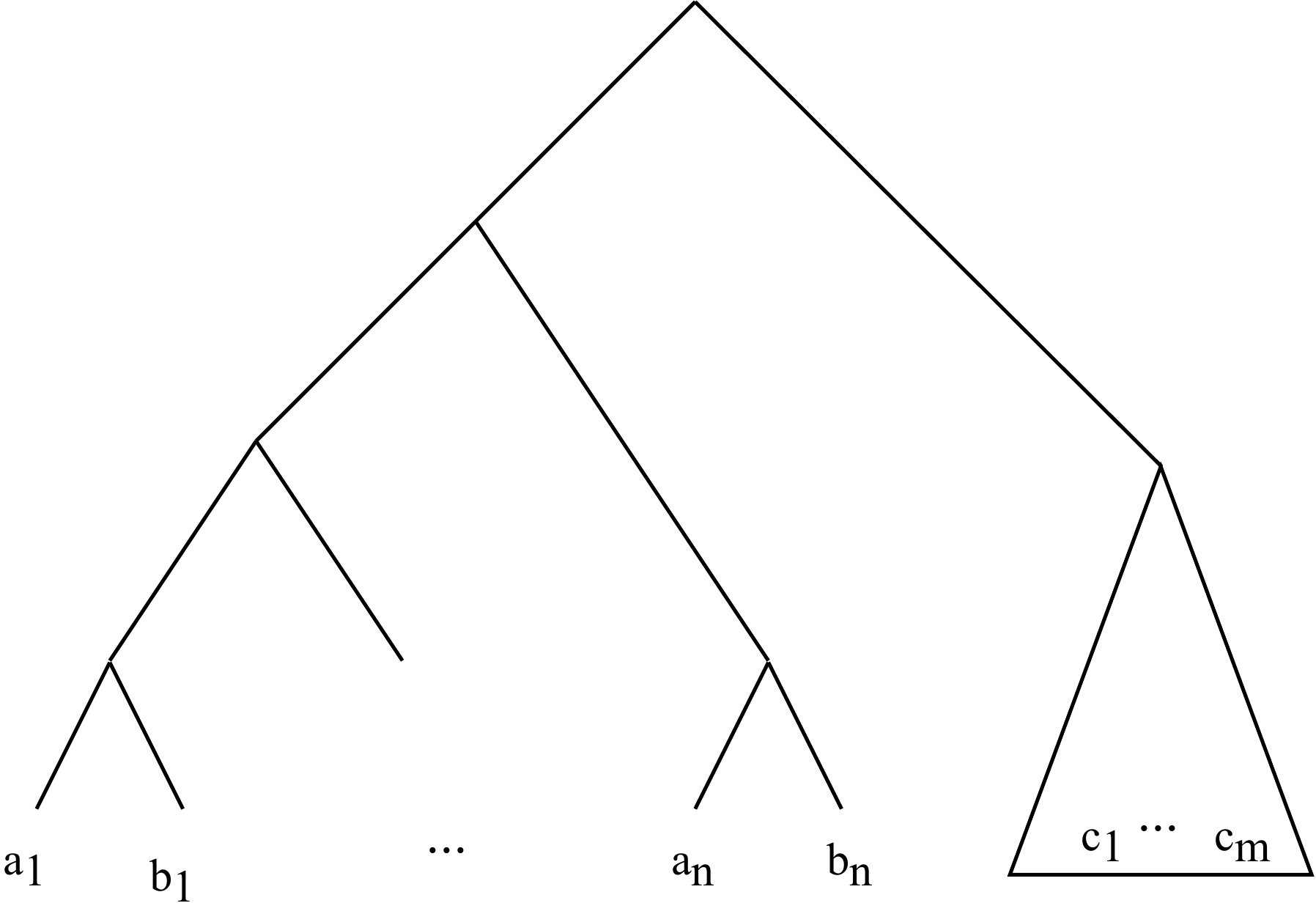}
		\caption{Tree $T$ with $\TC_G(T) = 0$}
		\label{fig:appendix_tc_bc_0}
	\end{figure}

	\subsection{Examples of \newcost{}s of some graphs}
	\label{appendix:examples:costs}
	
	Consider the two examples in Figure \ref{fig:graph_compare} (a). 
	The \emph{linked-stars} $G_1$ consists of two stars centered at $v_1$ and $v_{n/2+1}$ respectively, connected via edge $(v_1, v_{n/2+1})$. 
	For the linked-stars $G_1$, assuming that $n$ is even and by symmetry, we have that:
	\begin{align*}
	\BC(G_1) &= 2\sum_{i\neq j \neq k\in [1,n/2]} \mintricost(i,j,k) + 2\sum_{i\neq j \in [1, n/2], k \in [n/2+1, n]} \mintricost(i,j,k) \\
	&= 2 \sum_{j\neq k \in [2, n/2]} \mintricost(1, j, k) + 2 \sum_{j\in [2, n/2]} \mintricost(1, j, n/2+1) \\
	&= 2\cdot \binom{n/2-1}{2} + 2 (n/2-1) = \frac{n^2}{4} - 2. 
	\end{align*}
	On the other hand, consider the natural tree $T$ as shown in Figure \ref{fig:graph_compare} (b).
	It is easy to verify that $\TC_{G_1}(T)$ equals $\BC(G_1)$ as for any triplet $\triplecost_T = \mintricost$. 
	Hence $\rho_{G_1}^* = 1$, and the linked star has a perfect HC-structure.
	
	For the path graph $G_2$, it is easy to verify that $\BC(G_2) = n-2$. 
	However, as shown in \cite{Dasgupta_2016}, the optimal tree $T^*$ is the complete balanced binary tree and $\Dcost(T^*) = n \log_2 n + O(n)$, meaning that $\TC_{G_2}(T^*) = \Dcost(T^*) - 2|E| = n \log_2 n \pm O(n)$ (via Claim \ref{claim:totalCrelation}). It then follows that 
	$\rho_{G_2}^* = (1+o(1)) \log n$. Intuitively, a path does not possess much of hierarchical structure. 
	
	\begin{figure*}[ht] 
		\centering
		\begin{subfigure}[t]{0.6\textwidth}
			\centering
			\fbox{\includegraphics[height=1.2in]{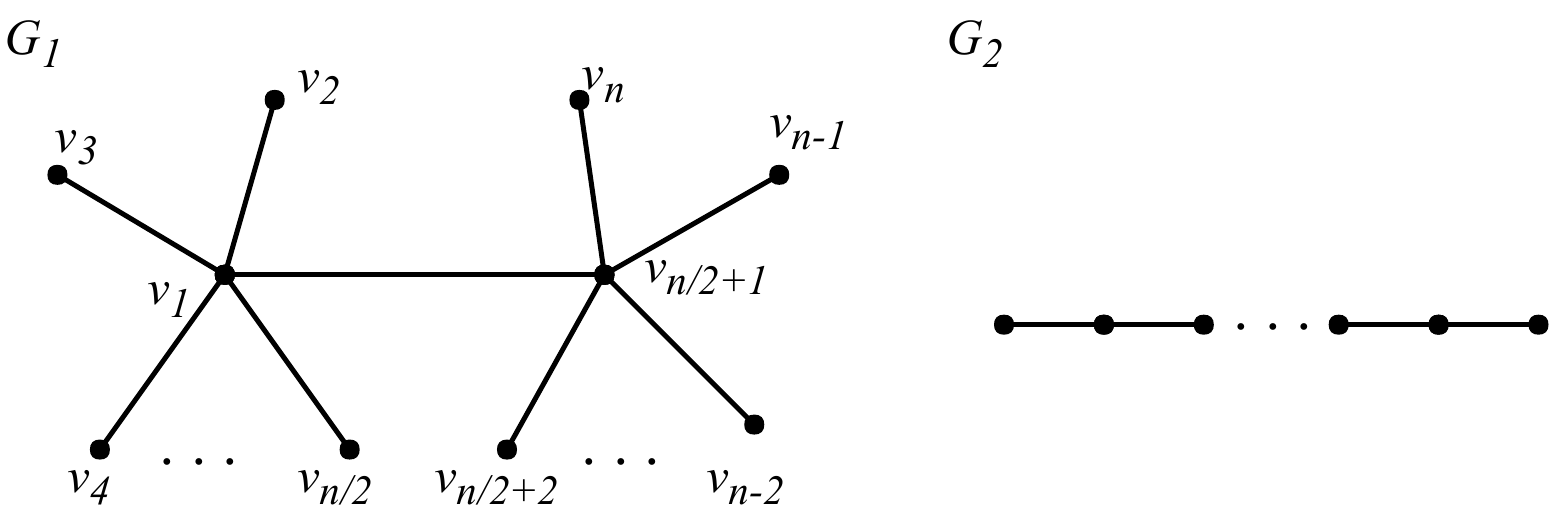}}
			\caption{Linked-star $G_1$ and the path graph $G_2$.}
		\end{subfigure}
		~~~~~ 
		\begin{subfigure}[t]{0.3\textwidth}
			\centering
			\fbox{\includegraphics[height=1.2in]{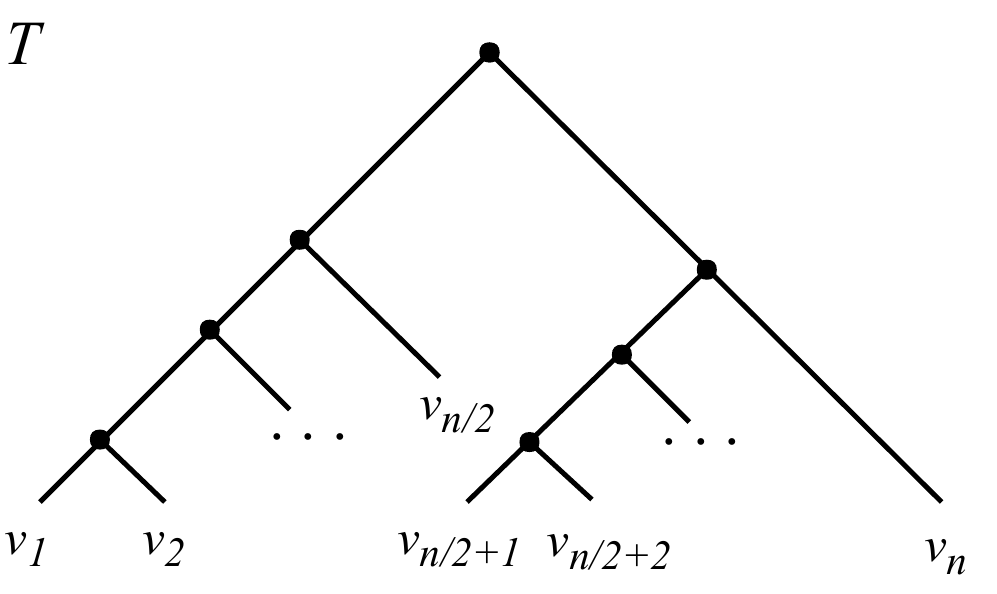}}
			\caption{One optimal tree $T^*$ for linked star $G_1$.
			}
		\end{subfigure}
		\caption{Comparison between a linked star and a path.}
		\label{fig:graph_compare}
	\end{figure*} 
	
	Finally, it is easy to check that, the graph whose similarity (weight) matrix looks like the edge probability matrix for a planted partition as shown in Figure \ref{fig:planted_partition}, also has perfect HC-structure. In particular, as long as $p \ge q$, the tree in Figure \ref{fig:planted_partition} is optimal, and the cost of this tree equals to the base-cost $\BC(G) = 4\cdot p\cdot \binom{n/2}{3} + 2 \cdot (p+q) \cdot n \cdot \binom{n/2}{2}$.  
	More discussions on random graphs are given in Section \ref{sec:random_graphs}. 
	
	\begin{figure}[ht]
		\centering 
		\includegraphics[width=0.55\textwidth]{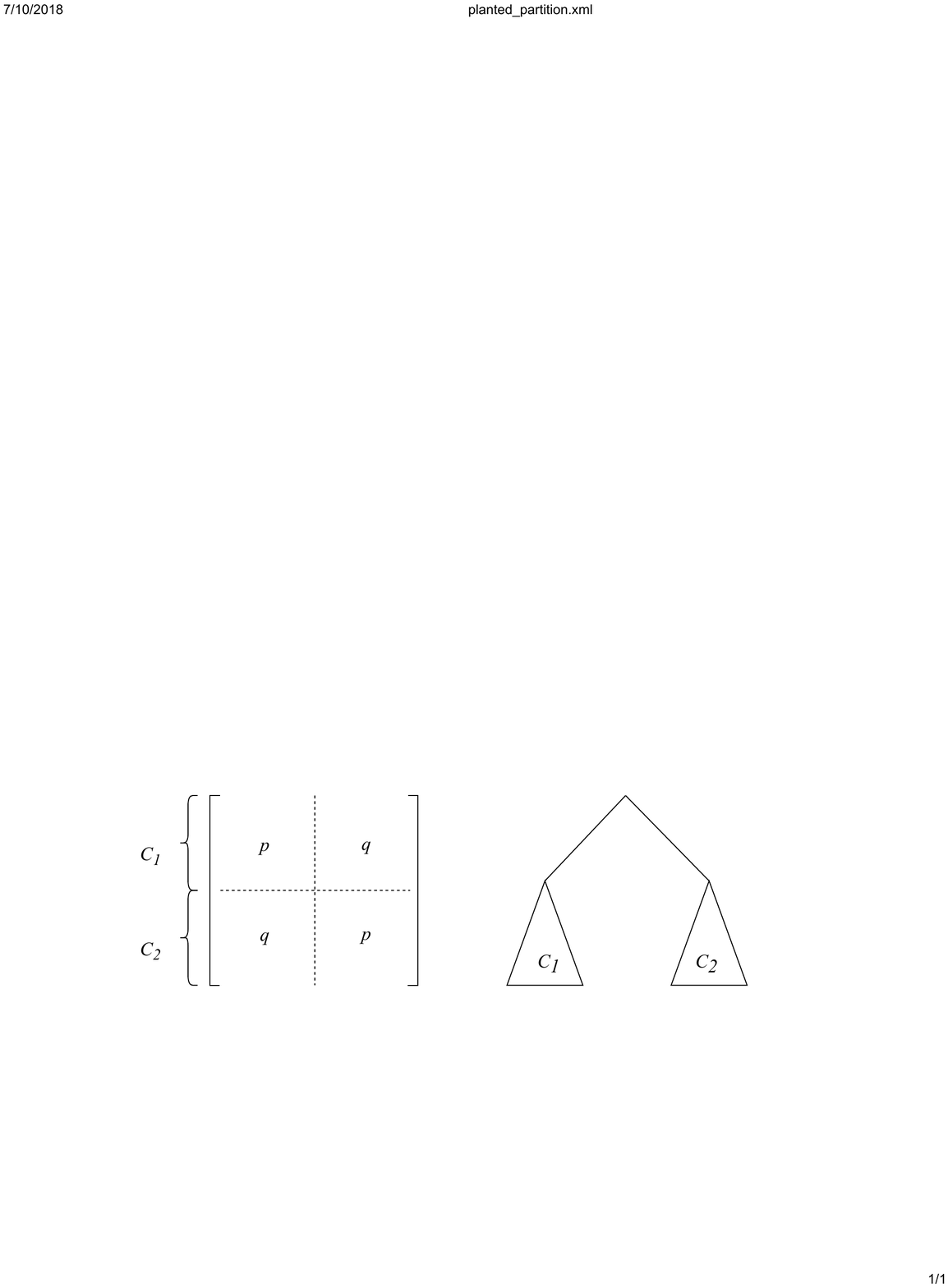}
		\caption{If the weight matrix (on the left) is the same as the probability matrix for a planted partition (where $p > q$), then it has perfect HC-structure. The optimal tree is shown on the right, which first arbitrarily merges all nodes in $C_1$ and $C_2$ individually via any binary tree. }
		\label{fig:planted_partition}
	\end{figure}
	
	\subsection{Proof of Theorem \ref{thm:rhoranges}}
	\label{appendix:thm:rhoranges}

	We prove (i) by induction. 
	Consider the base case where we have a graph $G$ with $n = 3$ and weights $w_{12}, w_{13}$ and $w_{23}$. Assume w.l.o.g that $w_{12}$ is the largest among the three weights. 
	Then $\BC(G) = w_{13}+w_{23}$. 
	Now consider the tree $T$ that merges $v_1$ and $v_2$ first, then merges with $v_3$; for this tree $\TC_G(T) = w_{13} + w_{23}$. Hence $\rho_G^* = 1$ no matter what weights it have. Thus the claim holds for this base case. 
	
	Now assume the claim holds for all $n' \le s$ where $s \ge 3$, and we aim to prove it for a graph $G$ with $n = s+1$. 
	Arrange the indices so that the weight $w_{s, s+1}$ is the largest among all weights. 
	Consider the following HC-tree on nodes of $G$: First, merge $v_s$ and $v_{s+1}$. Next, construct an optimal tree $T'$ for the subgraph $G[V']$ induced by nodes $V' = V \setminus\{s, s+1\} = \{v_1, \ldots, v_{s-1}\}$. Finally, merge  
	$T'$ with the subtree containing $v_s$ and $v_{s+1}$. See figure \ref{fig:bound_induction}. 
	
	\begin{figure}[ht]
		\centering 
		\includegraphics[width=0.37\textwidth]{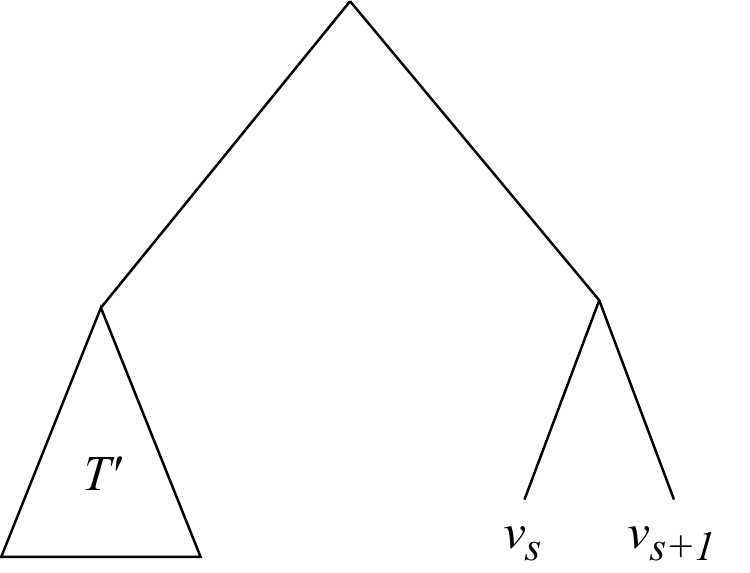}
		\caption{Merge $T'$ with $v_s$ and $v_{s+1}$}
		\label{fig:bound_induction}
	\end{figure}

	Let's for the time being assume that $s-1 \ge 3$ (we will discuss the case when $s-1 = 2$ later). Then by induction on $G[V']$ with $s-1$ nodes, we have: 
	\begin{align}
	\frac{\TC_{G[V']}(T')}{\BC(G[V'])} = \rho_{G[V']}^* \le (s-1)-2 ~~\Rightarrow~~ \TC_{G[V']}(T') \le (s-3) \cdot \BC(G[V']). \label{eqn:inductionhyp}
	\end{align}
	Furthermore, note that
	\begin{align}
	\TC_G(T) &= \TC_{G[V']}(T') + \sum_{i\in[1,s-1]} (w_{is}+w_{i,s+1}) + \sum_{i\neq j \in [1, s-1]} [(w_{is}+w_{i,s+1}) + (w_{js}+w_{j,s+1})] \label{eqn:totalone}
	\end{align}
	For each $i\in[1,s-1]$ (i.e, a node in $T'$), the weight $w_{is}$ will be counted $s-1$ times in the last two terms of the RHS(Eqn. \ref{eqn:totalone}) (where RHS stands for right hand-side); similarly for $w_{i,s+1}$. Combining this with (Eqn. \ref{eqn:inductionhyp}), it then follows that
	\begin{align}
	\TC_G(T) &= \TC_{G[V']}(T') + (s-1)\cdot \sum_{i\in[1,s-1]} (w_{is}+w_{i,s+1}) \\
	&\le (s-3)\cdot \BC(G[V']) + (s-1)\cdot \sum_{i\in[1,s-1]} (w_{is}+w_{i,s+1}). \label{eqn:totaltwo} 
	\end{align}
	\begin{align}
	\BC(G) &= \BC(G[V']) + \sum_{i \in [1, s-1]} (w_{is}+w_{i(s+1)}) + \sum_{i\neq j \in [1, s-1]} [\mintricost(i, j, s) + \mintricost(i, j, s+1)] \nonumber \\
	&\ge \BC(G[V']) + \sum_{i \in [1, s-1]} (w_{is}+w_{i(s+1)}); \label{eqn:baseone} 
	\end{align}
	Combining (Eqn. \ref{eqn:totaltwo}, \ref{eqn:baseone}), we then have that 
	\begin{align}
	\rho_G^* \le \frac{\TC_G(T)}{\BC(G)} \le s-1 = n-2. \label{eqn:rhobound}
	\end{align}
	If $s-1 = 2$, then the first term in both (Eqn. \ref{eqn:totaltwo}) and (Eqn. \ref{eqn:baseone}) vanishes. That is, 
	$$\TC_G(T) = (s-1)\cdot \sum_{i\in[1,s-1]} (w_{is}+w_{i,s+1})~~\text{while}~~ \BC(G) \ge \sum_{i \in [1, s-1]} (w_{is}+w_{i(s+1)}). $$
	Hence the bound in (Eqn. \ref{eqn:rhobound}) still holds. 
	It then follows from induction that $\rho_G^*$ holds for any similarity graph $G$ with $n\ge 3$ nodes; which proves claim (i). 
	
	We now prove claim (ii). 
	First, for node $v_i$, let $d_i$ denote its degree in $G$. 
	For any distinct triplet $i, j, k \in [1,n]$, its induced subgraph in $G$ can have 0, 1, 2, or 3 edges. We call the triplet a \emph{wedge} if its induced subgraph has 2 edges, and a \emph{triangle} if it has 3. 
	It is easy to see that only wedges and triangles will contribute to the base-cost, where a wedge contributes to cost 1, and a triangle contributes to cost 2. 
	\begin{align}
	\BC(G) &= \#wedges + 2 \cdot \#triangles ~\ge~  \frac{2}{3} \sum_i {\binom{d_i}{2}} \nonumber \\
	& =  \frac{2}{3} [\sum_i \frac{d_i^2}{2} - m] 
	~ \ge~  \frac{2}{3} [\frac{2m^2}{n} - m] 
	~ = ~ \frac{2m(2m - n)}{3n}. \label{eqn:baseundirected}
	\end{align}
	To obtain the first inequality in the above derivation, we use the fact that for each node $v_i$, $\binom{d_i}{2}$ counts the total number of wedges having $i$ as the apex, as well as the total number of triangles containing $i$. 
	However, if there is a triangle, it will be counted three times in the summation (while for a wedge will be counted exactly once). 
	Since a triangle will incur a cost of $2$, the first inequality thus follows. 
	The second inequality in (Eqn. \ref{eqn:baseundirected}) essentially follows from the Cauchy-Schwarz inequality and that $\sum_i d_i = 2m$. 
	
	On the other hand, by Claim \ref{claim:totalCrelation}, a trivial bound for $\TC_G(T^*)$, where $T^*$ is an optimal HC-tree, is:
	$$\TC_G(T^*) = \sum_{(i,j)\in E} (|\myleaves(T[\LCA(i, j)]) - 2) \le m \cdot (n-2). $$
	Combining this with (Eqn. \ref{eqn:baseundirected}), we already can obtain a bound that is asymptotically the same as the one in claim (ii). 
	Below we show the bound on the optimal total-cost can be improved to $\TC_G(T^*) \le 2m(n-2)/3$. This leads to the tighter upper bound $\rho_G^* \le \frac{n^2-2n}{2m-n}$ as claimed in (ii). (In particular, we note that for this claimed (tighter) bound, when $m = \binom{n}{2}$, we get $\rho_G^* = 1$ as expected.)

	\paragraph{Proof that $\TC_G(T^*) \le 2m(n-2)/3$ for unweighted graph.} 
	Given $G = (V, E)$, for $\BC_{G}$, Eqn (\ref{eqn:baseundirected}) already shows that 
	\[
	\BC(G) \geq \frac{2m(2m - n)}{3n}.
	\]
	Now, we want to show that $\TC_{G}^* \leq \frac{2m(n - 2)}{3}$. For a fixed tree shape $T$, let $L$ denote the set of leaf nodes in $T$. There are $n!$ one-to-one correspondences (permutations) from $V$ to $L$. Let $\sigma$ denote a certain correspondence, and $\Sigma$ the set of all possible correspondences. Then we take average over all correspondences, where $T_{\sigma}$ denotes the tree $T$ under a certain correspondence $\sigma$,
	\begin{eqnarray*}
		\TC_{G}^* \le \TC_G(T) & \leq & \frac{1}{n!} \sum_{\sigma \in \Sigma}\TC_{G} (T_{\sigma}) 
		=  \frac{1}{n!} \sum_{\sigma \in \Sigma} \sum_{(i, j) \in E} (|\myleaves(T[\LCA(\sigma(i), \sigma(j))])| - 2) \\
		& = & \frac{1}{n!} \sum_{\sigma \in \Sigma} \sum_{l_1, l_2 \in L} (|\myleaves(T[\LCA(l_1,l_2)])| - 2) \mathbf{1}_{(\sigma^{-1}(l_1), \sigma^{-1}(l_2)) \in E} \\
		& = & \frac{1}{n!} \sum_{l_1, l_2 \in L}  (|\myleaves(T[\LCA(l_1,l_2)])| - 2) \cdot \sum_{\sigma \in \Sigma} \mathbf{1}_{(\sigma^{-1}(l_1), \sigma^{-1}(l_2)) \in E} , 
	\end{eqnarray*} 
	where $\sigma^{-1}$ is the inverse of $\sigma$, and it is a one-to-one correspondence from $L$ to $V$. $\mathbf{1}_{(\sigma^{-1}(l_1), \sigma^{-1}(l_2)) \in E}$ is the indicator function, it equals 1 if there is an edge between $\sigma^{-1}(l_1)$ and $\sigma^{-1}(l_2)$. Now, focus on the second summation from the last line of the above equation. The indicator function will have value 1 only when $\sigma^{-1}$ maps $l_1$ and $l_2$ back to two end points of an edge in $E$. There are $m$ edges in $E$ which provide $2m$ possibilities. For other $n - 2$ leaf nodes, $\sigma^{-1}$ can be arbitrary, so there are $(n-2)!$ possibilities. Therefore,  
	\[
	\sum_{\sigma \in \Sigma} \mathbf{1}_{(\sigma^{-1}(l_1), \sigma^{-1}(l_2)) \in E} = 2m (n - 2)! .
	\]
	Plug it back to equation , we have 
	\begin{eqnarray}
		\TC_{G}^* & \leq & \frac{1}{n!} \sum_{l_1, l_2 \in L}  (|\myleaves(T[\LCA(l_1,l_2)])| - 2) \cdot 2m(n-2)! \nonumber \\
		&=& \frac{2m}{n(n-1)} \sum_{l_1, l_2 \in L}  (|\myleaves(T[\LCA(l_1,l_2)])| - 2) 
		\label{eqn:tcopt}
	\end{eqnarray}
	Now observe that $\sum_{l_1, l_2 \in L}  (|\myleaves(T[\LCA(l_1,l_2)])| - 2)$ is simply $\TC_{G_c} (T)$ the total cost of the tree $T$ w.r.t. the complete graph $G_c$ with unit edge-weight. It then follows that 
	$$\sum_{l_1, l_2 \in L}  (|\myleaves(T[\LCA(l_1,l_2)])| - 2) = 2 \binom{n}{3}.$$ 
	Plugging this to Eqn (\ref{eqn:tcopt}), we have that 
	$$\TC_G^* = \TC_G(T^*) \le \TC_G(T) \le \frac{2m}{n(n - 1)} \cdot 2\cdot \binom{n}{3} = \frac{2m(n - 2)}{3}. $$
	
	Putting this together with the bound that $\BC(G) \geq \frac{2m(2m - n)}{3n}$, we thus obtain that $\rho_G^* \le \frac{n^2-2n}{2m-n}$, which finishes the proof of Claim (ii) of Theorem \ref{thm:rhoranges}. 
	
	\subsection{Relation to ground-truth input graph of \cite{Cohen_SODA2018}}
	\label{appendix:groundtruthinput}
	
	We will briefly review the concept of ground-truth input graph of \cite{Cohen_SODA2018} below, and then present missing proofs of our results in the main text. 
	
	Recall that a metric space $(X, d)$ is a \emph{ultrametric} if we have $d(x, y) \leq \max\{d(x, z), d(y, z)\}$  for any $x, y, z\in X$. 
	The ultrametric is a stronger version of metric, and has been used widely in modeling certain hierarchical tree structures (more precisely, the so-called dendograms), such as phylogenetic trees. 
	Intuitively, the authors of \cite{Cohen_SODA2018} consider a graph to be a ground-truth input if it is ``consistent'' with an ultrametrics in the following sense: 
	
	\begin{definition}[Similarity Graphs Generated from Ultrametrics \cite{Cohen_SODA2018}]
		A weighted graph $G = (V, E, w)$ is a \emph{similarity graph generated from an ultrametric}, if there exists an ultrametric $(V, d)$ on the node set $V$ and a non-increasing function $f: \mathbb{R}_{+} \rightarrow \mathbb{R}_{+}$, such that for every $x, y \in V$, $x \not= y$, we have $w(e) = f(d(x, y))$ for $e = (x,y)$ (note, if $(x,y) \notin E$, then $w(e) = 0$). 
	\end{definition}
	
	To see the connection of an ultrametric with a tree structure, they also provide an alternative way of viewing ground-truth input using generating trees. 
	
	\begin{definition}[Generating Tree \cite{Cohen_SODA2018}]
		Let $G = (V, E, w)$ be a weighted graph ($w(e) = 0$ if $e \notin E$). Let $T$ be a rooted binary tree with $|V|$ leaves and $|V| - 1$ internal nodes, where $\mathcal{N}$ and $L$ denote the set of internal nodes and the set of leaves of $T$, respectively. Let $\sigma : L \rightarrow V$ denote a bijection between leaves of $T$ and nodes in $G$. We say that $T$ is a \emph{generating tree for $G$}, if there exists a function $W : \mathcal{N} \rightarrow \mathbb{R}_{+}$, such that for $N_1, N_2 \in \mathcal{N}$, if $N_1$ appears on the path from $N_2$ to root, $W(N_1) \leq W(N_2)$. Moreover, for every $x, y \in V$, $w((x, y)) = W(LCA(\sigma^{-1}(x), \sigma^{-1}(y)))$.
	\end{definition}
	
	\begin{definition}[Ground-truth Input \cite{Cohen_SODA2018}]
		A graph $G$ is a ground-truth input if it is a similarity graph generated from an ultrametric. Equivalently, there exists a tree $T$ that is generating for $G$.
	\end{definition}
	
	\paragraph{Proof of Theorem \ref{thm:ground_truth_is_HC_perfect}.}	
	If $G$ is a ground-truth input, let $T$ be a generating tree for $G$ whose leaf set is exactly $V$. 
	We now show that $\rho_G(T) = 1$, meaning that $T$ must be an optimal HC-tree for $G$ with $\rho^*_G = \rho_G(T)$. 
	
	Indeed, consider any triplet $v_i, v_j, v_k \in V$, and assume w.l.o.g that $N_2 = \LCA(v_i, v_j)$ is a descendant of $N_1 =  \LCA(v_j, v_k)=  \LCA(v_i, v_j, v_k)$. 
	This means that $N_1$ is on the path from $N_2$ to the root of $T$, and by the definition of generating tree, it follows that $w(v_i, v_j) \geq w(v_i, v_k) = w(v_j, v_k)$, and nodes $v_i$ and $v_j$ are first merged in $T$. Therefore, for this triplet, $\triplecost_T(i,j,k) = \mintricost(i,j,k)$. Since this holds for all triplets, we have $\TC_G(T) = \BC(G)$, meaning that $\rho_G(T) = 1$.
	
	On the other hand, consider the graph (with unit edge weight) in Figure \ref{fig:linkage_not_working}, where it is easy to verify that the tree shown on the right satisfies $\rho_G(T) = 1$ and thus is optimal. 
	However, this graph {\bf is not} a ground-truth input as defined in \cite{Cohen_SODA2018}. In particular, in Proposition \ref{prop:groundtruth-unweighted} which we will prove shortly below, we show that a unit-weight graph is a ground-truth input if and only if each connected component of it is a clique. 
	This completes the proof of Theorem \ref{thm:ground_truth_is_HC_perfect}. 
	
	\paragraph{Proof of Proposition \ref{prop:groundtruth-unweighted}.} 
	Assume tree $T$ is the generating tree for our input unweighted graph $G$, which is a ground truth input. Let $C$ and $T_C$ denote a connected component of $G$, and the subtree of $T$ induced by the leaf set $C$, respectively. It is easy to check then $T_C$ is a generating tree for component $C$. Consider any \emph{connected} triplet $\{v_i, v_j, v_k\} \in C$, and assume w.l.o.g that $w_{ij} = w_{ik} = 1$. We now enumerate all possible relations between $v_i$, $v_j$ and $v_k$ in $T_C$; See figure \ref{fig:generating_tree_ijk}. 
	Note that by the definition of a generating tree, for each possible configuration in Figure \ref{fig:generating_tree_ijk}, it is necessary that $w_{jk} = 1$, meaning that there must be an edge between $v_j$ and $v_k$. 
	In other words, if there are two edges $v_jv_i$ and $v_iv_k$ in $C$, then the third edge $v_jv_k$ must also be present in $C$ and thus in $G$. 

	\begin{figure}[ht]
		\centering 
		\includegraphics[width=0.8\textwidth]{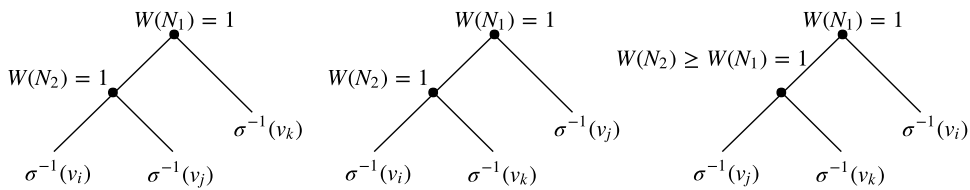}
		\caption{Three possible relations of $v_i$, $v_j$ and $v_k$ in tree $T_C$. In all cases, $w_{jk} = 1$.}
		\label{fig:generating_tree_ijk}
	\end{figure}
	
	Now for any two nodes $v_a$, $v_b$ from the same component $C$, there must be a path, say $v_a = v_{a_0}, v_{a+1}, \cdots, v_{a_k} = v_b$ connecting them. Using the observation above repeatedly, it is easy to see that there must be an edge between any two nodes $v_{a_i}$ and $v_{a_j}$ (including between $v_a$ and $v_b$). This can be made more precisely by an induction on the value of $k$. 
	Since this holds for any two nodes $v_a, v_b$ in the component $C$, $C$ must be a clique. Applying this argument for each component, 
	Proposition \ref{prop:groundtruth-unweighted} then follows.  
	
	\section{Missing details from Section \ref{sec:algorithms}}
	\label{appendix:sec:algorithms}
	
	\subsection{Proof of Theorem \ref{thm:hardness}}
	\label{appendix:thm:hardness}
	
	\paragraph{Proof of Part (1).} 
	Dasgupta shows that it is NP-hard to minimizing $\Dcost_G(T)$ (via converting it to a maximization problem and through a reduction from the so-called \emph{not-all-equal SAT} problem. 
	His reduction is for weighted graphs (with different edge weights). It turns out that a simple modification of his argument shows that minimizing $\Dcost_G(T)$ remains NP-hard even when the input graph has unit edge weight (which we refer to as unweighted graph).  
	We put a sketch of the main modification of Dasgupta's argument in Appendix \ref{app:np_unweighted}. 
	This in turn leads to that minimizing $\rho_G(T)$ is NP-hard even for unweighted graphs. 
	
	\paragraph{Proof of Part (2).}
	Note that if there is a $c$-approximation $\rho_G(T)$ for $\rho_G^* = \rho_G(T^*)$ (i.e, $\rho_G(T) \le c \cdot \rho_G^*$), then, $\TC_G(T)$ is a $c$-approximation for $\TC_G(T^*)$. 
	Since $\Dcost_G(T') = \TC_G(T') + A$ where $A = 2\sum_{i,j \in V} w(i,j)$ for any HC-tree $T'$, it follows that 
	$$\Dcost_G(T) \le c \cdot \TC_G(T^*) + A \le c(\TC_G(T^*) + A) = c \cdot \Dcost_G(T^*). $$
	Hence a $c$-approximation for $\rho_G^*$ gives rise to a $c$-approximation for $\Dcost_G(T^*)$ as well. 
	It then follows that any hardness result in approximating $\Dcost_G(T^*)$ translates into an approximation hardness result for $\rho_G^*$ as well. 
	Claim (2) in the theorem follows directly from the work of \cite{Charikar_SODA2017}, which showed that it is SSE-hard to approximate $\Dcost_G(T^*)$ within any constant factor.

	\subsection{Existence of an \texorpdfstring{$O(\sqrt{\log n})$}{O(sqrt(logn))}-approximation for \texorpdfstring{$\RC_G^*$}{optimal ratio-cost}} \label{app:sdp_app}
	
	It turns out that a slight modification of the SDP algorithm of  \cite{Charikar_SODA2017} gives rise to an $O(\sqrt{\log n})$-approximation algorithm for the ratio-cost $\rho_G^*$ as well. 
	Given that the algorithm is largely the same as the one from \cite{Charikar_SODA2017}, we only sketch the modification here briefly. For details of the original algorithm, see \cite{Charikar_SODA2017} (section 5).
	
	For a fixed graph $G$, the only difference between optimizing $\Dcost_G$ and $\TC_G$  is that we subtract a constant term in the objective function, changing it from 
	\[
	\min \sum_{t=0}^{n-1} \sum_{ij \in E} x_{ij}^t w_{ij}
	\]
	to
	\[
	\min \sum_{t=2}^{n-1} \sum_{ij \in E} x_{ij}^t w_{ij},
	\]
	and keep all other constraints in the formulation of \cite{Charikar_SODA2017}. Because $x_{ij}^t$ will be $1$ for $t = 0, 1$, by doing this, we actually deduct $2 \sum_{ij \in E} w_{ij}$ from the original objective function.
	
	The algorithm of \cite{Charikar_SODA2017} uses a partitioning algorithm from \cite{Krauthgamer_2009} as a subroutine, we will modify slightly. 
	In particular, as our modified objective function does not include terms with $t$ smaller than $2$, we may not have $\sum_{t=|A|/8+1}^{|A|/4} SDP_{A}(t) \leq O(SDP\text{-}HC)$ when the size of cluster $A$ is smaller than $8$.
    Instead, we find the optimal hierarchical tree with brute-force method for small clusters. Let $\TC^*_A$ denotes the optimal total cost over a subset $A$.  
	
	With the similar analysis (\cite{Charikar_SODA2017} section 5.3), the total cost of the tree $T$ produced by the above algorithm is:
	\begin{eqnarray*}
		\TC_{G}(T) &=& \sum_{A, |A| \geq 8} r_{A} \cdot w(E_A) + \sum_{A, |A| < 8} \TC^*_{A} \\
		& \leq & O(\sqrt{\log n}) \sum_{A, |A| \geq 8} \sum_{t=r_A/8+1}^{r_A / 4} SDP_A(t) + \sum_{A, |A| < 8} \TC^*_{A} \\
		& \leq & O(\sqrt{\log n}) SDP\text{-}HC + \TC^*_G \\
		& \leq & O(\sqrt{\log n}) \TC^*_G.
	\end{eqnarray*}

	\subsection{NP-hardness of Minimizing \texorpdfstring{$cost_G$}{cost(G)} on Unweighted Graphs} \label{app:np_unweighted}
	
	The approach is almost the same as the hardness proof by Dasgupta in \cite{Dasgupta_2016} (theorem 10), where he reduces the so-called NAESAT$^*$ problem (a variant of not-all-equal SAT problem) to the problem of \emph{maximizing} the cost (instead of minimizing). In particular, given an instance $\phi$ of the NAESAT$^*$ problem, \cite{Dasgupta_2016} constructs a certain \emph{weighted} graph $G$ of polynomial size as well as a quantity $M$ such that $\phi$ is not-all-equal satisfiable if and only if there exists some tree $T$ so that $\Dcost_G(T) \ge M$. 
	We modify the conversion to obtain an unweighted graph $G$ (i.e, with unit edge weight) still of polynomial size.
	The main idea is that instead of using one single node to represent a literal, we use $r+1$ nodes (itself and $r$ copies, where $r$ will be specified to be $2m$ later). 
    Given the close resemblance of our approach with the argument in \cite{Dasgupta_2016}, we only sketch the modification below. 
    
 In particular, assume that all redundant clauses are already removed in the same way as \cite{Dasgupta_2016}, now we build a graph $G = (V, E)$ with $2n(r+1)$ nodes, $2(r+1)$ per literal ($x_i$, $x_{i1}, \cdots, x_{ir}$, $\bar{x}_i$, $\bar{x}_{i1}, \cdots, \bar{x}_{ir}$). The edge set $E$ falls into three categories, the edges in the first two categories are exactly the same as Dasgupta's construction:
	\begin{enumerate}
		\item For each 3-clause (assume there are $m$ 3-clauses in total), add six edges: three edges joining all 3 literals, and three edges joining their negations.
		\item For each 2-clause (assume there are $m'$ 2-clauses in total), add two edges: one joining two literals, and one joining their negations.
		\item Finally, for each literal and its copies. Make it a complete bipartite graph with $x_i$ and its $r$ copies on one side, $\bar{x_i}$ and its $r$ copies on the other side. See figure \ref{fig:copies}. Only edges in this category connect to copies.
	\end{enumerate}
	Then there will be $6m + 2m' + n(r+1)^2$ edges with unit weight.
	\begin{figure}[ht]
		\centering 
		\includegraphics[width=0.50\textwidth]{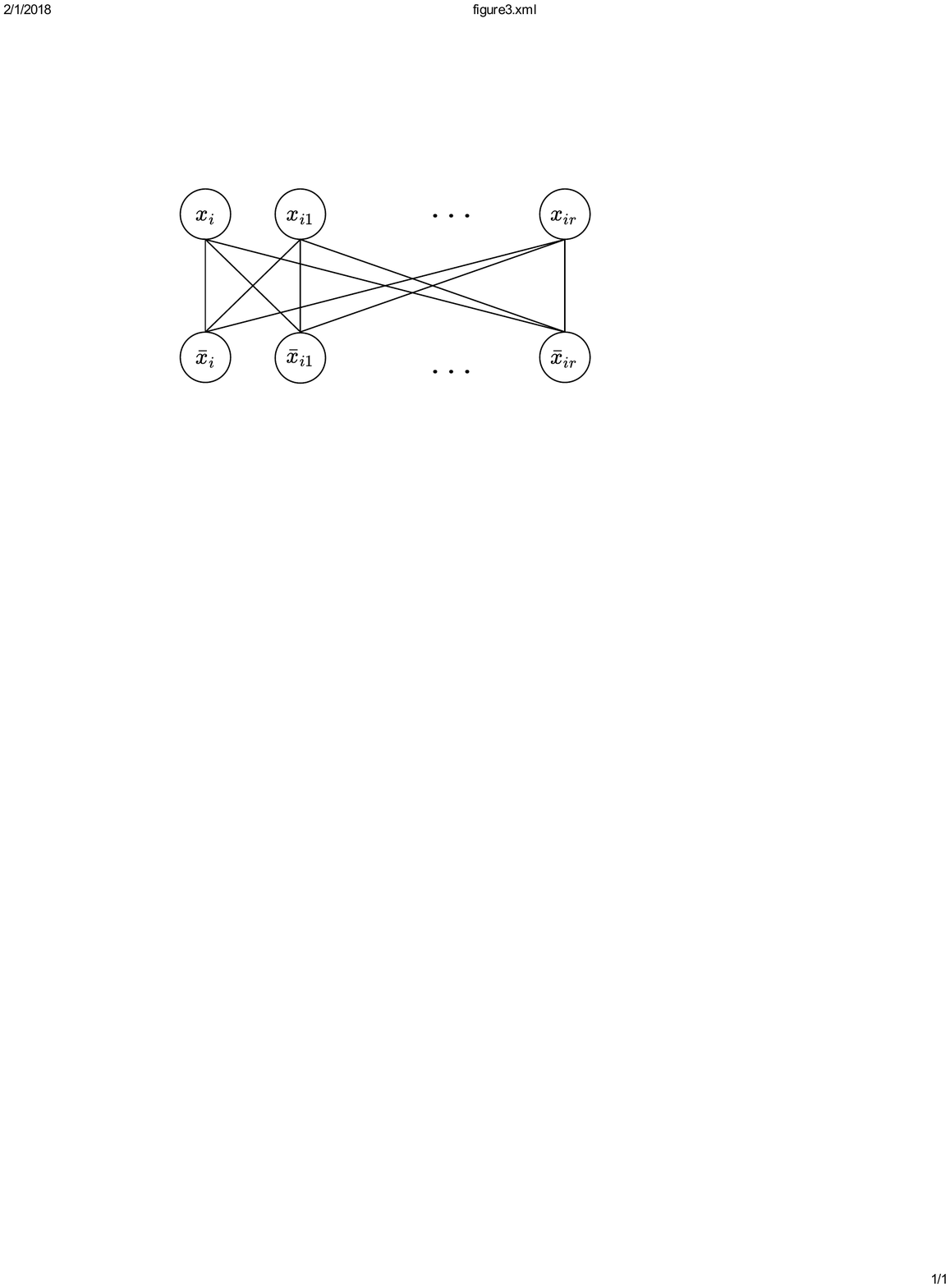}
		\caption{Edges in category 3.}
		\label{fig:copies}
	\end{figure}
	
	Now suppose $\phi$ is not-all-equal satisfiable, and let $V^+$, $V^-$ denote positive and negative nodes under the valid assignment, the copies have the same polarity as its corresponding literal. Similarly as \cite{Dasgupta_2016}, consider the tree $T$ which first separates nodes with different polarities, the cost of the top split is $|V| \cdot |E(V^+, V^-)| = 2n(r+1)(4m + 2m' + n(r+1)^2)$. Then in the second level, all remaining $2m$ edges will be cut since they are disjoint. Therefore, the total cost of tree $T$ will be 
	\begin{eqnarray*}
		cost_G(T) &=& 2n(r+1)(4m + 2m' + n(r+1)^2) + 2n(r+1)m \\
		&=& 2n(r+1)(5m + 2m' + n(r+1)^2).
	\end{eqnarray*}
	Call this $M$. 
	
	Conversely, suppose there is a tree of $cost_G(T) \geq M$, and assume the top split cut it into two parts with size $n(r+1) + \epsilon$ and $n(r+1) - \epsilon$. For this split, it cuts at most $4m$ edges in category 1, $2m'$ edges in category 2, and $(n(r+1) -\epsilon) \cdot (r+1)$ edges in category 3. For the remaining $\epsilon (r+1) + 2m$ edges will add total cost by at most 
	\[
	(n(r+1) + \epsilon) \cdot (\epsilon(r+1) + 2m).
	\] 
	The total cost $\Dcost_G(T)$ is at most
	\begin{eqnarray*}
		\Dcost_G(T) & \leq & 2n(r+1) (4m + 2m' + (n(r+1) - \epsilon) \cdot (r+1)) + (n(r+1) + \epsilon) \cdot (r + 1) \\
		& = & M - 2\epsilon n(r + 1)^2 + 2\epsilon m + \epsilon (r + 1)(\epsilon + n(r+1))
	\end{eqnarray*}
	For $\epsilon \not= 0$, if $cost_G \geq M$, it must be true that
	\[
	2\epsilon m + \epsilon (r + 1) (\epsilon + n(r+1)) \geq 2\epsilon n (r+1)^2.
	\]
	After cancellation, we get
	\[
	2m + \epsilon (r + 1) \geq n (r + 1)^2.
	\]
	We know that $\epsilon \leq n(r+1) - 1$, then we need 
	\[
	2m \geq r + 1,
	\]
	which is impossible if we set $r = 2m$. With $r = 2m$, polynomially in the size of $\phi$, to achieve $cost_G \geq M$, we must first cut it into two equal-sized parts, and necessarily leaves at most one edge per triangle untouched, and also cuts all the other edges, otherwise the cost will fall below $M$ again. Thus, the top split $V \rightarrow (V^+, V^-)$ is a not-all-equal satisfied assignment for $\phi$.
	
	\subsection{Proof of Lemma \ref{lem:withclaw}}
	\label{appendix:lem:withclaw}
	
	First, we state the following simple observation (which follows from the construction of the clusters $\mathcal{C}=\{C_1, \ldots, C_m\}$), which we will use repeatedly later. 
	\begin{claim}\label{claim:typeone}
		If a triplet $\{\av_i, \av_j, \av_k\}$ of $\aG$ is \triTone, then it is not possible that all three nodes are contained in three different clusters in $\mathcal{C}$. 
	\end{claim}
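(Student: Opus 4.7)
The plan is to observe that the claim is essentially an immediate unpacking of Definition~\ref{def:validpartition}(i), applied to the particular partition $\mathcal{C}$ returned in Step 1. Nothing about the minimality property of $\mathcal{C}$ from Proposition~\ref{prop:goodcluster}(1) is needed here; only the fact that $\mathcal{C}$ is a valid partition of $\aV$ matters.

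In more detail, I would first invoke Proposition~\ref{prop:goodcluster}(1), which guarantees that the partition $\mathcal{C}=\{C_1,\ldots,C_m\}$ produced by Step~1 is a valid partition of $\aV$ with respect to $\aG$. Then I would take an arbitrary \triTone triplet $\{\av_i,\av_j,\av_k\}$, and assume without loss of generality that $w_{ij}$ is its unique strict maximum, i.e.\ $w_{ij}>\max\{w_{ik},w_{jk}\}$. Applying clause (i) of Definition~\ref{def:validpartition} directly to this triplet and to the valid partition $\mathcal{C}$ yields that one of exactly two cases must hold: either all three of $\av_i,\av_j,\av_k$ lie in the same cluster of $\mathcal{C}$, or $\av_i$ and $\av_j$ lie in one cluster while $\av_k$ lies in a different cluster. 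In either case, the three vertices occupy at most two clusters of $\mathcal{C}$, so they cannot be spread across three distinct clusters, which is exactly the statement of the claim.

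There is no real obstacle here; the only thing to be careful about is naming conventions. The definition of valid partition phrases the \triTone condition in terms of ``the'' largest edge weight, so one should explicitly fix the edge with the strict maximum weight before quoting Definition~\ref{def:validpartition}(i). Since a \triTone triplet has a unique strict maximum by definition, this is unambiguous. Thus the proof is a single-sentence appeal to the validity of $\mathcal{C}$ and Definition~\ref{def:validpartition}(i), and I would write it as such.
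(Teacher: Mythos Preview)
Your proposal is correct and matches the paper's approach: the paper simply states that the claim ``follows from the construction of the clusters $\mathcal{C}$,'' which is exactly the observation you have unpacked---$\mathcal{C}$ is valid by Proposition~\ref{prop:goodcluster}(1), and Definition~\ref{def:validpartition}(i) then rules out the three-cluster configuration for a \triTone{} triplet.
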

	
	We now claim that all edges $(y_m, y_i)$ for $i \in [4, s]$ must be heavy (recall, $\{y_1, \ldots, y_s\}$ is the maximum clique formed by light edges). 
	This is because otherwise, the triplet $\{ y_1, y_i, y_m\}$ is \triTone{}, which is not possible by Claim \ref{claim:typeone}. 
	Suppose there exists a valid bi-partition $(A, B)$ of $\aV$. 
	Assume that there are two nodes, say $y_i, y_j$ with $i, j\in [1,s]$, are in $A$. 
	The triplet $\{y_i, y_j, y_m\}$ cannot be \triThree{}, as the associated edge weights are not all equal. 
	It also cannot be \triTone{} by Claim \ref{claim:typeone}. Hence $\{y_i, y_j, y_m\}$ is \triTwo{}.  It is then necessary that $y_m \in A$ as well, as $w(y_j, y_m) = w(y_i, y_m) > w(y_i, y_j)$ and the valid bi-partition $(A, B)$ is consistent with this triplet. 
	Furthermore, in this case $B$ can contain at most one vertex from clique $\mathsf{C}$, as otherwise, $y_m$ also has to be in $B$ by using the same argument above, contradicting to that it already must be in $A$. 
	
	Now consider any vertex $y_r \in V'$ outside the clique, and $r\neq m$. 
	Since $y_r \notin \mathsf{C}$, there exists a vertex $y_a \in \mathsf{C}$ such that the edge $(y_r, y_a)$ is heavy. 
	We claim that in this case, all edges $(y_r, y_i)$ with $i\in [1,s]$ must be heavy. 
	Indeed, suppose edge $(y_r, y_b)$ is light for some $b\in [1,s]$, $b\neq a$. 
	Then the triplet $\{y_r, y_a, y_b\}$ is \triTone{}, which is not possible by Claim \ref{claim:typeone}. 
	
	As $y_r$ form heavy edges with all points in the clique $\mathsf{C}$, applying our previous argument showing $y_m\in A$ now to $y_r$, we can prove that $y_r \in A$ as well. 
	In other words, $B\cap V'$ (recall $V' = \{y_1, \ldots, y_m\}$ contains one point from each cluster) is either empty, or it contains exactly one point $y_k$ in which case it is also necessary that $y_k \in \mathsf{C}$, i.e, $k\in [1,s]$. 
	
	Finally, by Proposition \ref{prop:goodcluster}, if a point from a cluster $C_i$ is in $A$ (or $B$), then all points in that cluster are necessarily in $A$ (or $B$). 
	This means that $B\cap V'$ cannot be empty (otherwise $B$ will be empty), and contains exactly one point $y_k$ from the clique $\mathsf{C}$. In this case, the valid partition is the same as $\Pi_k$.  
	This proves Lemma \ref{lem:withclaw}. 
	
	\subsection{Proof of Lemma \ref{lem:noclaw}}
	\label{appendix:lem:noclaw}
	
	Suppose there is a valid bi-partition $(A, B)$, by Proposition \ref{prop:goodcluster} we just need to assign each cluster $C_i$ in $\mathcal{C}$ to either $A$ or $B$. 
	In particular, let $z_1, \ldots, z_m$ be a set of boolean variables with $z_i = 1$ if $C_i \subseteq A$  (thus $z_i = 0$ if $C_i \subseteq B$). 
	By Definition \ref{def:validpartition}, our goal now is to find truth assignments for $z_i$s, so that the resulting bi-partition  $(A, B)$ is consistent with each triplet of $\aG$ of \triTone{} or \triTwo{}. 
	Below we will go through each triplet, and identify constraints on $z_i$s it may incur.

	Consider a triplet $\{\av_i, \av_j, \av_k\}$. 
	If it is \triTone{} with $w_{ij} > \max\{ w_{ik}, w_{jk}\}$, then by Claim \ref{claim:typeone}, it must be that either all three points belong to the same cluster in the valid partition $\mathcal{C}$, or $\av_i$ and $\av_j$ is from the same cluster, while $\av_k$ is from a different one in $\mathcal{C}$. 
	For both cases, it is easy to verify that no matter how $C_i$s are assigned in the bi-partition $(A, B)$, the resulting bi-partition is always consistent with this triplet. 
	
	Now assume the triplet is \triTwo{}.
	If all three nodes are from the same cluster, then it incurs no constraints in the assignments of clusters. 
	Suppose two nodes are from the same cluster, and the third node from a different one. Either this remains the case in the final bi-partition, or all three belong to one subset in the final bi-partition. In either case, the final bi-partition will be consistent with this triplet. 
	
	The only remaining case is when all three nodes in this \triTwo{} triplet are from three different clusters. W.l.o.g, assume this \triTwo{} triplet is $\{\av_i, \av_j, \av_k \}$ from clusters $C_{i}, C_{j}$ and $C_{k}$, respectively. 
	
	Consider an arbitrary point $\av_r \in C_r$ with $r \in [1, m] \setminus \{i, j, k\}$. 
	Suppose for the \triTwo{} triplet $\{\av_i, \av_j, \av_k\}$ is such that $\mathsf{w}=w_{ij}=w_{ik} > w_{jk}$. 
	By Claim \ref{claim:typeone}, for any three points out of $\av_i, \av_j, \av_k$ and $\av_r$, at least two edges have equal weights which is larger than or equal to the last edge weight. 
	Based on this we can show that these 4 points either form a claw (see Figure \ref{fig:claws_and_type2_constraint} (a) and (b)), or $\{\av_r, \av_j, \av_k\}$ also form a \triTwo{} triplet with $w_{rj} = w_{rk} > w_{jk}$ (Figure \ref{fig:claws_and_type2_constraint} (c)). 
	The former case is not possible as we have assumed that there is no claw w.r.t. $\mathcal{C}$. 
	Hence $\av_j$ and $\av_k$ form a \triTwo{} with \emph{every} other $\av_r$, for $r \in [1, m] \setminus \{i, j, k\}$, with $w_{jk}$ having the smallest edge weight in this triplet. If $\av_j$ and $\av_k$ are in the same subset in a valid bi-partition $(A, B)$, then for the bi-partition to be consistent with the triplet $\{\av_r, \av_j, \av_k\}$, $\av_r$ (thus $C_r$) also belong to this subset. 
	In other words, if $z_j = z_k$, then $z_1 = z_2 = \cdots = z_m$ and all clusters belong to the same subset in the bi-partition, which is not possible. 
	It then follows that it must be $z_j \neq z_k$. On the other hand, if $C_j$ and $C_k$ are assigned to different subset in the bi-partition, then no matter how other clusters are assigned, all these triplets $\{\av_r, \av_j, \av_k\}$ will be consistent.
	
	\begin{figure*}[t!] 
		\centering
		\begin{subfigure}[t]{0.3\textwidth}
			\centering
			\includegraphics[height=1.2in]{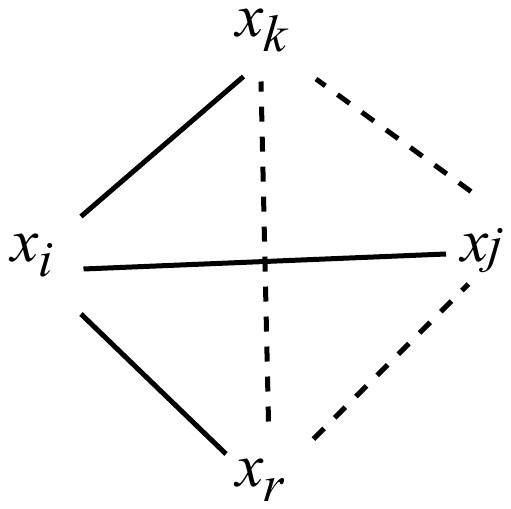}
			\caption{Claw1: $w_{ij} = w_{ik} = w_{ir} > w_{jk}, w_{jr}, w_{kr}$}
		\end{subfigure}
		~ 
		\begin{subfigure}[t]{0.3\textwidth}
			\centering
			\includegraphics[height=1.2in]{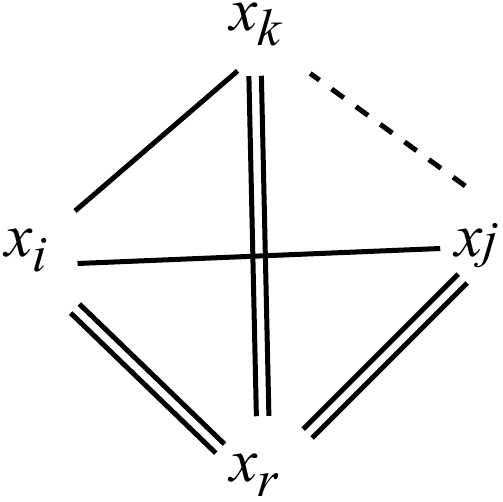}
			\caption{Claw2: $w_{ir} = w_{jr} = w_{kr} > w_{ij} = w_{ik} > w_{jk}$}
		\end{subfigure}
		~ 
		\begin{subfigure}[t]{0.3\textwidth}
			\centering
			\includegraphics[height=1.2in]{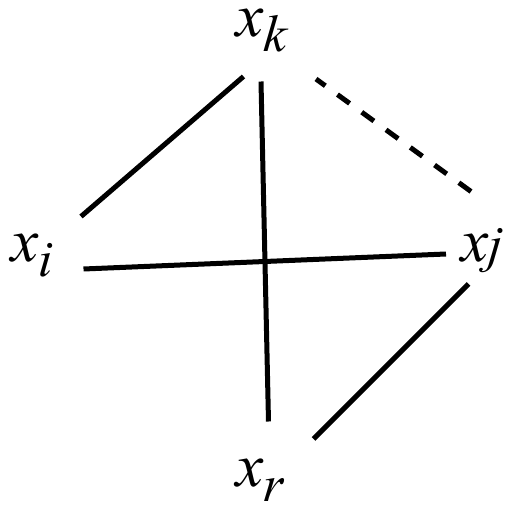}
			\caption{Two \triTwo{} triplets: $w_{ij} = w_{ik} > w_{jk}$, and $w_{jr} = w_{kr} > w_{jk}$.
			}
		\end{subfigure}
		\caption{Three possible patterns consist of four points.}
		\label{fig:claws_and_type2_constraint}
	\end{figure*} 
	
	Finally, going through all $O(n^3)$ \triTwo{} triplets where all three points coming from three distinct clusters, we obtain a set of at most $O(m^2)$ constraints of the form $z_j \neq z_k$ that all need to hold. 
	Given a set of such constraints, we can use a modified breadth-first-search procedure to identify in time linear to the number of constraints whether an assignment of boolean variables $\{z_1, \ldots, z_m\}$ so all constraints are satisfied exists or not, and compute one if it exists. 
	The assignment, if exists, then gives rise to the desired valid bi-partition. If it does not exists, algorithm \validBisect($\aG$) returns ``{\sf fail}". 
	
	\subsection{Proof for Theorem \ref{thm:perfectHC}}
	\label{appendix:thm:perfectHC}
	
	We simply run algorithm \algBisection($G$) on the input graph $G=(V,E)$; let $T$ be its output HC-tree. 
	We claim that $G$ has a perfect HC-structure if and  only if $T$ spans all vertices in $V$; in which case, $T$ is also an optimal HC-tree (i.e, $\rho_G(T) =1$). 
	We first argue the correctness of our algorithm, which is relatively simple. The main technical contribution comes from proving that we can implement the algorithm in the stated time complexity.  
	
	\subsubsection{Correctness of the algorithm}
	First, a subgraph $G'(V')$ of $G$ induced by a subset $V' \subseteq V$ of vertices is the subgraph of $G$ spanned by all edges between vertices in $V'$. 
	We also call $G'(V')$ \emph{an induced subgraph}.
	Recall that by definition of $\TC_G$ and $\BC(G)$, a binary tree $T$ satisfies $\rho_G(T) = 1$ if and only if $T$ is consistent with the similarities of any triplet in $V$ (see the discussion above Definition \ref{def:perfectHC}). 
	If the algorithm returns a HC-tree $T$ spans all vertices in $V$, then we claim that the binary tree $T$ is consistent with all triplets of $\aG$ inductively by considering each subtree $T_v$ rooted at $v \in T$, as well as the corresponding subgraph $G_v$ in a bottom-up manner. 
	First, note that since $T$ spans all vertices in $V$, it means that at each stage  procedure \validBisect($\aG$) succeeds, i.e, it computes a valid bi-partition for $\aG$.

	In particular, at the leaf level, this holds trivially. 
	Now consider an internal node $v \in T$, and the subtree $T_v$ is obtained by \algBisection($G_v$), where $G_v$ is the subgraph of $G$ induced by all leaves in $T_v$. 
	Let $T_A$ and $T_B$ be the two child-subtrees of $T_v$, with $G_A$ and $G_B$ their corresponding subgraphs. By induction hypothesis, $G_A$ (resp. $G_B$) has a perfect HC-structure with $T_A$ (resp. $T_B$) being an optimal HC-tree for it. 
	To check that $T_v$ is also optimal for $G_v$, we just need to verify that for any triplet $\{v_i, v_j, v_k\}$ not all from the same subtrees, $T_v$ is consistent with it. 
	Indeed, this follows easily from that leaves of $T_A$ and $T_B$ form a valid partition of $G_v$.

	For the opposite direction, we need to show that if $G$ has a perfect HC-structure, then \algBisection($G$) should succeed. 
	This follows from the simple claim below, combined with Lemma \ref{lem:withclaw} and \ref{lem:noclaw}. 
	\begin{claim}
		If a graph $\aG = (\aV, \aE)$ has a perfect HC-structure, then any of its induced subgraph also has perfect HC-structure. 
	\end{claim}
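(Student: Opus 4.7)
The plan is to exploit the characterization, noted right above Definition \ref{def:perfectHC}, that $\aG$ has a perfect HC-structure precisely when there is a binary HC-tree $T$ spanning $\aV$ whose \myrelation{} on every triplet $\{\av_i,\av_j,\av_k\} \subseteq \aV$ is consistent with the similarities $w_{ij}, w_{ik}, w_{jk}$ in $\aG$. Starting from such a $T$, I will build an HC-tree $T'$ on $\aV'$ and show it witnesses perfect HC-structure for the induced subgraph $\aG' = \aG(\aV')$.

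First I would define $T'$ as the ``restriction'' of $T$ to $\aV'$: take the minimal subtree $T_0$ of $T$ spanning the leaves in $\aV'$, then suppress every internal node of degree two in $T_0$ (i.e., contract the chain through it) to obtain a tree $T'$ whose leaf set is exactly $\aV'$. A standard and easy property of this restriction is that LCAs are preserved in the following sense: for any two leaves $\av_i, \av_j \in \aV'$, the node $\LCA_{T'}(i,j)$ corresponds (after suppression) to $\LCA_T(i,j)$, and therefore for any triplet $\{\av_i,\av_j,\av_k\}\subseteq \aV'$, the relation among $\{\av_i \mid \av_j, \av_k\}$-type configurations in $T'$ is the same as in $T$. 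I would state this as a short one-line observation rather than a calculation.

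Next, for any triplet $\{\av_i,\av_j,\av_k\}\subseteq \aV'$, the similarities $w_{ij}, w_{ik}, w_{jk}$ seen by $\aG'$ are identical to those seen by $\aG$, because $\aG'$ is the \emph{induced} subgraph. Combining this with the LCA-preservation observation above, if $T$'s relation on this triplet is consistent with the triplet's similarities, then so is $T'$'s relation. Hence $\triplecost_{T',\aG'}(i,j,k) = \mintricost_{\aG'}(i,j,k)$ for every triplet in $\aV'$. Summing gives $\TC_{\aG'}(T') = \BC(\aG')$, i.e., $\rho_{\aG'}(T') = 1$, so $\aG'$ has a perfect HC-structure.

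I do not anticipate any serious obstacle: the only non-trivial piece is verifying that suppression preserves the triplet's \myrelation, but this follows directly from the fact that for $\av_i,\av_j \in \aV'$ the subtree of $T$ rooted at $\LCA_T(i,j)$ contains both leaves, and that ancestor/descendant order among retained internal nodes is unchanged by suppressing degree-two nodes. One minor care needed is the case where $T$ is not binary even though $\aG$ has perfect HC-structure --- but by Observation \ref{obs:lowerbnd}(iii) we may take $T$ to be binary from the start, and a similar argument shows $T'$ (possibly after an arbitrary binary refinement of any $\{i\mid j\mid k\}$ relations it introduces) still attains cost equal to $\BC(\aG')$.
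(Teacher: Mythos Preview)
Your proposal is correct and follows essentially the same approach as the paper: take an optimal binary HC-tree for $\aG$, restrict it to the leaves in $\aV'$ by deleting the other leaves and suppressing degree-two internal nodes, and observe that the resulting tree remains consistent with every triplet of $\aV'$, giving $\rho_{\aG'}(T')=1$. The only minor remark is that your caveat about a possible binary refinement is unnecessary: if $T$ is binary, the restricted-and-suppressed tree $T'$ is automatically binary as well.
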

	\begin{proof}
		Let $\aT^*$ be an optimal HC-tree for $\aG$. 
		Then $\aT^*$ is binary and it is consistent with any triplet of $\aG$. 
		Let $\aG_A$ be an induced subgraph of $\aG$ spanning on vertices $A \subset \aV$. 
		Now removing all leaves from $\aT^*$ corresponding to vertices $\notin A$, and removing any degree-2 nodes in the resulting tree. 
		This gives rise to an induced binary HC-tree $\aT^*_A$. 
		Obviously, this tree is still consistent with all triplets formed by vertices from $A$. 
		Hence it is an optimal HC-tree for $\aG_A$ with $\rho_{\aG_A}(\aT^*_A) = 1$. 
	\end{proof}
	
	\subsubsection{Time complexity}
	\label{appendix:thm:timecomplexity}
	
	The recursive algorithm \algBisection($G$) has a depth at most $O(n)$. 
	At each level (depth), we call the algorithm for a collection of subgraphs which are all disjoint. Hence at each level, the total size (as measured by size of vertex set) of all subproblems is $O(n)$. 
	We now obtain the time complexity for one subproblem \algBisection($\aG$) with $\nhat$ vertices. 
	
    First, our algorithm builds an initial partition $\mathcal{C} = \{ C_1, \ldots, C_m\}$ of $\aG$. Aho et al. \cite{Aho_1981} gave an algorithm to comptue this partition in $O(\nhat^3 \log \nhat)$ time.
	
	Next, we compute a bi-partition $(A, B)$ from the partition $\mathcal{C}$. 
	If there is no claw w.r.t. $\mathcal{C}$, then Lemma \ref{lem:noclaw} states that the bi-partition can be computed in $O(\nhat^3)$ time. 
	What remains is to bound the time complexity for the case where there exists claws. 
	
	Both checking for claws naively, and checking for bi-partition once a claw is given, takes $O(\nhat^4)$ time. However, there is much structure behind  \emph{crossing triplets}, which are triplets with all three vertices from three different clusters. We will leverage that structure to compute a claw and check for a valid bi-partition in $O(\nhat^3 \log \nhat)$ time. (Recall that any claw will work for our algorithm to compute a bi-partition for this case.)
	
	\paragraph{A more efficient algorithm for identifying a claw.} 
	In particular, we perform the following in algorithm \validBisect($\aG$) to check whether there is a claw or not, and compute one if any exits. Recall that the partition $\mathcal{C} = \{C_1, \ldots, C_m\}$ is already computed. We say that an edge is a \emph{crossing edge} if its two endpoints are from different clusters in $\mathcal{C}$. 
	A triplet is \emph{crossing} if all three nodes inside are from different clusters of $\mathcal{C}$. 
	Let $\pi: \aV \to \{C_1, \ldots, C_m \}$ be such that $\pi(\av)$ is the cluster containing node $\av$. 
	
	Let us now fix a crossing-edge $(\av_j, \av_k)$, say, from clusters $C_j$ and $C_k$, respectively. 
	We scan through all nodes $\av \in C_r$ with $r\neq j, k$ and do the following: 
	\begin{itemize}\denselist
		\item If the triplet $\{\av_j, \av_k, \av_r\}$ is \triThree, we mark cluster $C_r$ with label `0'.  
		\item If the triplet $\{\av_j, \av_k, \av_r\}$ is \triTwo, and $(\av_j, \av_k)$ is one of the edge with the maximum edge weight, then we mark cluster $C_r$ with label `1'. 
		\item Otherwise, the triplet $\{\av_j, \av_k, \av_r\}$ must be \triTwo, and $(\av_j, \av_k)$ must be the edge with minimum edge weight. Let $w = w_{jr}  = w_{kr}$. We mark $C_r$ with label `(2, $w$)'. 
	\end{itemize}
	These are the only choices for the crossing triplet $\{x_j, x_k, x_r\}$ as by Claim \ref{claim:typeone} it cannot be of \triTone. 
	Note that a cluster can receive multiple labels. We mark a cluster with a specific label only if that cluster does not yet have that label. That is, we only maintain distinct labels for a cluster $C_r$. 
	For the fixed crossing-edge $(\av_j, \av_k)$, the total number of labels recorded for all clusters is at most $n$ as they come from at most $n$ crossing triplets. 
	Labeling all clusters for a fixed edge $(\av_j, \av_k)$ takes $O(n)$ time. 
	
	Given a claw $\{\av_a \mid \av_b, \av_c, \av_d\}$, we call the edges $(\av_a, \av_b), (\av_a, \av_c)$ and $(\av_a, \av_d)$ with maximum weight the \emph{legs} of this claw; while the three remaining edges the \emph{base-edges} of this claw. 
	
	\begin{lemma}\label{lem:clawlabels}
		Label all clusters as described above w.r.t. crossing edge $(\av_j, \av_k)$. 
		There is a claw with $(\av_j, \av_k)$ being one base-edge if and only if one of the following holds: \\
		(C-1) there are two clusters $C_r$ and $C_s$ with label `0' and `(2, ?)'; \\
		(C-2) there are two clusters $C_r$ and $C_s$ with label `1' and `(2, ?)'; \\
		(C-3) there are two clusters $C_r$ and $C_s$  with label `(2, $w$)' and `(2, $w'$)' with $w \neq w'$. 
	\end{lemma}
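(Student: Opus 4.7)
The plan is to establish both directions by a careful case analysis, using Claim~\ref{claim:typeone} (that no \triTone{} triplet can have its three nodes in three distinct clusters of $\mathcal{C}$) as the main lever. Throughout, $C_j$ and $C_k$ denote the clusters containing $\av_j$ and $\av_k$, and $C_r, C_s$ denote two other clusters carrying the hypothesized labels. The key observation I will use repeatedly: if $\av, \av' \in \aV$ lie in two clusters different from each other and from $C_j, C_k$, then each of the triplets $\{\av, \av', \av_j\}$ and $\{\av, \av', \av_k\}$ is crossing, so by Claim~\ref{claim:typeone} none of them is \triTone{}; this constrains $w(\av, \av')$ tightly once the other two edge weights of these triplets are known.

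For the ``only if'' direction, suppose $\{\av_a \mid \av_b, \av_c, \av_d\}$ is a claw in which $(\av_j, \av_k)$ is a base edge. Relabel so that $\av_b = \av_j$ and $\av_c = \av_k$, with center $\av_a$ coming from some cluster $C_r \neq C_j, C_k$ and with $\av_d$ coming from some cluster $C_s \neq C_j, C_k, C_r$. Setting $w := w_{aj} = w_{ak} = w_{ad}$, the claw's definition gives $w > \max\{w_{jk}, w_{jd}, w_{kd}\}$. The triplet $\{\av_a, \av_j, \av_k\}$ is therefore \triTwo{} with $(\av_j, \av_k)$ as its strictly minimum edge, so $C_r$ acquires the label `$(2, w)$'. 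For the triplet $\{\av_d, \av_j, \av_k\}$, all three edge weights are strictly less than $w$; it is crossing, hence not \triTone{} by Claim~\ref{claim:typeone}. If it is \triThree{}, then $C_s$ gets label `$0$', yielding (C-1). If it is \triTwo{} with $(\av_j,\av_k)$ among the two maximum edges, $C_s$ gets label `$1$', yielding (C-2). If it is \triTwo{} with $(\av_j,\av_k)$ as the strict minimum, then $C_s$ gets label `$(2, w'')$' where $w'' = w_{jd} = w_{kd}$; but $w_{jd} < w$ forces $w'' < w$, so the two labels `$(2, w)$' and `$(2, w'')$' have distinct weights, which is (C-3).

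For the ``if'' direction, I treat the three cases in parallel, exploiting the fact that in every case one of the two clusters carries a `$(2,\cdot)$' label. Let $\av_s \in C_s$ witness the label `$(2, w)$', i.e.\ $w_{sj} = w_{sk} = w > w_{jk}$, and let $\av_r \in C_r$ witness the other label. I claim that $\av_s$ is the center of a claw with legs to $\av_j, \av_k, \av_r$, once the single missing quantity $w_{sr}$ is pinned down to equal $w$. To establish $w_{sr} = w$, I apply the key observation to the crossing triplet $\{\av_s, \av_r, \av_j\}$: in case (C-1) we have $w_{rj} = w_{jk} < w = w_{sj}$, so $w_{sr} < w$ would make this triplet \triTone{} with $w_{sj}$ the strict maximum and $w_{sr} > w$ would make it \triTone{} with $w_{sr}$ the strict maximum; hence $w_{sr} = w$. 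In case (C-2), where $w_{rj} \in \{w_{jk}, \text{strictly less}\}$ depending on which of $w_{rj}, w_{rk}$ ties $w_{jk}$, the same pair of bounds applied to whichever of $\{\av_s, \av_r, \av_j\}$ and $\{\av_s, \av_r, \av_k\}$ uses the strictly-smaller endpoint gives $w_{sr} = w$. In case (C-3), assuming WLOG $w > w'$, the vertex $\av_r$ satisfies $w_{rj} = w_{rk} = w' < w$, and again the same two bounds yield $w_{sr} = w$. In each case the base-edge weights $w_{jk}, w_{jr}, w_{kr}$ are at most $\max\{w_{jk}, w'\} < w$ by the label definitions, so $\{\av_s \mid \av_j, \av_k, \av_r\}$ is indeed a claw with $(\av_j, \av_k)$ as a base edge.

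The main obstacle is the bookkeeping in the ``if'' direction, particularly ensuring in case (C-3) that one can always choose the right cluster to supply the claw's center (namely the one whose label has the \emph{larger} weight); the argument naturally suggests this choice because a \triTone{} triplet would otherwise be forced. The ``only if'' direction is largely a classification-by-type exercise, whose only delicate step is ruling out $w'' = w$ in the final subcase by using the strict inequality in the claw's definition.
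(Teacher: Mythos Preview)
Your proposal is correct and follows essentially the same approach as the paper: both directions are handled by case analysis, with Claim~\ref{claim:typeone} (no crossing triplet is \triTone{}) doing all the work in pinning down the unknown edge weight $w_{sr}$ and in classifying the triplet $\{\av_d,\av_j,\av_k\}$. The paper's proof only spells out (C-1) in the ``if'' direction and leaves the rest to ``similarly''; your write-up is more explicit, and your observation in (C-3) that one must take the cluster with the \emph{larger} `(2,$\cdot$)' weight as the center is exactly the right choice (and is implicit in the paper's sketch).
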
 
	\begin{proof}
		By simple case analysis, we can verify that for each of the labeling configuration above, a claw will necessarily be formed (again, the key reason is that by Claim \ref{claim:typeone}, no crossing triplet can be of \triTone{}). 
		For example, suppose configuration (C-1) holds, which means that there exists $\av_r \in C_r$ and $\av_s \in C_s$ such that (i)
		triplet $\{\av_r, \av_j, \av_k\}$ is \triThree{} where all edge weights equal to $w_{jk}$; and (ii) 
		triplet $\{\av_s, \av_j, \av_k\}$ is \triTwo{} where $w := w_{sj} = w_{sk} > w_{jk}$. 
		Now consider the edge $(\av_r, \av_s)$: it must have weight $w_{rs} = w$, as the triplet $\{w_r, w_s, w_j\}$ can only be of \triTwo{} due to Claim \ref{claim:typeone}. 
		Hence $\{\av_s \mid \av_j, \av_k, \av_r\}$ form a claw with $(\av_j, \av_k)$ being a base-edge. 
		The other configurations can be handled similarly. 
		
		On the other hand, suppose we have a claw $\{\av_a \mid \av_b, \av_j, \av_k\}$ with $(\av_j, \av_k)$ being a base-edge. Then again by applying Claim \ref{claim:typeone} we can enumerate possible configurations of triplets $\{\av_a, \av_j, \av_k\}$ and $\{\av_b, \av_j, \av_k\}$ and show that it must be one of the three as claimed through simple case analysis. 
	\end{proof}
	
	\paragraph{Time to identify a claw.}
	We now show that, for a fixed crossing edge $(\av_j, \av_k)$, we can check for the existence of configurations in Lemma \ref{lem:clawlabels} in $O(\nhat \log \nhat)$ time. 
	Indeed, it is easy to identify configurations (C-1) and (C-2) in $O(m) = O(\nhat)$ time, by simply maintaining three flags for each clusters (recording whether it receives a `0', `1', or `2' label). 
	To check for configuration (C-3), we scan labels for clusters $C_1, \ldots, C_m$ as follows. 
	As we scan cluster $C_i$, we maintain a heap $H$ for the distanct weights coming from label `(2, $w$)' already encountered. 
	Now suppose we have already scanned clusters $C_1, \ldots, C_{i-1}$ with $i > 1$, and our current heap is $H_i$. We inspect each label of form `(2, $w$)' associated with $C_i$. 
	If $w$ is not already in $H_{i-1}$, and $H_{i-1}$ is not empty, then we discovered a (C-3) configuration. Our algorithm terminates. 
	Otherwise, if $w$ is already in $H$and $H$ contains more than 1 element, then again we discovered  a (C-3) configuration and the algorithm terminates. 
	Note that during the above checking stage, we do not update heap $H_{i-1}$. 
	
	After we finish inspecting each label in $C_i$, we will update $H_i$. Note that there are in fact only two cases due to our termination conditions above: (i) either $H_{i-1}$ is not empty, in which case $H_i = H_{i-1}$; (ii) or $H_{i-1}$ is empty, and we insert each weight $w$ from label `(2, $w$)' associated to $C_i$ into $H_i$. 
	
	In the worst case, there are $\Theta(\nhat)$ distinct weights associated with the first cluster $C_i$ with non-empty set of labels of the form `(2, ?)'. 
	The entire process takes $O(\nhat \log \nhat)$ time. 
	Since we need to scan through all $O(\nhat^2)$ edges, the following lemma follows. 
	\begin{lemma}\label{lem:clawtime}
It takes $O(\nhat^3 \log \nhat)$ to detect whether a claw exists, and compte one if it exists. 	
	\end{lemma}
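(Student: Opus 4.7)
The plan is to combine the characterization in Lemma \ref{lem:clawlabels} with an iteration over all crossing edges as candidate base-edges of a claw. Since every claw contains three base-edges, at least one choice of crossing base-edge will expose the claw via one of the configurations (C-1), (C-2), (C-3). There are $O(\nhat^2)$ crossing edges to scan, so the target bound $O(\nhat^3 \log \nhat)$ forces a per-edge cost of $O(\nhat \log \nhat)$.

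First I would fix a crossing edge $(\av_j, \av_k)$ and label each of the $m$ clusters as described in the text by scanning the remaining $O(\nhat)$ vertices and classifying each crossing triplet $\{\av_j, \av_k, \av_r\}$ (using Claim \ref{claim:typeone} to restrict to \triThree{} or \triTwo{}). Each cluster receives at most $O(\nhat)$ distinct labels, so the labeling stage finishes in $O(\nhat)$ time per edge. Detection of configurations (C-1) and (C-2) is then immediate: I maintain three boolean flags per cluster for whether it has received a label of type `0', `1', or of form `(2, ?)', and a single additional linear sweep checks whether both required label types are realized; this is $O(\nhat)$.

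The main obstacle is configuration (C-3), which requires checking whether two clusters carry labels $(2, w)$ and $(2, w')$ with $w \neq w'$. The naive test over pairs is too expensive. I would process the clusters in order, maintaining a running heap $H$ of the distinct weights seen so far from `(2, ?)' labels and applying the early-termination rule described in the text: when inspecting cluster $C_i$, I compare each of its $(2, w)$ labels against $H_{i-1}$; the moment I see either $(a)$ a weight $w$ missing from a non-empty $H_{i-1}$ or $(b)$ a weight $w$ in $H_{i-1}$ while $|H_{i-1}| > 1$, I report a claw and halt. Only if no such mismatch occurs do I update $H_i$, and by the termination rule updates happen only when $H_{i-1}$ is empty, so a given weight is inserted at most once across the whole sweep. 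This bounds the total heap work at $O(\nhat \log \nhat)$ per fixed edge.

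Putting the per-edge cost $O(\nhat \log \nhat)$ together with the $O(\nhat^2)$ crossing edges gives the claimed $O(\nhat^3 \log \nhat)$ bound. For correctness, I would invoke Lemma \ref{lem:clawlabels} in both directions: if the procedure halts, the witnessing labels directly reconstruct a concrete claw (each label $(2, w)$ came from a specific vertex $\av_r \in C_r$, which together with $\av_j, \av_k$ and a companion vertex from the partner cluster exhibits the claw); conversely, if any claw exists, it has at least one crossing base-edge, and by Lemma \ref{lem:clawlabels} one of (C-1)--(C-3) must trigger when that edge is processed, so the algorithm cannot miss it.
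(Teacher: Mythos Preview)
Your proposal is correct and follows essentially the same argument as the paper: fix each of the $O(\nhat^2)$ crossing edges as a candidate base-edge, label the clusters in $O(\nhat)$ time, detect (C-1)/(C-2) via per-cluster flags in $O(\nhat)$, and detect (C-3) via the heap sweep with early termination in $O(\nhat\log\nhat)$, then invoke Lemma~\ref{lem:clawlabels} for correctness. This is exactly the paper's route, including the observation that heap insertions only occur when $H_{i-1}$ is empty so total heap work is $O(\nhat\log\nhat)$ per edge.
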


	\paragraph{Checking for valid bi-partition with a given claw.}
	Finally, suppose we have now identified a claw $\{y_m \mid y_1, y_2, y_3\}$. 
	Recall in our algorithm for {\it \underline{(Case-1)}} in Section \ref{sec:perfect_near_perfect}, 
	we construct a graph $G'$ induced by a set of representative points $V' = \{y_1, \ldots, y_m\}$ one from each cluster $C_i$. 
	We next compute the subgraph $G''$ consisting only of light edges, 
	and the maximum clique  $\mathsf{C} = \{y_1, \ldots, y_s\}$ containing $\{y_1, y_2, y_3\}$ in $G''$. 
	(In particular, let $w$ denote the weight from the ``legs'' of the claw $\{y_m \mid y_1, y_2, y_3\}$, i.e, $w = w(y_m, y_1) = w(y_m, y_2) = w(y_m, y_3)$. Recall that an edge is \emph{light} if its weight is strictly smaller than $w$, and \emph{heavy} otherwise. ) 
	
	Computing maximum clique in general is expensive, however, in our case it turns out that: 
	\begin{claim}\label{claim:cliquecomponent}
		The component in light-edge graph $G''$ containing $\{y_1, y_2, y_3\}$ is a clique. 
	\end{claim}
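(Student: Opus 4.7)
The plan is to exploit Claim \ref{claim:typeone} systematically: since $V' = \{y_1, \ldots, y_m\}$ consists of one representative per cluster $C_i$, every triplet of distinct vertices in $V'$ is a crossing triplet, so it cannot be \triTone{}. In other words, no triplet in $V'$ has a unique strictly-largest edge-weight. This is the only structural ingredient I will need, together with the starting fact that the three edges of the triangle $\{y_1, y_2, y_3\}$ are light (which follows directly from the definition of a claw at $y_m$).

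The core step is the following \emph{propagation lemma}: if $\av_a, \av_b \in V'$ both have light edges to some common vertex $\av_c \in V'$, then the edge $(\av_a, \av_b)$ is also light. To prove this, consider the triplet $\{\av_a, \av_b, \av_c\}$. By Claim \ref{claim:typeone}, it is either \triTwo{} or \triThree{}, so the maximum weight among its three edges is attained by at least two of them. Since $w(\av_a, \av_c) < \mathsf{w}$ and $w(\av_b, \av_c) < \mathsf{w}$, a case analysis on which two edges realize the maximum shows that $w(\av_a, \av_b) \le \max\{w(\av_a, \av_c), w(\av_b, \av_c)\} < \mathsf{w}$, hence $(\av_a, \av_b)$ is light.

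Using this propagation lemma, I would first show by induction on the shortest-path distance in $G''$ that every vertex $\av_r$ in the component containing $\{y_1, y_2, y_3\}$ has light edges to each of $y_1, y_2, y_3$. The base case (distance $0$) is immediate. For the inductive step, $\av_r$ has a light edge to some $\av_s$ at smaller distance; by the inductive hypothesis $\av_s$ has light edges to $y_1, y_2, y_3$, and then applying the propagation lemma to each of the triplets $\{\av_r, \av_s, y_i\}$ for $i=1,2,3$ yields the conclusion.

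With that established, the claim follows in one more line: for any two vertices $\av_a, \av_b$ in the component, both have a light edge to $y_1$, so the propagation lemma applied to $\{\av_a, \av_b, y_1\}$ gives that $(\av_a, \av_b)$ is light. Hence every pair in the component is joined by a light edge, proving the component is a clique in $G''$. The only subtlety I anticipate is bookkeeping in the case analysis for the propagation lemma to make sure the strict inequality ``$< \mathsf{w}$'' is preserved (rather than only $\le \mathsf{w}$); this is handled by noting that in every \triTwo{}/\triThree{} configuration, the weight $w(\av_a,\av_b)$ is bounded above by the maximum of two quantities that are already strictly less than $\mathsf{w}$.
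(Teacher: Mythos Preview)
Your argument is correct. The propagation lemma is sound: since no triplet in $V'$ is \triTone{}, the maximum of the three edge-weights in $\{\av_a,\av_b,\av_c\}$ is attained at least twice, and in every case this forces $w(\av_a,\av_b)\le\max\{w(\av_a,\av_c),w(\av_b,\av_c)\}<\mathsf{w}$. The induction on path-length in $G''$ and the final one-line closure are both fine.

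The paper proves the claim differently and more tersely: it simply points back to an observation already established inside the proof of Lemma~\ref{lem:withclaw}, namely that any $y_r\in V'$ lying outside the maximum clique $\mathsf{C}$ must have \emph{only} heavy edges to $\mathsf{C}$ (otherwise a heavy edge $(y_r,y_a)$ together with a light edge $(y_r,y_b)$ and the light edge $(y_a,y_b)$ inside $\mathsf{C}$ would form a \triTone{} crossing triplet). From this it follows immediately that no vertex outside $\mathsf{C}$ is adjacent in $G''$ to anything in $\mathsf{C}$, so the component of $\{y_1,y_2,y_3\}$ is exactly $\mathsf{C}$. Your route and the paper's rest on the same single ingredient (Claim~\ref{claim:typeone}), but you build the clique ``from the inside out'' via propagation, whereas the paper argues ``from the outside in'' using the already-defined maximum clique. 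Your version has the advantage of being self-contained and of making transparent why the component can be computed in $O(m^2)$ time without ever invoking maximum-clique computation; the paper's version is shorter because the work was already done earlier.
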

	Indeed, in the proof of Lemma \ref{lem:withclaw}, we have already shown that for any $y_r \in V'$ outside the maximum clique $\mathsf{C}$, it is necessary that all edges $(y_r, y_i)$ are heavy for $i\in [1,s]$, which implies the above claim. 
	
	Computing the grpah $G'$, $G''$ and the maximum clique thus takes $O(\nhat^3)$ time. 
	
	What remains is to show given the clique $\mathsf{C} = \{y_1, \ldots, y_s\}$, we can check whether there is a valid bi-partition $\Pi_i$, for $i\in [1, s]$, in $O(\nhat^3)$ time. 
	
	To this end, we maintain an array $L$ of size $s = O(m) = O(\nhat)$, where each entry $L[i]$ will be initialized to be $0$. 
	We scan through all crossing triplet $\{\av_i, \av_j, \av_k\}$. It cannot be \triTone. If it is \triThree, we ignore it. Otherwise, suppose it is \triTwo{} with $w_{ij} = w_{ik} > w_{jk}$. 
	Assuming w.l.o.g that these 3 points are from $C_i, C_j$, and $C_k$ respectively. 
	The only bi-partitions that are not consistent with this triplet is $\Pi_i$, when $\av_j$ and $\av_k$ will be first merged in the resulting binary HC-tree, before merging with $i$. 
	Hence we simply set $L[i] = 1$.

	After we process all crossing triplets in $O(\nhat^3)$ time, we linearly scan array $L$.
	We do not need to process non-crossing triplets as it is already shown at the beginning of proof for Lemma \ref{lem:noclaw} (Section \ref{appendix:lem:noclaw}) that non-crossing triplets will remain valid to any bi-partition arised from merging clusters in $\mathcal{C}$. 
	Hence in the end, if there exists any entry $L[r] =0$, it means that bi-partition $\Pi_r$ must be consistent with all triplets, and thus is valid. Otherwise, there is no valid bi-partition possible. 
	The entire process takes $O(\nhat^3)$ time. 
	Combining this with Lemma \ref{lem:clawtime}, we conclude: 
	
	\begin{lemma}
		There is an algorithm to identify a claw if it exists in $O(\nhat^3 \log \nhat)$ time. If a claw exists, then we can check whether a valid bi-partition exists or not, and compute one if it exists, in $O(\nhat^3)$ time. 
		
		Putting everything together, procedure \validBisect($\aG$) takes $O(\nhat^3 \log \nhat)$ time where $\nhat$ is the number of vertices in $\aG$. 
	\end{lemma}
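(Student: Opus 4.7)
My plan is to assemble the three ingredients that have already been worked out in the preceding discussion into a single time bound for \validBisect($\aG$).

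First, for claw detection in $O(\nhat^3 \log \nhat)$ time, I would iterate over all $O(\nhat^2)$ crossing edges $(\av_j, \av_k)$. For each such fixed edge, I scan all $O(\nhat)$ vertices in the remaining clusters and assign each cluster $C_r$ one of the distinct labels \{`$0$', `$1$', `$(2,w)$'\} according to the type of the crossing triplet $\{\av_j,\av_k,\av_r\}$ (this enumeration is exhaustive because Claim \ref{claim:typeone} forbids \triTone{} crossing triplets). Then I invoke Lemma \ref{lem:clawlabels}, which characterizes the existence of a claw containing $(\av_j,\av_k)$ as a base-edge by one of the three labeling configurations (C-1), (C-2), (C-3). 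Checking (C-1) and (C-2) requires only three per-cluster flags and costs $O(\nhat)$; checking (C-3) is done by the heap-based scan described in the text, whose total cost per crossing edge is $O(\nhat \log \nhat)$ because we either abort as soon as two distinct `$(2,\cdot)$' weights are seen or insert each weight into the heap exactly once. Summing over all crossing edges yields Lemma \ref{lem:clawtime}.

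Second, given a claw $\{y_m \mid y_1,y_2,y_3\}$, I compute the representative graph $G'$ on $V'=\{y_1,\dots,y_m\}$ in $O(\nhat^2)$ time, extract the light-edge subgraph $G''$, and return the connected component of $G''$ containing $\{y_1,y_2,y_3\}$; by Claim \ref{claim:cliquecomponent} this component is exactly the required maximum clique $\mathsf{C}=\{y_1,\dots,y_s\}$, so a BFS/DFS of cost $O(\nhat^2)$ suffices. To check validity of the $s$ candidate bi-partitions $\Pi_i=(C_i,\cup_{\ell\neq i}C_\ell)$, I allocate an array $L[1..s]$ initialized to $0$ and, for each of the $O(\nhat^3)$ crossing triplets, detect whether it is \triTwo{} and, if so, set $L[i]=1$ for the unique index $i$ whose bi-partition would split the two equally-weighted endpoints away from the third vertex. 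By Lemma \ref{lem:noclaw}'s preamble, non-crossing triplets impose no additional constraint, so after the scan any $L[r]=0$ gives a valid $\Pi_r$ and $L\equiv 1$ certifies that no valid bi-partition exists. The whole step runs in $O(\nhat^3)$ time.

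Third, to put everything together for \validBisect($\aG$), I first compute the minimal valid partition $\mathcal{C}$ using the algorithm of Aho et al.\ in $O(\nhat^3 \log \nhat)$ time (Proposition \ref{prop:goodcluster}). I then branch on whether a claw exists: the claw-detection step above either produces a claw in $O(\nhat^3\log\nhat)$ time and triggers the $O(\nhat^3)$ bi-partition check of the previous paragraph, or certifies that no claw exists, in which case Lemma \ref{lem:noclaw} provides an $O(\nhat^3)$ algorithm for the no-claw branch. All branches are dominated by $O(\nhat^3\log\nhat)$, giving the claimed bound.

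The main obstacle is already the claw-detection step, where the naive scheme that iterates over all quadruples costs $O(\nhat^4)$; saving a factor of $\nhat$ requires exactly the structural Lemma \ref{lem:clawlabels} (to turn the search into a label-matching problem on clusters) together with the heap-based termination argument to keep the per-edge cost at $O(\nhat\log\nhat)$ rather than $O(\nhat^2)$. Once this bottleneck is handled, the remaining arguments are straightforward bookkeeping.
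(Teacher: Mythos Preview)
Your proposal is correct and follows essentially the same approach as the paper: you invoke Lemma~\ref{lem:clawlabels} together with the heap-based scan to get the $O(\nhat^3\log\nhat)$ claw-detection bound (exactly Lemma~\ref{lem:clawtime}), use Claim~\ref{claim:cliquecomponent} to reduce the maximum-clique computation to a connected-component search, and then run the single $O(\nhat^3)$ pass over crossing triplets with the array $L$ to test all candidate $\Pi_i$ simultaneously. The final assembly via Proposition~\ref{prop:goodcluster} and Lemma~\ref{lem:noclaw} is identical to the paper's.
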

	
	Finally, this means that each depth level during our recursive algorithm takes $O(\sum_{i} n_i^3 \log n_i) = O(n^3 \log n)$ time, where $n_i$'s are the size of each subgraph at this level and $\sum_i n_i = n$. 
	Since there are at most $O(n)$ levels, the total time complexity of algorithm \algBisection($G$) is $O(n^4\log n)$. 
	This finishes the proof of Theorem \ref{thm:perfectHC}.

	\subsection{Proof of Theorem \ref{thm:deltaperfect}}
	\label{appendix:thm:deltaperfect}

	To prove claim (i), note that $\frac{1}{\delta} \, w(u,v) \le w^*(u,v) \le \delta \, w(u,v)$ for any $u,v\in V$. 
	It then follows that $\BC(G) \geq \frac{1}{\delta} \BC(G^*)$; Furthermore, for any HC-tree $T$, $\TC_{G}(T) \leq \delta \, \TC_{G^*}(T)$. Let $T^*$ be the optimal HC-tree for the graph $G^*$ with perfect HC-structure; $\rho_{G^*}(T^*) =1$. We then have
	\begin{align*}
	\RC_{G}^* & \leq  \frac{\TC_{G}(T^*)}{\BC(G)} 
	\leq  \frac{\delta \, \TC_{G^*}(T^*)}{\frac{1}{\delta} \BC(G^*)} 
	=  \delta^2 \cdot \RC^*_{G^*}(T^*) 
	=  \delta^2,
	\end{align*}
	proving claim (i) of Theorem \ref{thm:deltaperfect}.
	
	We now prove claim (ii). 
	To this end, given the weighted graph $G = (V, E, w)$ with $V = \{v_1, \ldots, v_n \}$, we will first set up a collection $\mathcal{R}$ of \emph{constraints} of the form $\{i, j | k\}$ as follows: 
	\begin{quote}
		Take any triplet $i,j, k \in V$, with weights $w_{ij} \ge w_{jk} \ge w_{ik}$ w.l.o.g. 
		We say that edge $(i,j)$ has \emph{approximately-largest weight among $i,j,k$} if we have $w_{ij}  > \delta^2 w_{jk}$ (and thus $w_{ij} > \delta^2 w_{ik}$ as well).  
		
		If $(i,j)$ has approximately-largest weight, then we add constraint $\{i, j | k\}$ to $\mathcal{R}$. 
	\end{quote}
	
	We aim to compute a binary tree $T$ such that $T$ is \emph{consistent} with all constraints in $\mathcal{R}$; that is, for each $\{i, j | k\} \in \mathcal{R}$, $\LCA(i,j)$ is a proper descendant of $\LCA(i,j,k)$ ((i.e, leaves $v_i$ and $v_j$ are merged first, before merging with $v_k$). 
	
	\begin{figure}[htbp]
		\centering
		\includegraphics[height=3cm]{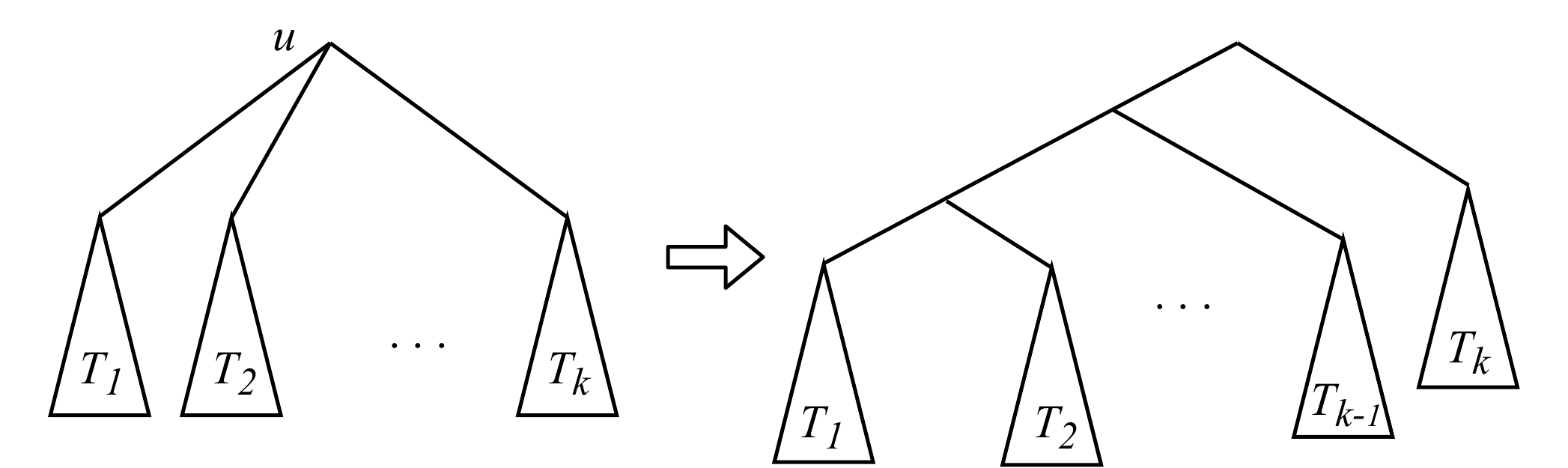}
		\caption{A node $u$ with $k$ children is converted to a collection of binary tree nodes.
			\label{fig:binary}}
	\end{figure} 
	\begin{claim}\label{claim:goodapproxT}
		We can compute a binary HC-tree $T$ consistent with $\mathcal{R}$, if it exists, in $O(n^3)$ time. 
	\end{claim}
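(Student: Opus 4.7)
My plan is to reduce the statement to the rooted triplet consistency problem and invoke the \textsf{BUILD} algorithm of Aho, Sagiv, Szymanski and Ullman \cite{Aho_1981}, which is already the workhorse behind Proposition \ref{prop:goodcluster}. Each element of $\mathcal{R}$ is literally a rooted triplet constraint $\{i,j|k\}$, so feeding $\mathcal{R}$ to \textsf{BUILD} either returns a rooted tree $T_0$ whose LCA structure realizes every constraint in $\mathcal{R}$, or correctly reports infeasibility. Thus existence and computation of a consistent tree both come for free from \cite{Aho_1981}; what I need to handle separately is (a)~binarization, and (b)~the $O(n^3)$ time bound.

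For binarization, I would argue that consistency with constraints of the form $\{i,j|k\}$ is preserved under arbitrary refinement. Concretely, if an internal node $u$ of $T_0$ has children $v_1,\ldots,v_d$ with $d>2$, I replace $u$ by any binary tree whose leaves (in the newly created subtree) are $v_1,\ldots,v_d$. For two leaves $a,b$ of $T_0$ whose LCA was $u$, their new LCA is some descendant of $u$; for all other pairs, the LCA is unchanged. Hence, for any triplet $\{i,j,k\}$, if $\LCA_{T_0}(i,j)$ was a proper descendant of $\LCA_{T_0}(i,j,k)$, the same relation holds in the binarized tree $T$. This yields a binary HC-tree consistent with $\mathcal{R}$.

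For the running time, the key observation is $|\mathcal{R}| \le \binom{n}{3} = O(n^3)$, and \textsf{BUILD} is a recursive top-down partition scheme: for the current vertex subset $U$, form an auxiliary graph $H_U$ on $U$ whose edges are $(i,j)$ for each constraint $\{i,j|k\}\in\mathcal{R}$ with $i,j,k\in U$, then recurse on each connected component of $H_U$. The plan is a charging argument: each constraint $\{i,j|k\}\in\mathcal{R}$ contributes an edge at every recursion where $U$ contains all three of $i,j,k$, but the total work it incurs across the entire recursion can be charged to the unique recursion level at which $k$ gets separated from the component of $\{i,j\}$, since below that level the constraint is no longer active and above it the constraint simply sits inside one component. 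Maintaining $H_U$ incrementally along the recursion (adding/removing edges as vertices drop out) and using a union-find to compute components in near-linear time per level then yields an overall $O(|\mathcal{R}|+n^2)=O(n^3)$ bound.

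The main obstacle I expect is precisely the time accounting: the textbook bound for \textsf{BUILD} is $O(|\mathcal{R}|\cdot n)$, which would give $O(n^4)$ rather than $O(n^3)$. The care needed is to avoid reconstructing $H_U$ from scratch at every recursive call and to avoid re-processing triplets in multiple subproblems, which I plan to do via the incremental/charging scheme sketched above (and which in spirit parallels the refined analysis already used in the proof of Theorem \ref{thm:perfectHC}). Everything else, including the reduction to \textsf{BUILD} and the binarization step, is essentially mechanical.
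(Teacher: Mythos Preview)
Your reduction to rooted triplet consistency and the binarization argument are exactly what the paper does, including the observation that arbitrary refinement of a multifurcation preserves every constraint of the form $\{i,j\mid k\}$ (the paper even remarks that this is the reason only \triTone{} constraints are placed in $\mathcal{R}$).

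The difference is entirely in how the $O(n^3)$ running time is obtained. The paper does not re-analyze \textsf{BUILD}; it simply invokes the fast implementation of Jansson et al.~\cite{Jansson_2005_FastAhoImplementation}, which runs in $\min\{O(n+m\log^2 n),\,O(m+n^2\log n)\}$ time on $m$ triplet constraints over $n$ leaves. With $m=|\mathcal{R}|=O(n^3)$ this is $O(n^3)$ directly, and nothing further needs to be argued.

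Your proposed alternative has a real gap. The scheme ``maintain $H_U$ incrementally and use union--find'' does not match the access pattern of \textsf{BUILD}: as the recursion descends, edges are \emph{removed} from the auxiliary graph (a constraint $\{i,j\mid k\}$ stops contributing the edge $(i,j)$ once $k$ leaves the current subset), and union--find does not support deletions. What you actually need is a decremental or offline dynamic connectivity structure, and getting the total work down to $O(|\mathcal{R}|)$ across all recursion levels is precisely the nontrivial content of \cite{Jansson_2005_FastAhoImplementation} (and of earlier work by Henzinger, King and Warnow). Your charging idea---charge each constraint to the level where $k$ separates---bounds the number of edge \emph{deletions}, but not the cost of recomputing components at each level from the edges that remain. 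So either cite the existing fast implementation as the paper does, or be prepared to reproduce a genuine dynamic-graph argument; the sketch as written would only yield $O(n^4)$.
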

	\begin{proof}
		Our problem turns out to be almost the same as the so-called \emph{rooted triplets consistency (RTC)} problem, which has been studied widely in the literature of phylogenetics; e.g, \cite{Byrka_2010_NewResults,Jansson_2005_FastAhoImplementation}. 
		In particular, it is shown in \cite{Jansson_2005_FastAhoImplementation} that a tree $T'$ consistent with a collection of $m$ input constraints on $n$ nodes can be computed in $\min\{O(n + m \log^2 n), O(m + n^2 \log n)\}$ time (if it exists), which is $O(n^3)$ in our case as $m=O(n^3)$. 
		The only difference is that the output tree $T'$ consistent with $\mathcal{R}$ may not be binary. 
		However, consider any node $u\in T'$ with more than 2 children. 
		We claim that we can locally resolve it arbitrarily into a collection binary tree nodes; an example is given in Figure \ref{fig:binary}. 
		It is easy to see that this process does not violate any constraint in $\mathcal{R}$. (We remark that this is due to the fact that $\mathcal{R}$ is generated from only \triTone{} triplets. If there are constraints from \triTwo{} constraints, then this is no longer true. This is why in algorithm \algBisection{} we have to spend much effort to obtain a valid bi-partition in order to derive a binary HC-tree.) 
	\end{proof}
	
	Next, we show that if $G$ has $\delta$-perfect HC-structure, then such a tree $T$ indeed exists. 
	\begin{lemma}\label{lem:approxtree-exists}
		Suppose $G=(V,E,w)$ is a $\delta$-perturbation of a graph $G^* = (V, E, w^*)$ with perfect HC-structure. Then there exists a binary HC-tree $T$ consistent with all constraints in the set $\mathcal{R}$ as constructed above. 
	\end{lemma}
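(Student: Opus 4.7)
\noindent\textbf{Proof plan for Lemma \ref{lem:approxtree-exists}.}
The plan is to show that any binary optimal HC-tree for $G^*$ itself serves as the witness $T$. First, I would invoke Observation \ref{obs:lowerbnd}(iii) applied to $G^*$: since $G^*$ has a perfect HC-structure, there exists a binary HC-tree $T^*$ with $\rho_{G^*}(T^*) = 1$, i.e., $\TC_{G^*}(T^*) = \BC(G^*)$. By Definition \ref{def:perfectHC} and the intuition spelled out just before it, this equality forces $\triplecost_{T^*, G^*}(i,j,k) = \mintricost_{G^*}(i,j,k)$ for every triplet $\{i,j,k\}$, so $T^*$ is consistent with the similarities of every triplet in $G^*$.

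The core step is then to verify that $T^*$ satisfies every constraint in $\mathcal{R}$. Fix a constraint $\{i,j \mid k\} \in \mathcal{R}$; by construction of $\mathcal{R}$, after relabeling we have $w_{ij} \ge w_{jk} \ge w_{ik}$ in $G$ together with the strict inequality $w_{ij} > \delta^2 \, w_{jk}$. I would then use the two-sided $\delta$-perturbation bound $\tfrac{1}{\delta} w(e) \le w^*(e) \le \delta \, w(e)$ to transfer this gap to $G^*$:
\[
w^*_{ij} \;\ge\; \tfrac{1}{\delta}\, w_{ij} \;>\; \tfrac{1}{\delta}\cdot \delta^2\, w_{jk} \;=\; \delta\, w_{jk} \;\ge\; w^*_{jk},
\]
and by the same computation $w^*_{ij} > \delta\, w_{ik} \ge w^*_{ik}$. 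Thus $(i,j)$ is the \emph{strictly} largest pair in $G^*$ for this triplet, so $\mintricost_{G^*}(i,j,k) = w^*_{ik} + w^*_{jk}$, and this minimum is achieved by $\triplecost_{T^*, G^*}(i,j,k)$ only when relation $\{i,j \mid k\}$ holds in $T^*$ (inspecting the four cases of Definition \ref{def:total-cost}). Consequently $T^*$ realizes $\{i,j \mid k\}$ as required.

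Once every constraint of $\mathcal{R}$ is accounted for, setting $T := T^*$ furnishes the desired binary HC-tree consistent with $\mathcal{R}$, completing the proof. The only potentially subtle point is verifying that a strict maximum $w^*_{ij} > w^*_{ik}, w^*_{jk}$ forces $\{i,j\mid k\}$ in $T^*$ (rather than the tie relation $\{i\mid j\mid k\}$ or one of the other orderings); this is a direct case check using Definition \ref{def:total-cost} and the equality $\triplecost_{T^*}(i,j,k) = \mintricost_{G^*}(i,j,k)$. The factor $\delta^2$ in the definition of $\mathcal{R}$ is exactly calibrated to make the one-line inequality above go through with strict comparison on both sides, which is what makes the strategy work; I do not anticipate any further obstacle.
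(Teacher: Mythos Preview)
Your proposal is correct and follows essentially the same approach as the paper: take an optimal (binary) tree $T^*$ for $G^*$, use the $\delta$-perturbation bounds to convert the ``approximately-largest'' condition $w_{ij} > \delta^2 \max\{w_{ik},w_{jk}\}$ into a strict maximum $w^*_{ij} > \max\{w^*_{ik},w^*_{jk}\}$ in $G^*$, and conclude that $T^*$ must realize $\{i,j\mid k\}$. Your write-up is in fact slightly more careful than the paper's, explicitly invoking Observation~\ref{obs:lowerbnd}(iii) for binarity and spelling out the case check in Definition~\ref{def:total-cost} that rules out the tie relation.
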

	\begin{proof}
		First, consider an optimal HC-tree $T^*$ for $G^*$. 
		Now given any constraint $\{i, j | k\} \in \mathcal{R}$, note that edge $(i,j)$ must be approximately-largest among $i,j,k$. 
		As $G$ is a $\delta$-perturbation of $G^*$, we have 
		$$w^*_{ij} \ge \frac{1}{\delta} w_{ij} > \frac{1}{\delta} \cdot \delta^2 \max\{ w_{ik}, w_{jk}\} \ge \delta \max\{ \frac{1}{\delta}w^*_{ik}, \frac{1}{\delta}w^*_{jk} \} = \max\{ w^*_{ik}, w^*_{jk} \}, $$
		that is, $w^*_{ij} > \max \{ w^*_{ik}, w^*_{jk} \}$. This is a \triTone{} triplet and thus an optimal tree $T^*$ for $G^*$ has to be consistent with the constraint $\{i, j | k\}$. 
		Hence $T^*$ is consistent with all constraints in $\mathcal{R}$, establishing the lemma. 
	\end{proof}
	 
	\begin{lemma}\label{lem:approxquality}
		Let $T$ be any binary HC-tree that is consistent with constraints in $\mathcal{R}$ as constructed above. Then $\rho_G(T) \le (\delta^2+1) \rho_G^*$. 
	\end{lemma}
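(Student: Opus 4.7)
The plan is to control $\TC_G(T)$ triplet-by-triplet against $\BC(G)$, and then use $\rho_G^* \ge 1$ from Observation \ref{obs:lowerbnd} to conclude. Fix any triplet $\{i,j,k\}$ and, without loss of generality, order its weights so that $w_{ij} \ge w_{jk} \ge w_{ik}$; then $\mintricost_G(i,j,k) = w_{ik} + w_{jk}$, and because $T$ is binary, $\triplecost_T(i,j,k)$ equals one of $w_{ik}+w_{jk}$, $w_{ij}+w_{ik}$, or $w_{ij}+w_{jk}$, with the last being the largest.

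I would then split into two cases according to how the triplet interacts with the constraint set $\mathcal{R}$. Case 1: $w_{ij} > \delta^2 w_{jk}$. Then by the definition of $\mathcal{R}$ the constraint $\{i,j\mid k\}$ lies in $\mathcal{R}$, and since $T$ is consistent with every constraint in $\mathcal{R}$, the relation $\{i,j\mid k\}$ holds in $T$, giving $\triplecost_T(i,j,k) = w_{ik}+w_{jk} = \mintricost_G(i,j,k)$. Case 2: $w_{ij} \le \delta^2 w_{jk}$. Then, using the worst-case bound above,
\[
\triplecost_T(i,j,k) \;\le\; w_{ij}+w_{jk} \;\le\; (\delta^2+1)\, w_{jk} \;\le\; (\delta^2+1)(w_{ik}+w_{jk}) \;=\; (\delta^2+1)\,\mintricost_G(i,j,k).
\]
So in both cases $\triplecost_T(i,j,k) \le (\delta^2+1)\,\mintricost_G(i,j,k)$.

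Summing this inequality over all distinct triplets $\{i,j,k\}\subset[1,n]$ yields $\TC_G(T) \le (\delta^2+1)\,\BC(G)$, i.e.\ $\rho_G(T) \le \delta^2+1$. Since Observation \ref{obs:lowerbnd}(i) gives $\rho_G^* \ge 1$, we obtain $\rho_G(T) \le (\delta^2+1)\,\rho_G^*$, as claimed. There is no real obstacle here: the heart of the argument is the per-triplet dichotomy, where the ``approximately-largest'' threshold $\delta^2$ in the definition of $\mathcal{R}$ is exactly tuned so that any triplet \emph{not} generating a constraint has its largest weight within a factor $\delta^2$ of the next, which is what makes the loss in Case 2 only an additional $(\delta^2+1)$ factor over $\mintricost_G$. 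The only care needed is in the WLOG ordering and in noting that, for a binary $T$, the only three possible values of $\triplecost_T(i,j,k)$ are the three pairwise sums so the ``worst case'' bound $w_{ij}+w_{jk}$ really is the maximum.
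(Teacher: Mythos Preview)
Your proof is correct, but it takes a different route from the paper's. The paper compares $T$ triplet-by-triplet against an \emph{optimal} HC-tree $\widehat{T}^*$ for $G$: for each triplet it shows $\triplecost_{T,G}(i,j,k) \le (1+\delta^2)\,\triplecost_{\widehat{T}^*,G}(i,j,k)$ by a case analysis on how the relation in $T$ differs from that in $\widehat{T}^*$, which sums directly to $\TC_G(T) \le (1+\delta^2)\,\TC_G(\widehat{T}^*)$ and hence $\rho_G(T) \le (1+\delta^2)\rho_G^*$ without any appeal to $\rho_G^*\ge 1$. You instead compare $\triplecost_T$ against $\mintricost_G$ and obtain the absolute bound $\rho_G(T)\le \delta^2+1$, then invoke Observation~\ref{obs:lowerbnd}(i). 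Your argument is shorter and cleaner (two cases instead of the paper's nested case split, and no auxiliary tree $\widehat{T}^*$ to track), and it yields the stronger intermediate statement $\rho_G(T)\le \delta^2+1$ as a byproduct; the paper's approach, on the other hand, reaches the ratio bound against $\rho_G^*$ directly and would still give a nontrivial guarantee even in a setting where $\rho_G^*$ could exceed~$1$. Both routes hinge on exactly the same observation you articulate at the end: the $\delta^2$ threshold in the definition of $\mathcal{R}$ is precisely what caps the loss on unconstrained triplets.
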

	\begin{proof}
		Consider an optimal (binary) HC-tree $\widehat{T}^*$ for the graph $G$. 
		Recall Definition \ref{def:total-cost},
		$$\TC_{G}(\widehat{T}^*)  = \sum_{i,j,k\in V} \triplecost_{\widehat{T}^*, G}(i,j,k). $$
		
		For any triplet $i, j, k \in V$, suppose its order in the optimal HC-tree $\widehat{T}^*$ is $\{i,j|k\}$ (i.e, $i$ and $j$ are first merged, and then merged with $k$ in $\widehat{T}^*$). 
		If the order in $T$ is the same, then $\triplecost_{T, G}(i,j,k) = \triplecost_{\widehat{T}^*, G}(i,j,k)$. 
		Now assume that the order in $T$ is different from the order in $T^*$, say suppose w.l.o.g that relation $\{i,k|j\}$ holds in $T$. Note that $\triplecost_{\widehat{T}^*, G}(i,j,k) = w_{ik}+w_{jk}$. 
		
		There are two possibilities: \\
		(1) $\{i, k| j\}$ is a constraint in $\mathcal{R}$, meaning that $(i,k)$ is approximately-largest among $i,j, k$. 
		Hence 
		\begin{align*}
		\triplecost_{T,G}(i,j,k) = w_{ij}+w_{jk} &\le w_{ik} + w_{jk} \le \triplecost_{\widehat{T}^*, G}(i,j,k). 
		\end{align*}
		
		\noindent (2) $\{i, k| j\}$ is not a constraint in $\mathcal{R}$. 
		Note that as $T$ is consistent with $\{i, k | j\}$, this means that neither $\{j, k | i\}$ nor $\{ i, j | k\}$ can belong to $\mathcal{R}$. So none of the three edges $(i, k)$, $(j, k)$, and $(i, j)$ can be approximately-largest among $i,j,k$. 
		In particular, that $(i,j)$ is \emph{not} approximately-largest among $i,j,k$ means that $w_{ij} \le \delta^2 \max\{ w_{ik}, w_{jk} \}$. If $w_{jk} \le w_{ik}$, then
		\begin{align*}
		\triplecost_{T,G}(i,j,k) = w_{ij}+w_{jk} &\le \delta^2 w_{ik}+w_{jk} \le \delta^2(w_{ik}+w_{jk}) = \delta^2 \cdot \triplecost_{\widehat{T}^*, G}(i,j,k). 
		\end{align*}
		Otherwise $w_{jk} > w_{ik}$, in which case we have: 
		\begin{align*}
		\triplecost_{T,G}(i,j,k) &= w_{ij}+w_{jk} \le \delta^2 w_{jk}+w_{jk}\\
		&\le (1+\delta^2)w_{jk} \le (1+\delta^2)(w_{ik}+w_{jk}) = (1+\delta^2) \cdot \triplecost_{\widehat{T}^*, G}(i,j,k). 
		\end{align*}
		As $\delta \ge 1$, in all cases, we have $\triplecost_{T,G}(i,j,k) \le (1+\delta^2) \cdot \triplecost_{\widehat{T}^*, G} (i,j,k)$ for any $i, j, k \in V$. 
		It then follows that 
		$$\TC_G(T) \le (1+\delta^2) \cdot \TC_G(\widehat{T}^*) ~~\Rightarrow~~ \rho_G(T) \le (1+\delta^2)\rho_G(\widehat{T}^*) = (1+\delta^2)\rho_G^*.
		$$ 
	\end{proof}
	
	Combining Claim \ref{claim:goodapproxT}, Lemma \ref{lem:approxtree-exists} and Lemma \ref{lem:approxquality}, claim (ii) of Theorem \ref{thm:deltaperfect} then follows. 
	
	\section{Missing details from Section \ref{sec:random_graphs}}
	\label{appendix:sec:random_graphs}
	
	\subsection{Proof of Theorem \ref{thm:randomgraph}}
	\label{appendix:thm:randomgraph}
	
    Since our total-cost is closely related to Dasgupta's cost function (recall Claim \ref{claim:totalCrelation}), its proof is almost the same as the argument used in \cite{Cohen_SODA2018} to show the concentration of Dasgupta's cost function on a random graph. Nevertheless, we sketch the proof for completeness. 
	Recall that an optimal tree for a graph w.r.t. \newcost{} function is also optimal w.r.t. the total-cost function. We first claim the following. 
	
	\begin{proposition}\label{prop:total-cost}
		Given a $n \times n$ probability matrix $\edgeP$, assume $\edgeP_{ij} = \omega(\sqrt{\frac{\log n}{n}})$ for $i, j \in [1, n]$. 
		Given a random graph $G=(V,E)$, let $T^*$ be an optimal HC-tree for $G$ w.r.t. \newcost{}, and let $\expT^*$ be an optimal tree for the expectation-graph $\expG$. 
		Then we have that w.h.p., 
		$$|\TC_G(T^*) - \TC_{\expG}(\expT^*)| = o(\TC_{\expG}(\expT^*)). $$ 
	\end{proposition}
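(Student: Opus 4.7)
The plan is to establish the concentration via a \emph{uniform} (over all binary trees) deviation bound combined with a sandwich argument using optimality in both $G$ and $\expG$. Fix any binary HC-tree $T$. Since $G$ is sampled from $\edgeP$ with independent edges, we can write
\[
\TC_G(T) \;=\; \sum_{i<j} X_{ij}\, c_{ij}(T), \quad c_{ij}(T) := |\myleaves(T[\LCA(i,j)])| - 2,
\]
where $X_{ij}$ are independent Bernoulli$(\edgeP_{ij})$ random variables and $c_{ij}(T) \in [0,n-2]$ depends only on the (fixed) tree $T$. Thus $\mathbb{E}[\TC_G(T)] = \TC_{\expG}(T)$, and Bernstein's inequality applied to this bounded sum of independent variables (with variance $\le (n-2)\,\TC_{\expG}(T)$) gives
\[
\Pr\!\left[\,|\TC_G(T) - \TC_{\expG}(T)| > \varepsilon\, \TC_{\expG}(T)\right] \;\le\; 2\exp\!\Big(-\Omega\big(\varepsilon^2 \TC_{\expG}(T)/n\big)\Big).
\]

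Next, I would lower bound $\TC_{\expG}(T)$ uniformly in $T$. Using Claim~\ref{claim:totalCrelation} applied to the unit-weight complete graph $G_c$, one checks that $\sum_{i<j} c_{ij}(T) = \TC_{G_c}(T) = 2\binom{n}{3} = \Theta(n^3)$ for \emph{every} binary tree on $n$ leaves. Consequently, under the hypothesis $\edgeP_{ij} = \omega(\sqrt{\log n/n})$,
\[
\TC_{\expG}(T) \;\ge\; \Big(\min_{ij} \edgeP_{ij}\Big) \cdot 2\binom{n}{3} \;=\; \omega\!\left(n^{5/2}\sqrt{\log n}\right),
\]
uniformly over all binary $T$. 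The number of binary HC-trees on $n$ labelled leaves is $(2n-3)!! = 2^{O(n\log n)}$, so I can pick $\varepsilon = \Theta((\log n/n)^{1/4}) = o(1)$ for which $\varepsilon^2 \TC_{\expG}(T)/n = \omega(n\log n)$; a union bound over all binary trees then yields, w.h.p.,
\[
\TC_G(T) \;=\; (1 \pm \varepsilon)\, \TC_{\expG}(T) \qquad \text{for every binary HC-tree } T.
\]

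Finally, on the high-probability event above, I sandwich using optimality on both sides. Since $T^*$ minimises $\TC_G$ and $\expT^*$ minimises $\TC_{\expG}$,
\[
\TC_G(T^*) \;\le\; \TC_G(\expT^*) \;\le\; (1+\varepsilon)\,\TC_{\expG}(\expT^*),
\]
\[
\TC_G(T^*) \;\ge\; (1-\varepsilon)\,\TC_{\expG}(T^*) \;\ge\; (1-\varepsilon)\,\TC_{\expG}(\expT^*),
\]
which together give $|\TC_G(T^*) - \TC_{\expG}(\expT^*)| \le \varepsilon\, \TC_{\expG}(\expT^*) = o(\TC_{\expG}(\expT^*))$, as desired. (Observation~\ref{obs:lowerbnd}(iii) lets us restrict attention to binary trees throughout.)

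The main obstacle is calibrating the deviation parameter $\varepsilon$ so that the per-tree Bernstein bound is strong enough to survive the $2^{\Theta(n\log n)}$-size union bound over all binary trees. This is exactly where the hypothesis $\edgeP_{ij} = \omega(\sqrt{\log n/n})$ enters: it forces $\TC_{\expG}(T)/n$ to dominate $n\log n$ by a growing factor for \emph{every} binary $T$, leaving enough slack to choose $\varepsilon = o(1)$. A secondary subtlety is that $T^*$ is data-dependent, which is precisely why per-tree concentration is insufficient and the uniform union bound (rather than two separate applications at $\expT^*$ and the true optimiser) is required.
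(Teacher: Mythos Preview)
Your proposal is correct and follows essentially the same route as the paper: a per-tree concentration inequality, the uniform lower bound $\TC_{\expG}(T)\ge(\min_{ij}\edgeP_{ij})\cdot 2\binom{n}{3}$ via the clique total-cost, a union bound over the $2^{\Theta(n\log n)}$ HC-trees, and then the same two-sided sandwich using optimality of $T^*$ for $G$ and of $\expT^*$ for $\expG$. The only cosmetic difference is that the paper invokes Hoeffding's inequality (with $\sum_{i<j}(b_{ij}-a_{ij})^2\le\binom{n}{2}n^2$ and $\epsilon=\sqrt{C\log n/n}/\edgeP_{\min}$) where you use Bernstein with the variance bound $(n-2)\,\TC_{\expG}(T)$ and $\varepsilon=\Theta((\log n/n)^{1/4})$; both choices give a per-tree failure probability of $\exp(-\Omega(n\log n))$ under the hypothesis $\edgeP_{ij}=\omega(\sqrt{\log n/n})$, which is exactly what the union bound needs.
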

	\begin{proof}
		Note that there are $2^{cn\log n}$, for some constant $c>0$, possible HC-trees spanned on $n$ nodes. 
		\begin{claim}\label{claim:singletreeconcentration}
			For an arbitrary but fixed HC-tree $T$, the following holds for some constant $C> c$:
			\[
			\mathbb{P}[~|\TC_{G}(T) - \TC_{\expG}(T)| \geq o(\TC_{\expG}(T))~] \leq \exp (-C n \log n). 
			\]
		\end{claim}
		\begin{proof}
			The proof is almost the same as the proof of Theorem 5.6 of \cite{Cohen_SODA2018} (arXiv version) using a generalized version of Hoeffding's inequality. 
			Specifically, set $Y_{ij} = \mathbf{1}_{(i, j) \in E}$ be the indicator variable of whether $(i,j)$ is in the graph or not. 
			\[
			\text{Set} ~ Z_{ij} = |\myleaves(T[\LCA(i, j)]) - 2| \cdot Y_{ij}; ~~\text{then}~~\TC_{G}(T) = \sum_{i < j} Z_{ij}. 
			\]
			Let $\kappa(n)$ denote the total-cost for a clique of size $n$, which is $2\cdot \binom{n}{3}$. Let $\edgeP_{min}$ denote the smallest entry in $\edgeP$. 
			It is easy to verify that: 
			\begin{enumerate}
				\item $\mathbb{E} [\TC_{G}(T)] = \sum_{i < j} \mathbb{E} [Z_{ij}] =  \sum_{i < j} |\myleaves(T[\LCA(i, j)]) - 2| \cdot \edgeP_{ij} = \TC_{\expG} (T) \geq \kappa(n) \cdot \edgeP_{\min}$.
				\item $ a_{ij} = 0 \leq Z_{ij} \leq |\myleaves(T[\LCA(i, j)]) - 2| = b_{ij}$, for any $i < j \in [1, n]$ \\
				$\sum_{i < j} (b_{ij} - a_{ij})^2 = \sum_{i < j} |\myleaves(T[\LCA(i, j)]) - 2|^2 \leq \binom{n}{2} \cdot n^2$. 
				\item Set $\epsilon = \frac{\sqrt{C \log n / n}}{\edgeP_{\min}} = o(1)$; thus 
				$\epsilon \cdot \TC_{\expG}(T) = o(\TC_{\expG}(T))$.
			\end{enumerate}
			Note that by (1) above, $\mathbb{E} [\TC_{G}(T)] = \TC_{\expG} (T)$. 
			Plug in all these terms into the Hoeffding's inequality, we thus get
			\begin{eqnarray*}
				\mathbb{P}[|\TC_{G}(T) - \TC_{\expG}(T)| \geq \epsilon \cdot \TC_{\expG}(T)] &\leq& \exp \left( -\frac{2 \epsilon^2 \kappa^2(n) \edgeP^2_{\min}}{ \binom{n}{2} \cdot n^2} \right) \\
				&=& \exp \left( - C n \log n \right),
			\end{eqnarray*}
			for a sufficiently large constant $C$, which finishes the proof.
		\end{proof}
		
		As a corollary of the above claim, we have that w.h.p., $| \TC_G(T) - \TC_{\expG}(T) | \le o(\TC_{\expG} (T)$ holds \emph{for all} $O(2^{cn\log n})$ number of HC-trees. 
		It then follows that: 
		\begin{eqnarray*}
			\TC_G(T^*) & \geq & (1 - o(1)) \TC_{\expG}(T^*) 
			\geq  (1 - o(1)) \TC_{\expG}(\expT^*),
		\end{eqnarray*}	
		where the second inequality is due to that $\expT^*$ is an optimal tree for the expectation graph $\expG$ w.r.t. \newcost{}, and thus also w.r.t. the total-cost. Similarly,
		\begin{eqnarray*}
			\TC_G(T^*) & \leq & \TC_{G}(\expT^*) 
			\leq  (1 + o(1)) \TC_{\expG}(\expT^*), 
		\end{eqnarray*}	
		which completes the proof of Proposition \ref{prop:total-cost}.
	\end{proof}

	To bound the denominator $\BC(G)$, we cannot use Hoeffding's inequality directly, because for a triplet, whether they can form a triangle or a wedge are dependent on other triples. We use the following Janson's result from \cite{Janson_2014}.
	
	Let $X = \sum_{\alpha \in \mathcal{A}} Y_{\alpha}$ be a random variable. Let $\Gamma$ be the dependency graph for $\{Y_{\alpha}\}$; that is, each node in $\Gamma$ represents a random variable $Y_{\alpha}$, and two nodes corresponds to $Y_{\alpha}$ and $Y_{\beta}$ will have an edge connecting them if and only if $Y_{\alpha}$ and $Y_{\beta}$ are dependent. Let $\Delta(\Gamma)$ denote the maximum degree among nodes in graph $\Gamma$, and set $\Delta_1(\Gamma) = \Delta(\Gamma) + 1$. 
	
	\begin{theorem}[Theorem 2.3 of \cite{Janson_2014}] \label{thm:janson}
		Suppose that $X = \sum_{\alpha \in \mathcal{A}} Y_{\alpha}$, where $Y_{\alpha}$ is a random variable with $\alpha$ ranging over some index set. If $-c \leq Y_{\alpha} - \mathbb{E} [Y_{\alpha}] \leq b$ for some $b, c > 0$ and all $\alpha \in \mathcal{A}$, and $S = \sum_{\alpha \in \mathcal{A}} \Var Y_{\alpha}$. For $t \geq 0$,
		\[
		\mathbb{P}(X \geq \mathbb{E}[X] + t) \leq \exp \left( - \frac{8t^2}{25 \Delta_1 (\Gamma)(S + bt/3)}\right),
		\]
		and
		\[
		\mathbb{P}(X \leq \mathbb{E}[X] - t) \leq \exp \left( - \frac{8t^2}{25 \Delta_1 (\Gamma)(S + ct/3)}\right).
		\]
	\end{theorem}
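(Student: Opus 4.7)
The plan is to lift the classical Bernstein inequality for sums of independent bounded random variables to the dependent setting, using the dependency graph $\Gamma$ to decompose $X$ into independent pieces. First, I would produce a proper vertex coloring of $\Gamma$ with $\chi\le\Delta(\Gamma)+1=\Delta_1(\Gamma)$ colors (greedy coloring suffices), obtaining a partition $\mathcal{A}=A_1\sqcup\cdots\sqcup A_\chi$ of the index set. By definition of a dependency graph, for each fixed $j$ the family $\{Y_\alpha:\alpha\in A_j\}$ is mutually independent, so $X_j:=\sum_{\alpha\in A_j} Y_\alpha$ is a sum of independent bounded terms with centered range in $[-c,b]$ and total variance $S_j:=\sum_{\alpha\in A_j}\Var Y_\alpha$, satisfying $\sum_j S_j=S$.

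Next comes the standard Chernoff tilt: for any $\lambda>0$,
\[
\mathbb{P}\bigl(X\ge\mathbb{E}[X]+t\bigr)\le e^{-\lambda t}\,\mathbb{E}\bigl[e^{\lambda(X-\mathbb{E}[X])}\bigr].
\]
I would then decompose $e^{\lambda(X-\mathbb{E}[X])}=\prod_j e^{\lambda(X_j-\mathbb{E}[X_j])}$ and apply the generalized H\"older inequality with equal weights $1/\chi$, yielding
\[
\mathbb{E}\bigl[e^{\lambda(X-\mathbb{E}[X])}\bigr]\le\prod_{j=1}^\chi \mathbb{E}\bigl[e^{\lambda\chi(X_j-\mathbb{E}[X_j])}\bigr]^{1/\chi}.
\]
This trades dependence for an inflated tilt parameter $\mu=\lambda\chi$, but each factor is now a moment generating function of an independent sum.

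I would then invoke the classical Bernstein MGF bound inside each color class: for $0<\mu<3/b$,
\[
\mathbb{E}\bigl[e^{\mu(X_j-\mathbb{E}[X_j])}\bigr]\le\exp\!\Bigl(\tfrac{\mu^2 S_j/2}{1-b\mu/3}\Bigr).
\]
Setting $\mu=\lambda\chi$, taking the product over $j$, and using $\sum_j S_j=S$ together with $\chi\le\Delta_1(\Gamma)$ gives
\[
\mathbb{E}\bigl[e^{\lambda(X-\mathbb{E}[X])}\bigr]\le\exp\!\Bigl(\tfrac{\lambda^2\Delta_1(\Gamma)\,S/2}{1-b\lambda\Delta_1(\Gamma)/3}\Bigr).
\]
Choosing $\lambda=t/\bigl(\Delta_1(\Gamma)(S+bt/3)\bigr)$ then yields a Bernstein-type tail $\exp\bigl(-c_0 t^2/[\Delta_1(\Gamma)(S+bt/3)]\bigr)$ for the upper tail, and replacing each $Y_\alpha$ by $-Y_\alpha$ (which swaps the roles of $b$ and $c$) gives the lower tail.

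The main obstacle is pinning down the precise numerical constant $8/25$ rather than the cleaner $1/2$ that this naive scheme delivers: the H\"older step is lossy when one of the color classes is much larger than the others, and the Bernstein bound can be sharpened by keeping the Bennett function $h(x)=(1+x)\log(1+x)-x$ instead of using the $t^2/(2(S+bt/3))$ approximation. I would expect the right refinement to be (i) replacing proper coloring by a fractional or equitable coloring so that the tilt blowup is closer to the effective chromatic number, and (ii) optimizing $\lambda$ against the exact Bennett MGF rather than its rational relaxation. These adjustments preserve the pipeline \emph{coloring $\to$ H\"older $\to$ independent Bernstein $\to$ Chernoff optimization} and only affect the numerical constant.
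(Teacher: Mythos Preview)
The paper does not prove this theorem at all: it is quoted verbatim as Theorem~2.3 of Janson~\cite{Janson_2014} and used as a black box in the proof of Proposition~\ref{prop:base-cost}. So there is no ``paper's own proof'' to compare against.

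That said, your outline is essentially the approach Janson himself uses: greedy-color the dependency graph with at most $\Delta_1(\Gamma)$ colors, apply H\"older across color classes to reduce to independent sums, and then invoke a Bernstein/Bennett MGF bound inside each class. You correctly identify that the naive version of this pipeline gives a constant $1/2$ rather than $8/25$; in Janson's paper the sharper constant comes from a more careful optimization (in particular working with the Bennett function and a slightly different H\"older weighting), not from a fundamentally different mechanism. For the purposes of this paper the precise constant is irrelevant---only the shape $\exp\bigl(-\Theta(t^2)/[\Delta_1(\Gamma)(S+O(t))]\bigr)$ matters for Proposition~\ref{prop:base-cost}---so your sketch, even with the weaker constant, would already suffice for every downstream application here.
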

	
	\begin{proposition} \label{prop:base-cost}
		Given a $n \times n$ probability matrix $\edgeP$, assume $\edgeP_{ij} = \omega(\frac{\log n}{n})$ for all pair $(i, j)$. For a graph $G$ sampled from $\edgeP$, the following holds w.h.p. 
		\[
		|\BC(G) - \mathbb{E}[\BC(G)]| = o(\mathbb{E}[\BC(G)])
		\] 
	\end{proposition}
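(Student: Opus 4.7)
The plan is to apply Janson's inequality (Theorem~\ref{thm:janson}) to the decomposition $\BC(G) = \sum_{\alpha} Y_\alpha$, where $\alpha$ ranges over the (unordered) triples $\{i,j,k\} \subseteq [1,n]$ and $Y_\alpha = \mintricost_G(i,j,k)$. Since each pair $\{i,j\}$ independently becomes an edge with probability $\edgeP_{ij}$, two triples $\alpha$ and $\beta$ are dependent in the Janson sense iff they share a pair. A given triple shares a pair with at most $3(n-3)$ other triples, so the dependency graph $\Gamma$ satisfies $\Delta_1(\Gamma) = O(n)$. Also, $Y_\alpha \in \{0,1,2\}$, so we may take $b = c = 2$ in Theorem~\ref{thm:janson}.

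The key quantitative ingredient is a sharp bound on $S = \sum_\alpha \Var(Y_\alpha)$. Writing $p_1, p_2, p_3$ for the three pair-probabilities of triple $\alpha$, a direct computation gives
\[
\mathbb{E}[Y_\alpha] = p_1 p_2 + p_1 p_3 + p_2 p_3 - p_1 p_2 p_3, \qquad \mathbb{E}[Y_\alpha^2] = \mathbb{E}[Y_\alpha] + 2\,p_1 p_2 p_3.
\]
Because $p_i \le 1$, each of $p_1p_2, p_1p_3, p_2p_3$ is at least $p_1p_2p_3$, so $\mathbb{E}[Y_\alpha] \ge 2\,p_1p_2p_3$, which yields $\mathbb{E}[Y_\alpha^2] \le 2\,\mathbb{E}[Y_\alpha]$ and therefore $S \le 2\,\mathbb{E}[\BC(G)]$. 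Meanwhile, the lower bound $\mathbb{E}[Y_\alpha] \ge 2 p_\min^2 - p_\min^3 = \Omega(p_\min^2)$ combined with the hypothesis $p_\min = \omega(\log n / n)$ gives $\mathbb{E}[\BC(G)] = \Omega(n^3 p_\min^2) = \omega(n \log^2 n)$.

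Applying Theorem~\ref{thm:janson} with $t = \epsilon\,\mathbb{E}[\BC(G)]$ for any $\epsilon = o(1)$, the term $bt/3$ is dominated by $S$, so the exponent becomes
\[
-\frac{8t^2}{25\,\Delta_1(\Gamma)\,(S + bt/3)} \;=\; -\Omega\!\left( \frac{\epsilon^2\,\mathbb{E}[\BC(G)]^2}{n \cdot \mathbb{E}[\BC(G)]} \right) \;=\; -\Omega\!\left( \frac{\epsilon^2\,\mathbb{E}[\BC(G)]}{n} \right) \;=\; -\omega(\epsilon^2 \log^2 n).
\]
Choosing any $\epsilon \to 0$ with $\epsilon = \omega(1/\sqrt{\log n})$, e.g.\ $\epsilon = (\log n)^{-1/4}$, makes this exponent $-\omega(\log n)$. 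Applying the two-sided form of Janson's inequality then gives $|\BC(G) - \mathbb{E}[\BC(G)]| \le \epsilon\,\mathbb{E}[\BC(G)] = o(\mathbb{E}[\BC(G)])$ with high probability, as required. The main obstacle is exactly the variance bound in the previous paragraph: the naive bound $\Var(Y_\alpha) \le 4$ yields only $S = O(n^3)$, which is too loose to give concentration under the weak assumption $p_\min = \omega(\log n/n)$; the trick is to notice that $\mathbb{E}[Y_\alpha^2]$ and $\mathbb{E}[Y_\alpha]$ are both dominated by the same second-order monomials in the $p_i$'s, so the sum $S$ is controlled by $\mathbb{E}[\BC(G)]$ itself rather than by the trivial worst-case bound.
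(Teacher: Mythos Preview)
Your proposal is correct and follows essentially the same approach as the paper: decompose $\BC(G)$ as a sum over triples, bound the dependency degree by $3(n-3)$, establish the key variance bound $S \le 2\,\mathbb{E}[\BC(G)]$ (the paper writes this as $\Var Y_\alpha = 4\edgeP_\triangle + \edgeP_\wedge - (\mathbb{E}[Y_\alpha])^2 < 2\,\mathbb{E}[Y_\alpha]$, which is the same inequality), use $\edgeP_{\min} = \omega(\log n/n)$ to get $\mathbb{E}[\BC(G)] = \omega(n\log^2 n)$, and then apply Janson's inequality. The only cosmetic difference is your choice of $\epsilon = (\log n)^{-1/4}$ versus the paper's $\epsilon = c\sqrt{n\log^2 n / \mathbb{E}[\BC(G)]}$, which yields a slightly weaker (but still more than sufficient) tail bound.
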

	\noindent{\underline{\emph{Remark:}}~} (1) It is important to note that $\mathbb{E}[\BC(G)]$ may be {\it different} from $\BC(\expG)$. (2) Note that $\mathbb{E} [\BC(G)]$ can be calculated easily via linearity of expectation. 
	\begin{proof}
		We first write the random variable $\BC(G)$ as
		\[
		\BC(G) = \sum_{i \not = j \not = k \in [1, n]} Y_{i, j, k}, ~~\text{where}~Y_{i,j,k} := \mintricost(i,j,k). 
		\]
		$Y_{i, j, k}$ is a simple random variable. For simplicity, set $\alpha = (i, j, k)$. Then (1) $Y_\alpha=2$ with probability $\edgeP_{\triangle} = \edgeP_{ij} \cdot \edgeP_{jk} \cdot \edgeP_{ki}$; (2) $Y_\alpha = 1$ with probability $\edgeP_{\wedge} = \edgeP_{ij} \cdot \edgeP_{jk} \cdot (1 - \edgeP_{ki}) + \edgeP_{jk} \cdot \edgeP_{ki} \cdot (1 - \edgeP_{ij}) + \edgeP_{ki} \cdot \edgeP_{ij} \cdot (1 - \edgeP_{jk})$; and (3) otherwise (with probatility $1 - \edgeP_{\triangle} - \edgeP_{\wedge}$), $Y_{\alpha} = 0$. We can easily compute the expectation and variance of $Y_{\alpha}$:
		\[
		\mathbb{E} [Y_{\alpha}] = 2 \edgeP_{\triangle} + \edgeP_{\wedge},
		\]
		\[
		\Var Y_{\alpha} = 4 \edgeP_{\triangle} + \edgeP_{\wedge} - (\mathbb{E} [Y_{\alpha}])^2 < 2 \cdot \mathbb{E} [Y_{\alpha}].
		\]
		Also, for any triplet $\{i, j, k\}$,
		$-2 \leq Y_{i, j, k} - \mathbb{E} [Y_{i, j, k}] \leq 2$,
		and the sum of the variances satisfies: 
		\[
		S = \sum_{i \not = j \not = k \in [1, n]} \Var Y_{i, j, k} < \sum_{i \not = j \not = k \in [1, n]} (2\cdot \mathbb{E} [Y_{i,j,k}]) =  2\cdot \mathbb{E} [\BC(G)].
		\]
		With assumption that $\edgeP_{\min} = \omega(\log n / n)$, for any triplet $\{i, j, k\}$ we have $2\edgeP_{\triangle} + \edgeP_{\wedge} > \edgeP_{ij} \cdot \edgeP_{jk} = \omega(\log^2 n / n^2)$. and
		\begin{align}\label{eqn:upper-expbaseG}
		\mathbb{E} [\BC(G)] &= \sum_{i \not = j \not = k \in [1, n]} \mathbb{E} [Y_{i, j, k}] = \omega(n \log^2 n). 
		\end{align}
		For any two random variable $Y_{\alpha}$ and $Y_{\beta}$, they are dependent only when they share exactly two nodes, i.e., $|\alpha \cap \beta| = 2$. For each triplet, there will be $3n - 9$ dependent triples. Let $\Gamma$ denote the dependency graph,  then $\Delta(\Gamma) = 3n - 9$, and $\Delta_1(\Gamma) = 3n - 8$. 
		
		Set $\epsilon = c\sqrt{\frac{n \log^2 n}{\mathbb{E} [\BC(G)]}}$ for some large constant $c$; note that $\epsilon = o(1)$ due to Eqn (\ref{eqn:upper-expbaseG}). 
		Using Theorem \ref{thm:janson}, we have: 
		\begin{eqnarray*}
			\mathbb{P} ( \BC(G) \geq (1 + \epsilon)\mathbb{E} [\BC(G)]) & \leq & \exp \left( - \frac{8 \epsilon^2 (\mathbb{E}  [\BC(G)])^2}{25(3n - 8)(2\mathbb{E}[\BC(G)] + 2 \epsilon \mathbb{E} [\BC(G)]/ 3)}  \right) \\
			& \leq & \exp \left( -C \frac{\epsilon^2\cdot \mathbb{E}[\BC(G)]}{n}\right) \le n^{-C' \log n}		
		\end{eqnarray*}
		for some constant $C' > 0$. 
		One can establish the other direction in a symmetric manner. Proposition \ref{prop:base-cost} then follows. 
		
	\end{proof}

	We now finish the proof for Theorem \ref{thm:randomgraph}. Combing Propositions \ref{prop:total-cost} and \ref{prop:base-cost}, we have: 
	\begin{eqnarray*}
		\RC^*_G = \RC_G(T^*) &=& \frac{\TC_G(T^*)}{\BC(G)} 
		\le \frac{(1 + o(1)) \TC_{\expG}(\expT^*)}{(1 - o(1)) \mathbb{E}[\BC(G)]} 
		\le (1 + o(1)) \frac{\TC_{\expG}(\expT^*)}{\mathbb{E}[\BC(G)]}.
	\end{eqnarray*}
	
	The lower-bound can be established similarly. Theorem \ref{thm:randomgraph} then follows. 
	
	\subsection{Proof of Corollary \ref{coro:1}} \label{app:erdos_app}
	\begin{proof}
		Following the result of theorem \ref{thm:randomgraph}, we only need to calculate $\TC_{\expG}(\expT^*)$ and $\mathbb{E} [\BC(G)]$, where $\expT^*$ is the optimal tree for the expectation graph $\expG$ w.r.t. the total-cost function (and thus also w.r.t. the \newcost{} function). 
		Since the expectation graph $\expG$ is a complete graph where all edge weights are $p$, 
		we have that
		\[
		\TC_{\expG}(\expT^*) = p \cdot 2\cdot \binom{n}{3}. 
		\]
		On the other hand, let $Y_{i,j,k} = \mintricost(i,j,k)$. $\mathbb{E} [Y_{i,j,k}]= 2 \cdot p^3 + 3\cdot p^2(1-p)$. Hence 
		\[
		\mathbb{E} [\BC(G)] = \binom{n}{3} 2p^3 + \binom{n}{3} 3p^2(1-p).
		\]
		It then follows that 
		\begin{eqnarray*}
			\RC^*_G &=& (1 + o(1)) \frac{2p \binom{n}{3}}{\binom{n}{3} 2p^3 + \binom{n}{3} 3p^2(1-p)} \\
			&=& (1 + o(1))\frac{2p}{ 2p^3 + 3p^2(1-p)} \\
			&=& (1 + o(1))\frac{2}{3p - p^2}
		\end{eqnarray*}
	\end{proof}
	
	\subsection{Proof of Corollary \ref{coro:2}} \label{app:bisection_app}
	\begin{proof}
		Let $\expT^*$ be the  optimal tree for the expectation graph $\expG$ w.r.t. the total-cost function (and thus also w.r.t. the \newcost{} function). Let $V_1 = \{v_1, \ldots, v_{\frac{n}{2}}\}$ and $V_2 = \{v_{\frac{n}{2}+1}, \ldots, v_n\}$. Call $V_1$ and $V_2$ clusters. 
		The expectation-graph $\expG$ is the complete graph where the edge weight between two nodes coming from the same clusters is $p$ while that for a crossing edge is $q$. 
		Hence it is easy to show that the optimal tree $\expT^*$ will first split $V$ into $V_1$ and $V_2$ at the top-most level, and then is an arbitrary HC-tree for each cluster. Hence 
		\[
		\TC_{\expG}(\expT^*) = 2 p \cdot 2 \cdot \binom{n/2}{3} + 4q \cdot {\binom{n/2}{2}} \cdot n/2. 
		\]
		On the other hand, 
		\[
		\mathbb{E} [\BC(G)] = 2 \cdot {\binom{n/2}{3}} \cdot [2p^3 + 3p^2(1-p)] + 2 \cdot {\binom{n/2}{2}} \cdot n/2 \cdot [2pq^2 + q^2(1-p) + 2pq(1-q)].
		\]
		
		It then follows that
		\[
		\RC^*_G = (1 + o(1)) \frac{2p + 6q}{3p^2 + 3q^2 + 6pq - p^3 - 3pq^2}, 
		\]
		which is the main statement of the corollary. 
		
		Now let $E$ denote the fraction in the above bound. 
		We will take partial derivatives to study how changes of $p$ or $q$ will affect the value of $E$. For simplicity, let $f$, $g$ denote the numerator and denominator of the fraction $E$, i.e., $f = 2p + 6q$, and $g = 3p^2 + 3q^2 + 6pq - p^3 - 3pq^2$. First, for $p$, 
		\begin{eqnarray*}
			\frac{\partial E}{ \partial p} &=& \frac{1}{g^2} [2g - f \cdot (6p + 6q - 3p^2 - 3q^2)] \\
			&=& \frac{1}{g^2} (6p^2 + 6q^2 + 12pq - 2p^3 - 6pq^2 - 12p^2 - 12pq + 6p^3 + 6pq^2 -36pq - 36q^2 + 18p^2q + 18q^3) \\
			&=&\frac{1}{g^2}(-6p^2 - 30q^2 - 36pq + 4p^3 + 18q^3 + 18p^2q) \\
			&<& 0.
		\end{eqnarray*}
		The last step follows the fact that $g > 0$, $p, q < 1$, and thus $p^2 > p^3, q^2 > q^3, $ and $pq > p^2$. 
		Hence as $p$ increases, the bound on $\rho_G^*$ decreases. 
		
		Now we consider the partial derivative w.r.t. $q$. 
		\begin{eqnarray*}
			\frac{\partial E}{\partial q} &=& \frac{1}{g^2} [6g - f \cdot (6p + 6q - 6pq)]\\
			&=& \frac{1}{g^2} (18p^2 + 18q^2 + 36pq -6p^3 - 18pq^2 - 12pq - 12p^2 + 12p^2q - 36pq - 36q^2 + 36pq^2) \\
			&=& \frac{1}{g^2} (6p^2 -18q^2 - 12pq - 6p^3 + 12p^2q + 18pq^2).
		\end{eqnarray*}
		Consider $p$ as a fixed constant, and calculate the zero points of parabola $-18(1-p)q^2 - 12p(1-p) q + 6p^2(1-p)$. It is not hard to see that its zeros points are $-p$ and $\frac{p}{3}$, which completes the proof.
	\end{proof}

\end{document}